\newcommand{\Order}{\mathrm{O}}
\newcommand{\defeq}{\stackrel{\mbox{\scriptsize{\normalfont\rmfamily def.}}}{=}}
\renewcommand{\Vec}[1]{\mbox{\boldmath $#1$}}
\renewcommand\Pr{\mathop{\mathbf{Pr}}}
\DeclareMathOperator{\E}{\mathbf{E}}
\def\defeq{\mathrel{\mathop:}=}
\newcommand{\Tbal}{T_\mathrm{bal}}
\newcommand{\aL}{\Gamma}
\newcommand{\tL}{K}
\newcommand{\Pos}{\mathsf{P}}
\newcommand{\Hei}{\mathsf{H}}
\newcommand{\ave}{\mu}
\newcommand{\rave}{\lceil \mu \rfloor}
\newcommand{\bL}{\overline{\Gamma}}
\newcommand{\btL}{\overline{K}}
\newcommand{\bPos}{\overline{\mathsf{P}}}
\newcommand{\bHei}{\overline{\mathsf{H}}}
\newcommand{\bave}{\overline{\mu}}
\newcommand{\brave}{\lceil \overline{\mu} \rfloor}
\renewcommand{\deg}{\mathrm{d}}
\newcommand{\degU}{D}
\newcommand{\rate}{r}
\newcommand{\fair}{F}
\newcommand{\RG}{\theta}
\newcommand{\bmG}{\boldsymbol{G}}
\newcommand{\Neb}{\mathsf{N}}
\newtheorem{theorem}{Theorem}[section]
\newtheorem{lemma}[theorem]{Lemma}
\newtheorem{proposition}[theorem]{Proposition}
\newtheorem{observation}[theorem]{Observation}
\title{
 An Analysis of Load-Balancing Algorithms\\
 on Edge-Markovian Evolving Graphs
}
\author{
  Takeharu Shiraga\footnote{Chuo University, Tokyo, Japan; e-mail: shiraga.076@g.chuo-u.ac.jp} \and
  Shuji Kijima\footnote{Shiga University, Hikone, Japan; e-mail: shuji-kijima@biwako.shiga-u.ac.jp}
}
\begin{document}
\maketitle
\begin{abstract}
Analysis of algorithms on time-varying networks (often called {\em evolving graphs})
  is a modern challenge in theoretical computer science.
 The \emph{edge-Markovian} is a relatively simple and comprehensive model of evolving graphs: 
  every pair of vertices which is not a current edge 
    independently becomes an edge with probability $p$  at each time-step, as well as
  every edge disappears with probability $q$. 
 Clearly, 
  the edge-Markovian graph changes its shape {\em depending} on the current shape, and 
   the dependency refuses some useful techniques for an independent sequence of random graphs 
    which often behaves similarly to a static random graph. 
 It motivates this paper 
   to develop a new technique for analysis of algorithms on edge-Markovian evolving graphs. 

 Specifically speaking, 
   this paper is concerned with \emph{load-balancing}, 
  which is a popular subject in distributed computing, and   
   we analyze the so-called \emph{random matching algorithms}, 
   which is a standard scheme for load-balancing. 
 We prove that major random matching algorithms 
   achieve nearly optimal load balance 
   in  $\Order(\rate \log (\Delta n))$ steps on edge-Markovian evolving graphs, 
   where $\rate \defeq \max\{p/(1-q), (1-q)/p\}$, 
     $n$ is the number of vertices (i.e., processors) and 
     $\Delta$ denotes the initial gap of loads unbalance. 
 We remark that the independent sequences of random graphs correspond to $\rate=1$. 
 To avoid the difficulty of an analysis 
   caused by a complex correlation with the history of an execution,  
  we develop a simple proof technique based on {\em history-independent} bounds. 
 As far as we know, this is the first theoretical analysis of
  load-balancing on randomly evolving graphs, 
  not only for the edge-Markovian but also for the independent sequences of random graphs. 


\

\noindent
{\bf Keywords: }
load-balancing, randomized algorithms, randomly evolving graphs.
\end{abstract}

\section{Introduction}
In the real world, 
  connections or relationships between individuals continue to change time by time, 
  e.g., social relationships, peer-to-peer networks, wireless devices, etc. 
Such situations are naturally modeled by a graph changing its shape over time, called \emph{dynamic} graph. 
Analysis of algorithms on dynamic graphs, including both adversarial or stochastic changes, 
 is a modern challenge in theoretical computer science, and 
 it is widely studied in this decade~\cite{KO11,Augustine16,MS18}.

 One of the simplest models of dynamic graphs is 
  the \emph{dynamic Erd\H{o}s-R{\'e}nyi random graph}: 
  it is a time-series of Erd\H{o}s-R{\'e}nyi random graphs 
   $G_0, G_1, G_2, \ldots$, 
   where the random graphs are mutually \emph{independent}. 
 Theoretical analyses of 
    processes related to the spreading of infection or information 
  on the dynamic Erd\H{o}s-R{\'e}nyi random graph 
 have been studied 
  to investigate the relationship between the connectivity threshold $p$ and the speed of the spreading, 
 for instance, 
 SIR (susceptible-infected-removed) model \cite{DGM08},
 random walks \cite{AKL18}, 
 and radio broadcasting~\cite{EG06,CMPS09}. 

 The dynamic Erd\H{o}s-R{\'e}nyi random graph is simple enough to mathematically analyze, 
   while it might be a strong assumption for a model of real networks that 
   the graph changes its shape in the next time step completely different from the current one. 
 Clementi et al.~\cite{CMMPS10} introduced 
  the \emph{edge-Markovian evolving graph} as a more general model of dynamic graphs, 
   which includes the dynamic Erd\H{o}s-R{\'e}nyi random graph. 
 Precisely, 
   it is a random sequence of graphs $G_0, G_1, G_2, \ldots$ with the same vertex set $V$, 
   and in an update from $G_t$ to $G_{t+1}$, 
  each (existing) edge $e\in E(G_t)$ independently disappears with probability $q$, 
  and each (not-existing) edge $e\in \binom{V}{2}\setminus E(G_t)$ independently appears with probability $p$.
Note that it is identical to the dynamic Erd\H{o}s-R{\'e}nyi random graph when $q = 1-p$.
Recently, there have been many works on the model concerning
fundamental processes, e.g., 
flooding~\cite{CMMPS10,BCF11,ClementiDC15,CMPS11}, 
rumor spreading~\cite{CCDFPS16}, 
and random walk~\cite{LMS18,CSZ20}.

In this paper, we are concerned with the \emph{load-balancing problem}.
Suppose that each vertex $v$ initially holds $L(v)\in \mathbb{N}$ tokens.
We aim to reallocate the tokens as equally as possible under the assumption that each vertex is only allowed to exchange tokens with its neighbors.
The main interest of the study is the number of time steps required to reach the almost balanced configuration.
There are many studies concerned with the load-balancing paradigm.
The load-balancing problem naturally models the coordination of tasks in distributed processor networks and parallel machines \cite{Cybenko89}.
This problem is also referred to as the distributed averaging problem, which arises in many applications, e.g., coordination of autonomous agents, estimation, and data fusion, on distributed networks such as sensors, wireless ad-hoc, and peer-to-peer networks~\cite{BGPS06}.
In computational physics, load-balancing algorithms appear to simulate large and complicated correlation systems such as molecular dynamics~\cite{BBK91} and electrostatic plasma~\cite{FLD93}.

The load-balancing problem has been well studied on \emph{static} graphs.
Particularly, there are many theoretical studies for a type of algorithms called \emph{random matching algorithms}~\cite{GM96,BGPS06,FS09,SS12,CS17}. 
In a random matching algorithm, at each discrete time step, we generate a random matching $M\subseteq E$ with some property.
Then, for each matching edge $\{v,u\}\in M$, we reallocate tokens on $v$ and $u$ by the \emph{random rounding}: $(L(v), L(u))$ to $\bigl(\bigl\lceil \frac{L(v)+L(u)}{2}\bigr \rceil,\bigl\lfloor \frac{L(v)+L(u)}{2}\bigr \rfloor\bigr)$ or $\bigl(\bigl\lfloor \frac{L(v)+L(u)}{2}\bigr \rfloor,\bigl\lceil \frac{L(v)+L(u)}{2}\bigr \rceil\bigr)$ with probability $1/2$ each.
For example, the \emph{LR algorithm}~\cite{GM96}
is known as a specific algorithm to generate a random matching locally.
For such algorithm, several works~\cite{FS09,SS12,CS17} studied the \emph{discrepancy}, which is the maximum difference of tokens $\max_{v\in V}L(v)-\min_{v\in V}L(v)$.
For example, Sauerwald and Sun~\cite{SS12} showed that a random matching algorithm reaches the configuration with constant discrepancy on any connected regular graph.
Formally, let $\Delta$ be the initial discrepancy and $\lambda$ be the second largest eigenvalue of the diffusion matrix\footnote{The diffusion matrix $P$ is defined by $P(v,u)=1/(2d_{\max}(G))$ if $\{v,u\}\in E(G)$, $P(v,v)=1-\deg(G,v)/(2d_{\max}(G))$, and $P(v,u)=0$ otherwise.}.
They showed that, for any connected regular graph, the discrepancy is at most some constant w.h.p.~after $O\bigl(\frac{\log (\Delta n)}{1-\lambda}\bigr)$ steps.

\subsection{Results}
We study the performance of random matching algorithms on edge-Markovian evolving graphs, although all previous works are concerned with static graphs, as far as we know. 
Let $\aL_t=(\aL_t(v))_{v\in V}\in \mathbb{N}^V$ denote the configuration of tokens at time $t\geq 0$ and let $\lceil \cdot \rfloor$ denote the rounding operator to the nearest integer\footnote{$\lceil x \rfloor\defeq \lceil x-1/2\rceil$ for $x\in \mathbb{R}$}.
Let $\Delta\defeq \max_{v\in V}\aL_0(v)-\min_{v\in V}\aL_0(v)$ be the initial discrepancy.
We study the following \emph{balancing time} $\Tbal=\Tbal(\aL_0)$ as a measure of balancing:
\begin{align}
\Tbal\defeq \min\bigl\{t\geq 0: \aL_t(v)\in \{\rave-1, \rave, \rave+1\}\textrm{ for all $v\in V$}\bigr\}. 
\end{align}
%
We show the following theorem for random matching algorithms including the LR algorithm~\cite{GM96}. 
The formal condition required to random matching algorithms is in \cref{sec:definitions}.
\begin{theorem}
\label{thm:RM_main}
Consider a random matching algorithm on an edge-Markovian evolving graph of $\max\{p,1-q\}=\Omega(1/n)$.
For any initial configuration, $\Tbal=\Order\left(\rate \log (\Delta n)\right)$ w.h.p.
where $\rate \defeq \frac{\max\{p,1-q\}}{\min\{p,1-q\}} = \max\left\{\frac{1-q}{p},\frac{p}{1-q}\right\}$.
\end{theorem}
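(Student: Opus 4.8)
The plan is to drive the ordinary quadratic potential all the way down to the balanced window, and to use the edge-Markovian structure only through a \emph{history-free} lower bound on edge probabilities. That bound is the sole place the graph evolution enters: condition on the whole execution history $\mathcal H_t$ (in particular on $G_t$ and $\aL_t$); for every pair $\{u,v\}$ the conditional probability that $\{u,v\}\in E(G_{t+1})$ equals $1-q$ or $p$ according as $\{u,v\}$ is or is not an edge of $G_t$, so in either case it lies in $[p_{\min},p_{\max}]$ with $p_{\min}\defeq\min\{p,1-q\}$ and $p_{\max}\defeq\max\{p,1-q\}$, and these events are mutually independent across pairwise-disjoint pairs. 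In particular $\deg_{G_{t+1}}(v)$ is stochastically dominated by $\mathrm{Bin}(n-1,p_{\max})$, so Chernoff plus a union bound over $v$ gives $d_{\max}(G_{t+1})=\Order(p_{\max}n+\log n)$ with probability at least $1/2$ (indeed $1-n^{-\Omega(1)}$), conditionally on $\mathcal H_t$. Everything after this can forget all correlations with the past.

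Let $\Phi_t\defeq\sum_{v\in V}(\aL_t(v)-\ave)^2$ and let $\Phi_{\min}$ be the value of $\Phi$ on a most balanced integer configuration with the same token total; a short computation gives $\Phi_t-\Phi_{\min}=\sum_{v}(\aL_t(v)-\rave)^2-|n\ave-n\rave|\in\mathbb Z_{\ge0}$, with $\Phi_t-\Phi_{\min}=0$ forcing $\aL_t(v)\in\{\rave-1,\rave,\rave+1\}$ for all $v$ (hence $\Tbal\le t$), while if $\aL_t$ is not balanced then some vertex has $|\aL_t(v)-\rave|\ge2$ and so $\Phi_t-\Phi_{\min}\ge2$. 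The core estimate is that for every unbalanced $\aL_t$ and every history,
\begin{align}
\E[\,\Phi_{t+1}-\Phi_{\min}\mid\mathcal H_t\,]\ \le\ \bigl(1-\Omega(1/\rate)\bigr)\,(\Phi_t-\Phi_{\min}).
\end{align}
I would prove this by conditioning on $G_{t+1}$ and on the good-degree event: a matched pair $\{u,v\}$ with $|\aL_t(u)-\aL_t(v)|\ge2$ lowers $\Phi$ by at least $\tfrac14(\aL_t(u)-\aL_t(v))^2$ (the rounding loss $\tfrac12$ is absorbed because the squared gap is $\ge4$), whereas pairs differing by at most $1$ leave the two loads unchanged as a multiset; the defining property of a random matching algorithm (\cref{sec:definitions}) puts each present edge into $M$ with probability $\Omega(1/d_{\max}(G_{t+1}))$; and each pair is present with probability $\ge p_{\min}$. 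Summing, the expected decrease of $\Phi$ is at least $\Omega\!\bigl(p_{\min}/(p_{\max}n+\log n)\bigr)\sum_{\{u,v\}:\,|\aL_t(u)-\aL_t(v)|\ge2}(\aL_t(u)-\aL_t(v))^2$, and two facts close it: a two-line case split on whether $p_{\max}n$ exceeds $\log n$ (using the hypothesis $p_{\max}n=\Omega(1)$) gives $p_{\min}n/(p_{\max}n+\log n)=\Omega(1/\rate)$, and the combinatorial inequality
\begin{align}
\sum_{\{u,v\}:\,|\aL_t(u)-\aL_t(v)|\ge2}(\aL_t(u)-\aL_t(v))^2\ \ge\ \Omega(n)\cdot(\Phi_t-\Phi_{\min})
\end{align}
holds for every (genuine) unbalanced configuration. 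On the complement of the good-degree event (probability $\le1/2$) I would only use that $\Phi$ is non-increasing under matching-and-rounding, which keeps the bound in the stated form.

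For the wrap-up, set $Y_t\defeq(\Phi_t-\Phi_{\min})\,\1[\Tbal>t]$. By the drift at every step where we are still unbalanced, and $\Phi$ non-increasing otherwise, $\E[Y_{t+1}\mid\mathcal H_t]\le(1-\Omega(1/\rate))\,Y_t$; since $\Phi_0-\Phi_{\min}\le\Phi_0\le n\Delta^2$, taking $T=\Theta(\rate\log(\Delta n))$ with a large enough constant gives $\E[Y_T]\le n^{-\Omega(1)}$. On $\{\Tbal>T\}$ the time-$T$ configuration is unbalanced, so $Y_T=\Phi_T-\Phi_{\min}\ge1$; hence $\Pr[\Tbal>T]\le\Pr[Y_T\ge1]\le\E[Y_T]\le n^{-\Omega(1)}$, which is the theorem.

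The main obstacle is making the single-step drift uniform right down to the balanced window. Random rounding makes \emph{no} progress on a matched pair whose loads differ by exactly $1$, so ``$\Phi$ shrinks by a constant factor'' fails near balance; the argument instead rests on the combinatorial inequality above --- i.e.\ on the fact that an unbalanced configuration (with $\rave$ forced to be the nearest integer to the average, so the net excess is at most $n/2$ in absolute value) must carry a vertex at distance $\ge2$ from $\rave$, which pushes an $\Omega(1)$-fraction of the residual squared-gap mass onto pairs the rounding actually balances. The other delicate point is pinning down the exact $\rate$: the evolving graph simultaneously helps (every pair is present with probability $\ge p_{\min}$) and hurts (degrees, hence per-edge matching probabilities, scale like $p_{\max}$), and the trade-off is precisely $\rate=p_{\max}/p_{\min}$.
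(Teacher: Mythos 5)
Your history-free bound on the one-step edge probabilities (each pair is present in $E_{t+1}$ with conditional probability in $[\min\{p,1-q\},\max\{p,1-q\}]$ regardless of the history) is exactly the mechanism the paper isolates in its key lemma, so that part of the plan is sound. But the core of your argument --- the multiplicative drift $\E[\Phi_{t+1}-\Phi_{\min}\mid\mathcal H_t]\le(1-\Omega(1/\rate))(\Phi_t-\Phi_{\min})$ for every unbalanced configuration --- is false, because the combinatorial inequality it rests on is false. Take $m=0$ (so $\ave=\rave$) and the configuration with one vertex at $\rave+2$, $k$ vertices at $\rave+1$, $k+2$ vertices at $\rave-1$, and $n-2k-3$ vertices at $\rave$, with $k=\lceil\sqrt n\rceil$. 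It is unbalanced and $\Phi_t-\Phi_{\min}=2k+6=\Theta(\sqrt n)$, yet
\begin{align*}
\sum_{\{u,v\}:\,|\aL_t(u)-\aL_t(v)|\ge2}(\aL_t(u)-\aL_t(v))^2
= 4(n-2k-3)+9(k+2)+4k(k+2)=\Theta(n),
\end{align*}
which is $\Theta(\sqrt n)$ times smaller than $\Omega(n)\cdot(\Phi_t-\Phi_{\min})=\Theta(n^{3/2})$. This is not merely a defect of the proof: since pairs with gap $\le1$ make no progress and each gap-$\ge2$ pair is matched with probability $O(1/n)$ (say for the LR algorithm on a dense instance with $\rate=O(1)$), the true one-step expected decrease of $\Phi$ at this configuration is $O(1)$ while your drift claim demands $\Omega(\sqrt n)$. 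The obstruction is exactly the one you flag at the end --- $\pm1$ fluctuations load up the quadratic potential but cannot be dissipated by random rounding --- and it is why the paper abandons global potentials in favor of the token-height argument of [BFKK19]: each token's height is monotone non-increasing and (relative to any target level $x$) halves whenever its vertex is matched to a vertex with load $\le x$, so one only needs each token to experience $O(\log\Delta)$ such events, each of which the history-free lemma guarantees with probability $\Omega(\fair/\rate)$ per step.

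A secondary, fixable issue: controlling $d_{\max}(G_{t+1})$ by Chernoff plus a union bound over all vertices forces the bound $O(p_{\max}n+\log n)$, and in the regime $p_{\max}n=\Theta(1)$ (which the theorem covers) this degrades your per-edge matching probability to $\Omega(p_{\min}n/\log n)$ rather than $\Omega(1/\rate)\cdot\Omega(1)$, costing a $\log n$ factor in $\Tbal$. The paper avoids this by bounding, for each fixed edge $\{u,v\}$ separately, the probability that \emph{both endpoints} have degree at most $2n\max\{p,1-q\}+1$; this holds with constant probability and no union bound over $V$ is needed.
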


\Cref{thm:RM_main} gives a simple upper bound of the balancing time for a wide range of parameters $p,q$ of edge-Markovian evolving graphs.
%
A simple setting for $r$ to be constant is the case that both $p$ and $q$ are constants. 
%
Another condition for $r$ to be a constant is the case of $p\sim 1-q$.
This case includes dynamic Erd\H{o}s-R{\'e}nyi random graphs (the case of $p=1-q$). 
In this situation, even if $p=(1-\varepsilon)/n$ for a constant $0<\varepsilon<1$, we can apply \cref{thm:RM_main} and obtain $\Tbal=O(\log(\Delta n))$.
Although each $G_t$ does not contain any giant component w.h.p.~in this case, it is identical with the known upper bound of the complete graph~\cite{SS12}. 

Berenbrink et al.~\cite{BFKK19} investigated the balancing time for the \emph{simple load-balancing algorithm}.
In this algorithm, at each time step, an edge $\{v,u\}$ is randomly picked and tokens on $v$ and $u$ are reallocated by the random rounding (see \cref{sec:simple} for the formal definition). They showed that, on the complete graph $K_n$, $\Tbal=O(n\log(\Delta n))$ w.h.p.~for this algorithm. 
In this paper, we also give the following result generalizing it.
\begin{theorem}
\label{thm:SLB_main}
Consider the simple load-balancing algorithm on an edge-Markovian evolving graph of $\max\{p,1-q\}=\Omega(1/n)$.
Let $\rate \defeq \frac{\max\{p,1-q\}}{\min\{p,1-q\}}$.
Then, for any initial configuration, $\Tbal=\Order\left(\rate n\log (\Delta n)\right)$ w.h.p.
\end{theorem}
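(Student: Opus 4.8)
The plan is to track the quadratic potential of the \emph{idealized}, rounding-free version of the process and to show that it contracts by a constant factor every $\Order(\rate n)$ steps, uniformly in the execution history. Let $\ave=\frac1n\sum_v\aL_0(v)$ be the (invariant) average load, let $P_t$ denote the averaging matrix of the single edge picked at step $t$ (the identity if $G_t$ has no edge), and set $\aL^{\mathrm{id}}_0=\aL_0$, $\aL^{\mathrm{id}}_{t+1}=P_t\aL^{\mathrm{id}}_t$ --- this is what the algorithm would compute if every randomized rounding were replaced by the exact fractional average. Put $\Phi_t\defeq\sum_v\bigl(\aL^{\mathrm{id}}_t(v)-\ave\bigr)^2$, so $\Phi_0\le n\Delta^2$ and $\Phi_t$ is non-increasing along every trajectory. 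The structure mirrors the analysis of random matching algorithms: a simple load-balancing step is a random matching step whose matching is a single edge, so it drives $\Phi_t$ down a factor $\Theta(n)$ more slowly than a genuine matching would, which is exactly the extra $n$ in the bound.

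Conditioned on the current idealized configuration and on the current graph $G_t$, one step reduces the potential by exactly
\[
\E\bigl[\Phi_t-\Phi_{t+1}\mid\aL^{\mathrm{id}}_t,G_t\bigr]=\frac1{|E(G_t)|}\sum_{\{v,u\}\in E(G_t)}\frac{\bigl(\aL^{\mathrm{id}}_t(v)-\aL^{\mathrm{id}}_t(u)\bigr)^2}{2},
\]
and the obstacle is that both $|E(G_t)|$ and the set $E(G_t)$ are entangled with the entire history through the Markovian evolution. To decouple, take one further expectation over the transition $G_{t-1}\to G_t$, conditioning on the history $\mathcal H_{t-1}$ up to and including the action at step $t-1$ (so $\aL^{\mathrm{id}}_t$ is already determined while $G_t$ is not). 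Whatever $\mathcal H_{t-1}$ is: given $G_{t-1}$, each pair $e$ lies in $E(G_t)$ independently with probability exactly $1-q$ or $p$, hence with probability in $\bigl[\min\{p,1-q\},\max\{p,1-q\}\bigr]$, and $|E(G_t)|$ is a sum of conditionally independent indicators of mean at most $\max\{p,1-q\}\binom{n}{2}$. The updates being mutually independent given $G_{t-1}$, conditioning on $e\in E(G_t)$ leaves $|E(G_t)|-1$ a sum of conditionally independent indicators of mean at most $\max\{p,1-q\}\binom{n}{2}$, so Jensen's inequality for the convex decreasing map $x\mapsto 1/(1+x)$ gives $\E\bigl[1/|E(G_t)|\mid e\in E(G_t),G_{t-1}\bigr]\ge 1/\bigl(1+\max\{p,1-q\}\binom{n}{2}\bigr)$. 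Multiplying by $\Pr[e\in E(G_t)\mid G_{t-1}]\ge\min\{p,1-q\}$, weighting by $\bigl(\aL^{\mathrm{id}}_t(v)-\aL^{\mathrm{id}}_t(u)\bigr)^2$, summing over $e=\{v,u\}$, and using $\sum_{\{v,u\}}\bigl(\aL^{\mathrm{id}}_t(v)-\aL^{\mathrm{id}}_t(u)\bigr)^2=n\Phi_t$ yields the history-independent drift
\[
\E[\Phi_t-\Phi_{t+1}\mid\mathcal H_{t-1}]\ \ge\ \frac{\min\{p,1-q\}}{2\bigl(1+\max\{p,1-q\}\binom{n}{2}\bigr)}\,n\Phi_t\ =\ \Omega\!\left(\frac{\Phi_t}{\rate n}\right),
\]
the last step using $\max\{p,1-q\}\binom{n}{2}=\Omega(n)$; equivalently $\E[\Phi_{t+1}\mid\mathcal H_{t-1}]\le\bigl(1-\Omega(1/(\rate n))\bigr)\Phi_t$.

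Iterating, $\E[\Phi_t]\le\e^{-\Omega(t/(\rate n))}n\Delta^2$, so for a suitable $T=\Order(\rate n\log(\Delta n))$ we get $\E[\Phi_T]\le n^{-\Order(1)}$, hence $\Phi_T<1$ with high probability by Markov's inequality, and so $\aL^{\mathrm{id}}_T(v)$ is within $1$ of $\ave$ at every vertex. It remains to pass from the idealized to the actual process: writing $\aL_T-\aL^{\mathrm{id}}_T=\sum_{s<T}(P_{T-1}\cdots P_{s+1})\rho_s$, where $\rho_s$ is supported on the edge chosen at step $s$ with independent mean-zero entries of magnitude $\le\tfrac12$, a martingale concentration argument --- the same one used for random matching algorithms, exploiting that the propagators contract the injected noise at the history-independent rate above --- bounds $\|\aL_T-\aL^{\mathrm{id}}_T\|_\infty=\Order(1)$ with high probability, whence $\aL_T(v)\in\{\rave-1,\rave,\rave+1\}$ for all $v$, i.e.\ $\Tbal=\Order(\rate n\log(\Delta n))$ with high probability. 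I expect the decisive step to be precisely the decoupling of the one-step drift from the history-entangled random graph: handling the random factor $1/|E(G_t)|$ via Jensen's inequality together with the two-sided bound $\min\{p,1-q\}\le\Pr[e\in E(G_t)\mid\,\cdot\,]\le\max\{p,1-q\}$ is what makes the estimate uniform in the history, and it is exactly here that the ratio $\rate$ enters; the remaining ingredients are adaptations of the random matching analysis.
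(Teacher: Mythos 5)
Your route is genuinely different from the paper's: the paper deduces \cref{thm:SLB_main} from \cref{thm:RM_general} by checking that the simple load-balancing rule satisfies the $\fair$-fair condition with $\fair=1/n$ (cf.\ \cref{sec:simple}) and then runs an integer-valued, token-based height-halving argument; you instead track a quadratic potential of a rounding-free surrogate process. Your history-decoupling of the one-step drift (the two-sided bound $\min\{p,1-q\}\le\Pr[e\in E_t\mid G_{t-1}]\le\max\{p,1-q\}$ plus Jensen on $1/|E(G_t)|$) is a legitimate analogue of the paper's \cref{lem:prob_RM} and that part is essentially sound. One caveat there: the drift identity $\frac{1}{|E(G_t)|}\sum_{e}(\cdot)$ describes the ``pick an edge of $E_t$ uniformly at random'' variant, whereas the algorithm the theorem is about picks a uniform vertex and then a uniform incident edge, so the weight on $\{u,v\}$ is $\frac1n\bigl(\frac{1}{\deg(u)}+\frac{1}{\deg(v)}\bigr)$, not $1/|E_t|$; this is repairable with the same degree-concentration estimate the paper uses, but as written you analyze a different algorithm.

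The genuine gap is the last step, in two respects. First, the bound $\|\aL_T-\aL_T^{\mathrm{id}}\|_\infty=\Order(1)$ w.h.p.\ is asserted, not proved, and it is precisely the hard part of the discrete load-balancing literature: you are injecting $T=\Theta(\rate n\log(\Delta n))$ mean-zero rounding errors of magnitude up to $1/2$ and pushing them through the random propagators $P_{T-1}\cdots P_{s+1}$, and ``the same argument as for random matching algorithms'' yields, even on static graphs, only $\Order\bigl(\sqrt{\log^3 n/(1-\lambda)}\bigr)$ in general (Friedrich--Sauerwald) and $\Order(1)$ only for regular graphs via a much more delicate control of $\sum_s\|(P_{T-1}\cdots P_{s+1})e_v\|_2^2$ (Sauerwald--Sun). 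No such propagator estimate is available off the shelf for an edge-Markovian sequence in which a single edge is averaged per step; establishing it uniformly over histories would be a new lemma of comparable difficulty to the whole proof, not an adaptation. Second, even granting $\|\aL_T-\aL_T^{\mathrm{id}}\|_\infty\le C$ for some constant $C$, the conclusion $\aL_T(v)\in\{\rave-1,\rave,\rave+1\}$ does not follow unless $C\le 1$ (and in fact the idealized process itself already contributes deviation up to $1$), so what you would obtain is ``constant discrepancy after $T$ steps,'' which is weaker than a bound on $\Tbal$ as the paper defines it. The paper's token-based analysis sidesteps both issues by never introducing a continuous surrogate: the heights are integers, monotone nonincreasing, and are shown to reach $\rave+1$ exactly, so there is no accumulated rounding term to control.
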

%

%

Our analysis is quite simple,
while analyses of load-balancing algorithms or dynamic graphs, including the edge-Markovian evolving graph, often require advanced mathematics about transition matrices~\cite{FS09,SS12,Saloff-Coste09,Saloff-Coste11,CSZ20}.
Our proof technique is based on token-based analysis for static complete graphs~\cite{BFKK19}: 
Suppose $\tL=\sum_{v\in V}\aL_t(v)$ tokens have distinct labels and each stacked token on a vertex $v\in V$ is allocated a height in $1,2,\ldots, \aL_t(v)$. 
In \cite{BFKK19}, the authors proposed a movement rule of tokens (called the skip mode) corresponding to an update of a configuration, where every token's height does not increase. 
Furthermore, they guarantee that any token's height sufficiently decreases w.h.p.~on $K_n$.
%
We deal with this technique more carefully to apply it to the edge-Markovian evolving graph. 
In particular, we take care of the imaginary tokens called \emph{inverted tokens} to discuss the minimum height $\min_{v\in V}\aL_t(v)$ and $\max_{v\in V}\aL_t(v)$ together (\cref{sec:analytic_framework}).
It enables us to provide a framework for analyzing the balancing time only using token-based analysis (\cref{thm:RM_general}).
Our main theorem is derived from the framework and a careful estimation of conditional probabilities concerning a random matching on the edge-Markovian evolving graph (\cref{lem:prob_RM}).

\subsection{Related works}
Several early works~\cite{Cybenko89,SS94,GM96,MGS98} consider the load-balancing problem with \emph{continuous} load ($L(v)\in \mathbb{R}$). In other words, each node $v\in V$ does not need any rounding but can exchange the ideal amount of load with its neighbors, e.g., $L(v)/2\in \mathbb{R}$. 
In this setting, the propagation of the load on a graph is highly related to the probability distribution of Markov chains. 
For example, the time taken for some balancing models to reach a constant discrepancy have been shown using the second largest eigenvalue or the graph conductance \cite{Cybenko89,SS94}.
Compared to the continuous case, the load-balancing problem with discrete tokens ($L(v)\in \mathbb{N}$) is much harder to analyze due to the rounding errors caused in each step and each vertex.
Throughout the paper, we consider the load-balancing problem with discrete tokens case.

\emph{Diffusion-based algorithms} have been also well studied for the load-balancing problem
\cite{RSW98,FGS12,AB13,BCFFS15,BKKMU19}.
Roughly speaking, in a diffusion-based algorithm on a $d$-regular graph, every vertex $v\in V$ sends $\lfloor L(v)/d\rfloor$ tokens to its all neighbors at each time step.
There are many works on the discrepancy of diffusion-based algorithms. 
For example, Rabani et al.~\cite{RSW98} showed that, on any $d$-regular graphs, the discrepancy of a diffusion-based algorithm using the rounding down is at most $O\bigl(\frac{d\log n}{1-\lambda}\bigr)$ after $O\bigl(\frac{\log(\Delta n)}{1-\lambda}\bigr)$ steps.
To obtain a smaller discrepancy, diffusion-based algorithms combining the rounding up and down \cite{FGS12}, 
distributing tokens by the round-robin algorithm (called the rotor-router) \cite{AB13}, and using a randomized rounding \cite{BCFFS15} have been also studied.
Recently, Berenbrink et al.~proposed a sophisticated deterministic rounding framework and showed the $O(d)$-discrepancy~\cite{BKKMU19}.

Random matching algorithms are originally introduced by Ghosh and Muthukrishnan~\cite{GM96}, with a motivation of a more efficient way to send tokens than the diffusive way. 
They proposed an algorithm referred to as the LR algorithm, that generates random matching in a distributed way. 
Note that the LR algorithm uses the degree information on the adjacent vertices if the graph is irregular.
Boyd et al.~\cite{BGPS06} proposed an algorithm generating random matching called the distributed synchronous algorithm, which uses the maximum degree information.
The discrepancy of these algorithms has been studied in \cite{FS09,SS12,CS17}, e.g., Friedrich and Sauerwald~\cite{FS09} showed that the discrepancy after $O\bigl(\frac{\log (\Delta n)}{1-\lambda}\bigr)$ steps is at most $O\bigl(\sqrt{\frac{\log^3n}{1-\lambda}}\bigr)$ w.h.p.~on any $d$-regular graph.
As mentioned above, a constant discrepancy on any $d$-regular graph has been shown in~\cite{SS12}.
Several works focus on deterministic (periodic) matching algorithms, called balancing circuit models~\cite{RSW98,FS09,SS12}.

%
The simple load-balancing introduced in \cite{BFKK19} appears as a subroutine in the population protocol \cite{BKR19,BEFKKR20}.
Recently, Huang and Wang~\cite{HW22} study the balancing time of the simple load-balancing on complete bipartite graphs.
\section{Preliminaries}

\subsection{Edge-Markovian graph, and other terminologies about (static) graphs}\label{sec:edge-Markovian}
 An {\em edge-Markovian graph} is a sequence of (static) graphs $\bmG=G_0,G_1,G_2\ldots$ 
  where every graph $G_t=(V,E_t)$ ($t=0,1,2,\ldots$) is undirected and simple. 
 An edge-Markovian graph $\bmG$ is characterized 
   by $G_0=(V,E_0)$, $p \in (0,1]$ and $q \in [0,1)$. 
 The vertex set $V$ is invariant with respect to $t$, where let $n=|V|$ for convenience. 
 The edge set $E_t$ ($t=1,2,\ldots$) is a random variable depending only on $E_{t-1}$:
   when a distinct vertex pair $\{u,v\}$ is NOT an edge of $E_{t-1}$, the pair $\{u,v\}$ becomes an edge of $E_t$ with probability $p$; 
   when $\{u,v\}$ is an edge of $E_{t-1}$, the pair $\{u,v\}$ withdraws from $E_t$ with probability $q$. 
 In other words, 
   let 
\begin{align*}
  X_t(\{u,v\}) = \begin{cases} 
  0 & (\mbox{if $\{u,v\} \not\in E_t$})\\
  1 & (\mbox{if $\{u,v\}       \in E_t$})
\end{cases}
\end{align*}
  for any distinct pair of vertices $\{u,v\} \in \binom{V}{2}$ and $t=0,1,2,\ldots$, 
 and 
   let $P=(p_{ij}) \in \mathbb{R}^{2 \times 2}$ be a probability matrix given by 
\begin{align*}
  p_{ij} = \Pr[X_{t+1}(\{u,v\}) = j-1 \mid X_t(\{u,v\}) =i-1]
\end{align*}
 for $i \in \{1,2\}$ and $j \in \{1,2\}$, 
then 
\begin{align}
 P = \begin{pmatrix} 1-p & p \\ q & 1-q\end{pmatrix}
\label{eq:Edge-Markov-prob}
\end{align}
holds (see also \cref{fig:Edge-Markov-prob}). 
 It might be worth to mention the fact, though we do not use it in this paper, that 
  $\Pr[\{u,v\} \in E_t]$ approaches\footnote{ 
   We emphasis that $p+q \neq0$ since $p \in (0,1]$. 
   The condition $p+q \neq0$ excludes the case of $\bmG$ being static, i.e., $G_0=G_1=G_2=\cdots$ hold if and only if $p=q=0$. 
  } $p/(p+q)$ asymptotic to $t$. 
 Furthermore, 
  if $p+q=1$ then $\Pr[\{u,v\} \in E_t]=p/(p+q)$ always hold for $t=1,2,\ldots$, 
  otherwise   $\Pr[\{u,v\} \in E_t]$ is enough close to $p/(p+q)$ 
   in $\Order(1/\log(|1-p-q|))$ time steps. 
  See \cref{sec:conv} for more detail. 
\begin{figure}
\centering
\begin{tabular}{|c|c|c|}
\hline
  & $\{u,v\} \not\in E_t$ & $\{u,v\} \in E_t$ \\
\hline
  $\{u,v\} \not\in E_{t-1}$ & $1-p$ & $   p$ \\
\hline
  $\{u,v\}       \in E_{t-1}$ & $   q$ & $1-q$ \\
\hline
\end{tabular}
\caption{Table of conditional probabilities. \label{fig:Edge-Markov-prob}}
\end{figure}

 Since this paper is concerned with an arbitrary initial graph $G_0$ as an worst case analysis, 
  we in this paper let $\bmG(n,p,q)=G_0,G_1,G_2,\ldots$ represent an edge-Markovian graph, 
   i.e., $G_0$ in the characterization of an edge-Markovian graph is replaced by just the number of vertices $n$.

\paragraph{Other terminologies about static graphs}\label{sec:definitions}
 Let $G=(V,E)$ be a (static) undirected simple graph with $n=|V|$ vertices. 
 Let $\mathcal{G}$ denote the entire set of graphs with $n$ vertices. 
 Let $\deg_G(v)= |\{\{v,u\}\in E \mid u\in V \}|$ denote the {\em degree} of a vertex $v\in V$. 
 A set of edges $M \subseteq E$ is a {\em matching} 
  if every pair of edges never shares the end vertices, 
  i.e., $e=\{u,v\}$ and $f=\{w,x\}$ satisfies $e \cap f = \emptyset$ for any distinct $e,f \in M$. 
 For convenience, 
  let $M(v)$ denote the vertex matched with a vertex $v \in V$ in a matching $M$, 
  i.e.,  $u=M(v)$ if $\{v,u\} \in M$.

\subsection{Random matching algorithm on edge-Markovian graph}\label{sec:algorithm}
 Random matching is a comprehensive method for load balancing on graphs. 
 This section 
   describes the {\em random matching algorithm on an edge-Markovian graph}, and 
   describes the main theorem of the paper. 

\subsubsection{Algorithm description}
 Let $\bmG(n,p,q)=G_0,G_1,G_2\ldots$ be an edge-Markovian graph, 
    where $G_t = (V,E_t)$ denotes the graph at time $t$. 
 Let $\aL_t \in \mathbb{N}^V$ denote the configuration of tokens at time $t = 0,1,2,\ldots$, 
  where $\aL_t(v)$ denotes the number of token on $v \in V$. 
 The initial configuration $\aL_0$ is given arbitrarily. 
 For convenience, 
  let $\tL \defeq \sum_{v \in V} \aL_0(v)$ denote the total number of tokens, which is an invariant of $t$, and 
  let  
\begin{align}
 \Delta\defeq \max_{v\in V}\aL_0(v)-\min_{v\in V}\aL_0(v).  
\label{def:Delta}
\end{align} 

 The {\em random matching algorithm} 
   stochastically updates the token configuration $\aL_t \mapsto \aL_{t+1}$ as follows. 
 At time $t$, 
   the algorithm randomly chooses a random matching $M_t \subseteq E_t$ 
    according to some probability distribution $\mathcal{D}_t$ (see \cref{sec:applications} for examples). 
 For every matching edge $\{v,u\}\in M_t$, 
 we choose either 
\begin{align}
(\aL_{t+1}(v),\aL_{t+1}(u)) = \begin{cases}
 \left(\left \lceil\frac{\aL_{t}(v)+\aL_{t}(u)}{2} \right \rceil, \left \lfloor\frac{\aL_{t}(v)+\aL_{t}(u)}{2} \right \rfloor \right) 
   &\hspace{2em} \mbox{(i),\quad or } \\[2ex]
 \left(\left \lfloor\frac{\aL_{t}(v)+\aL_{t}(u)}{2} \right \rfloor, \left \lceil\frac{\aL_{t}(v)+\aL_{t}(u)}{2} \right \rceil \right) 
   &\hspace{2em} \mbox{(ii)}
\end{cases} 
\label{def:simplebalance}
\end{align}
  with probability $1/2$.\footnote{
   This probability $1/2$ is just for simplicity of the algorithm description (by symmetry of $v$ and $u$), and it is not essential. 
   We can replace the probability $1/2$ with any other probability, and 
    it is easy to apply the argument of the paper to the variant. 
   }
 For all unmatched vertices $w\in V$, i.e., $\{w,w'\}\not\in M_t$ for any $w'\in V$, 
  set $\aL_{t+1}(w)=\aL_t(w)$.
\begin{algorithm}[t]
\caption{Random matching algorithm on $\bmG$}
\label{fig:algorithm}
\SetKwInOut{Input}{Input}\SetKwInOut{Output}{Output}
\Input{$G_0=(V,E_0)$, $p \in (0,1]$, $q \in [0,1)$ and $ \aL_0 \in \mathbb{N}^V$}
\Output{$\aL_T$}
\BlankLine
\For{$t =0$ \KwTo $T-1$}{
{generate a random matching $M_t \subseteq E_t$}\tcp*[r]{see Algorithms \ref{alg:simple}, \ref{alg:LR} and \ref{alg:ds}}
 \For{$\{u,v\} \in M_t$}{
  {\cref{def:simplebalance}}\;
 }
 \tcp{we obtain $\aL_{t+1}$} 
{($G_{t+1}$ is generated from $G_t$)}\tcp*[r]{see \cref{sec:edge-Markovian}}
}
\KwRet{$\aL_T$}\;
\end{algorithm}

  A very simple example of the random matching algorithm\footnote{
   ``Choose a matching randomly'' is a profound problem in contrast to it looks: 
      for instance, choose a matching (almost) uniformly at random in bipartite graph 
      had been investigated for a long time, see e.g., \cite{JS89,JSV04}. 
     Of course, it is easy for some certain distributions. 
  } is 
    the {\em simple load balancing}, which is analyzed 
      for static complete graphs by \cite{BFKK19} and for complete bipartite graph by \cite{HW22}: 
  the algorithm chooses a vertex $v \in V$ uniformly at random, 
   then chooses a single edge incident to $v$ as far as exist, 
   i.e.,  the random matching consists of just a single edge as $M_t=\{\{u,v\}\}$, or empty. 
 An edge $\{u,v\} \in E_t$ is 
   chosen with probability $\frac{1}{n} ( \frac{1}{\deg_{G_t}(u)}+\frac{1}{\deg_{G_t}(u)})$. 
 See \cref{sec:applications} for other examples. 

 An execution of a random matching on a edge-Markovian graph 
  is represented by a sequence of a triplet $(\aL_t,G_t,M_t)$. 
  For convenience, 
   let 
   $\mathcal{E}_t = (\aL_0,G_0,M_0), (\aL_1,G_1,M_1), \ldots, (\aL_{t-1},G_{t-1},M_{t-1}), \aL_t$
   denote the history of an execution until time $t$.  
 In the execution, we remark, 
   $G_t$ depends only on $G_{t-1}$, and 
   $M_t$ is chosen according to $\mathcal{D}_t$ 
     depending on $G_t$ (and possibly depends on $\mathcal{E}_t$, too). 
  Depending on $\aL_t$ and $M_t$, 
   the configuration $\aL_{t+1}$ is 
   probabilistically determined (recall \cref{def:simplebalance}).

\subsubsection{$\fair$-fair condition and main theorem}
 We say the distribution $\mathcal{D}_t$, which a random matching $M_t \subseteq E_t$ follows, 
  satisfies the {\em $\fair$-fair} condition (for $\mathcal{E}$) 
 if there exists $\fair > 0$ such that 
\begin{align}
    \Pr\left[\{v,u\} \in M_t \,\middle|\, G_t =G,\,\mathcal{E}\right] 
    \geq \frac{\fair}{\max\{\deg_G(v),\deg_G(u)\}} 
    \label{def:matching}
\end{align}
  holds for any $\{u,v\} \in E_t$, for any graph $G\in \mathcal{G}$  and for any other events $\mathcal{E}$, 
   where we assume the execution $\mathcal{E}_t$ as $\mathcal{E}$ but not limited to. 
 We also say a random matching algorithm satisfies the $\fair$-fair condition 
  if $M_t$ satisfies the condition \cref{def:matching} at any time $t$ in the algorithm. 
 We remark 
   that $\fair$ can be a function of $n$ such as $1/n$: 
  we will show two examples of specific random matching algorithms in \cref{sec:applications}, 
   where the algorithms respectively satisfy $\fair=1/8$ and $\fair=1/n$ conditions. 
 Notice that 
   $\fair$ cannot be more than $1$ as far as $M_t$ is a matching\footnote{
   Proof: 
   Suppose $v \in V$ satisfies $\deg_G(v) = \max_{u\in V}\deg_G(u)$. 
  Under \eqref{def:matching}, 
    the expected number of partners of $v$ in a random matching $M_t$ satisfies 
    $\E[|\{ \{u,v\} \in E \mid M_t(v)=u\}|] 
      = \sum_{\{u,v\} \in E}\Pr\left[\{u,v\} \in M_t \,\middle|\, G_t =G,\,\mathcal{E}\right]
      \geq \sum_{\{u,v\} \in E} \frac{\fair}{\deg_G(v)}
      =\deg_G(v)\frac{\fair}{\deg_G(v)} = \fair$. 
  Since any matching $M_t$ must satisfy $|\{ \{u,v\} \in E \mid M_t(v)=u\}| \leq 1$, 
   we get $\fair \leq 1$. 
  }. 

 Now, we are ready to describe our main theorem. 
\begin{theorem}
\label{thm:RM_general}
 Suppose a pair of $p \in (0,1]$ and $q\in[0,1)$ satisfy $\max\{p,1-q\}\geq \RG/n$ for a constant\footnote{
  The assumption of a {\em constant} $\theta$ is just for simplicity of the arguments.
  We can establish a similar theorem for $\theta = {\rm o}(n)$ 
   if we allow $c$ to be a function of $p$ and $q$.  
  See the supplementary argument of the theorem just below, and the definition of $c_*$. 
 } $\RG > 0$. 
 Let $\bmG(n,p,q)=G_0,G_1,G_2,\ldots$ be an edge Markovian graph, 
  and let $\aL_0 \in \mathbb{N}^V$ be an initial configuration of tokens. 
  If the random matching algorithm satisfies the $\fair$-fair condition ($0 < \fair \leq 1$) given in \eqref{def:matching}
   then its balancing time $\Tbal$ satisfies 
\begin{align}
\Pr \left[\Tbal \leq \frac{c \rate \log \left(\frac{\Delta n}{\varepsilon}\right)}{\fair}\right] \geq 1-\varepsilon 
\end{align}
  for any $\varepsilon$ ($0<\varepsilon<1/4$), 
 where $c$ is an appropriate constant, 
 $\rate \defeq \max\left\{\frac{p}{1-q}, \frac{1-q}{p}\right\} \geq 1$ and 
 $\Delta\defeq \max_{v\in V}\aL_0(v)-\min_{v\in V}\aL_0(v)$ (cf., \eqref{def:Delta}). 
\end{theorem}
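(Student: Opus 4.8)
The plan is to run the token (and hole) accounting of Berenbrink et al.~\cite{BFKK19} on top of a history-independent estimate for the matching probabilities on $\bmG(n,p,q)$.

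\emph{Framework.} Give the $\tL$ tokens distinct labels and record each token's \emph{height}: on a vertex $v$ holding $\aL_t(v)$ tokens the labels occupy heights $1,\dots,\aL_t(v)$; write $v_t(x)$ and $h_t(x)$ for the vertex and height of token $x$ at time $t$. Following~\cite{BFKK19} I attach to each balancing of a matching edge $\{v,u\}\in M_t$ a deterministic relabelling (the skip mode) with two properties: (i) no token's height ever increases; and (ii) if $h:=h_t(x)>\rave+1$ and $v_t(x)$ is matched with a vertex of load at most $\rave$, then $h_{t+1}(x)\le\rave+\lceil(h-\rave)/2\rceil$. To handle $\min_v\aL_t(v)$ symmetrically, pass to the reflected configuration $\overline{\aL}_t\defeq C-\aL_t$ for a large fixed integer $C$: it evolves under the same algorithm (random rounding commutes with $x\mapsto C-x$, interchanging $\lceil\cdot\rceil$ and $\lfloor\cdot\rfloor$, and the balanced $1/2$-split is symmetric), has the same $\Delta$, and turns $\min_v\aL_t(v)$ into $C-\max_v\overline{\aL}_t(v)$; the tokens of $\overline{\aL}$ are the \emph{inverted tokens} (holes). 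It thus suffices to show that within $T$ steps every token has height at most $\rave+1$ and, symmetrically, so does every hole; these two facts give $\aL_t(v)\in\{\rave-1,\rave,\rave+1\}$ for all $v$.

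\emph{Reduction to a progress rate.} Fix a token $x$ with $h_0(x)>\rave+1$; at most $n\Delta$ tokens satisfy this, since $\sum_v(\aL_0(v)-\rave-1)_+<n\Delta$, and similarly for holes. Let $\psi_t\defeq\max\{0,h_t(x)-\rave-1\}$: by (i) it is non-increasing with $\psi_0\le\Delta$, and by (ii) a \emph{progress step} for $x$ — one in which $v_t(x)$ is matched with a vertex of load $\le\rave$ — at least halves $\psi_t$, while $\psi_t=0$ is absorbing. So $\Order(\log\Delta)$ progress steps bring $x$ into the band. Granting the key estimate that, conditioned on \emph{any} history $\mathcal{E}_t$, a progress step for $x$ occurs with probability at least $c_0\fair/\rate$ for a constant $c_0>0$, put $T\defeq c\,\rate\,\fair^{-1}\log(\Delta n/\varepsilon)$ with $c$ large: a Chernoff bound gives $\Omega(\log(\Delta n/\varepsilon))$ progress steps among the first $T$ steps with probability at least $1-\varepsilon/(2n\Delta)$, which for large $c$ exceeds the $\Order(\log\Delta)$ needed, so $h_T(x)\le\rave+1$ with that probability. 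A union bound over the $\le n\Delta$ relevant tokens and the $\le n\Delta$ relevant holes then gives $\Tbal\le T$ with probability at least $1-\varepsilon$.

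\emph{The progress rate (the crux).} I would prove the $c_0\fair/\rate$ bound from three facts. (a) \emph{History-independent edge existence:} conditioned on $\mathcal{E}_t$ the set $E_{t-1}$ is fixed, the edges of $E_t$ are mutually independent, and each pair lies in $E_t$ with probability at least $\min\{p,1-q\}=\max\{p,1-q\}/\rate$. (b) \emph{Degree control without a vertex union bound:} conditioned on $\mathcal{E}_t$ and on a fixed pair $\{v,u\}$ being present in $E_t$, the degrees $\deg_{G_t}(v)$ and $\deg_{G_t}(u)$ are independent (their only shared edge is the conditioned one), each has mean $\Order(1+n\max\{p,1-q\})$, hence each is at most $D\defeq\Theta(1+n\max\{p,1-q\})$ with probability at least $1/2$ by Markov. (c) \emph{Enough admissible partners:} since $|\mu-\rave|\le\tfrac12$, whenever some token overshoots there is at least one vertex of load $\le\rave$, and — weighing this against $\sum_v(\aL_t(v)-\rave)$ — as long as the overshoot is not already small that set has size $\Omega(n)$, the residual small-overshoot case being carried by the width-$3$ band. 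Feeding (a)--(c) into the $\fair$-fair inequality~\eqref{def:matching} for $G_t$ and summing over admissible partners makes the progress probability at least $(\fair/D)\cdot(\max\{p,1-q\}/\rate)\cdot\Omega(n)=\Omega(\fair/\rate)$, the $n$'s cancelling because $n\max\{p,1-q\}\ge\theta=\Omega(1)$ while $D=\Theta(1+n\max\{p,1-q\})$.

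\emph{Main obstacle.} The real difficulty is ingredient (c): making the $\Omega(\fair/\rate)$ progress rate hold \emph{uniformly over all configurations}. In the regime where only a few vertices lie below $\rave$, those vertices are forced (again by $|\mu-\rave|\le\tfrac12$) to be far below $\rave$, and one must argue either that merging a high vertex with such a drastically low partner still produces the halving of the overshoot demanded by (ii), or that such configurations are transient — they necessarily have large $\Delta$, and the hole side of the analysis drives the low vertices upward at the full rate $\Omega(\fair/\rate)$ — so that no token or hole is stalled for more than $T$ steps. Assembling a single clean lemma that covers both the ``many mildly-low partners'' and the ``few drastically-low partners'' regimes and dovetails with the skip-mode bound (ii) is the technical heart of the proof; the degree estimate (b) is routine, but conditioning on one present edge rather than union-bounding over all $n$ vertices is what keeps the rate at $\Omega(\fair/\rate)$ instead of $\Omega(\fair/(\rate\log n))$.
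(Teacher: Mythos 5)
Your framework (token heights, skip-mode halving, inverted tokens, and a history-independent lower bound on the per-step matching probability assembled from edge-presence $\geq\min\{p,1-q\}$ plus degree control) is the same as the paper's, and ingredients (a) and (b) correspond closely to its key lemma (\cref{lem:prob_RM}). But there is a genuine gap exactly where you flag it: ingredient (c) is false as stated, and your union bound "over the $\le n\Delta$ relevant tokens and the $\le n\Delta$ relevant holes" needs every \emph{individual} token and every individual hole to enjoy the $\Omega(\fair/\rate)$ progress rate at \emph{every} step. Consider one vertex of load $\ave-\Omega(n)$ with all others at $\ave+1$: then $S'(\aL_t)=\{v:\aL_t(v)\le\rave\}$ is a single vertex, and each high token's progress probability is only $\Omega(\fair/(n\rate))$, so no uniform per-token rate exists. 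Your two suggested escapes (drastically low partners still halve the overshoot; such configurations are transient) are left as gestures, and the first is not the right mechanism anyway — the issue is not the size of the halving but the frequency of progress steps.

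The paper closes this gap with a two-phase structure that your sketch lacks. In Phase I it does \emph{not} claim a rate for each token separately; instead it pairs every over-high token $a$ with every over-high complementary token $b$ and uses the dichotomy $|S(\aL_t)|+|\overline{S}(\aL_t)|\ge n$ (with $S=\{v:\aL_t(v)\le\ave\}$, $\overline{S}=\{v:\aL_t(v)\ge\ave\}$): at every step at least one of the two sets has size $\ge n/2$, so by \cref{lem:prob_RM} the event "$a$ progresses \emph{or} $b$ progresses" has probability $\ge c_*\fair/(2\rate)$ regardless of the configuration. Chernoff plus a union bound over \emph{pairs} then yields that after $T_1$ steps, for every pair at least one member is settled — equivalently, either all tokens satisfy $\Hei\le\ave+1$ or all holes do, i.e.\ $\max_v\aL_{T_1}(v)\le\ave+1$ or $\min_v\aL_{T_1}(v)\ge\ave-1$. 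Only then, in Phase II, does a uniform per-token rate become available: once (say) $\min_v\aL(v)\ge\ave-1$, a short counting argument (\cref{lem:propset}) shows $|S'(\aL_t)|\ge n/3$, and the remaining side is driven into the band at rate $c_*\fair/(3\rate)$. You would need to add this pairing/dichotomy step (or an equivalent device) to make your union bound sound; without it the "crux" paragraph is an accurate diagnosis of an unresolved hole rather than a proof.
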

 In fact, 
  we give a constant $c= 91 / c_* $ for \cref{thm:RM_general} in our proof, where 
\begin{align}
 c_* \defeq \dfrac{\left(1-\exp\left(-\frac{\RG}{3}\right)\right)^2}{2+\frac{1}{\RG}}.
\label{def:c*}
\end{align}
 This constant $c$ is not optimized at all due to the simplification of the arguments. 
 We remark that $c_*=c_*(\RG)$ is monotone increasing with respect to $\RG$ $(\RG > 0)$, 
  such that $c_*(+0)=0$, $c_*(1)\simeq 0.02678$ and $c_*(\infty)=0.5$. 

\if0
\bigskip 
 It is well known in random graphs that 
  $\RG < 1$ then 
  a random graph $G_{n,p}$ could not be disconnected or not contain any giant component \cite{?}. 
 Nevertheless, 
  \cref{thm:RM_general} implies 
  a random matching algorithm provides the nearly perfect balance in.  
  
We remark that if $\max\{p,1-q\} \geq 1/n$ then the constant $c$ is ?. 
If $\max\{p,1-q\} \geq \RG/n$ for $\RG < 1$  then  $c = \RG?$. 
Note that $\RG < 1$ then the graph could not be disconnected or not contain any giant component \cite{?}. 
Nevertheless, a random matching algorithm provides the nearly perfect balance in the time step.  

We remark that if you set $\varepsilon = 1/n$ then you obtain 
$ \Pr[\Tbal\leq c\rate^{-1}\fair\log (\Delta n \varepsilon^{-1})] \geq 1-1/n$
for $\Tbal = $. 
We also remark that the coefficient of $\Tbal$ in \cref{thm:RM_general} is not optimized 
 due to simplification of the argument.

The rest of the paper is devoted to show the following theorem, which is a generalized version of \cref{thm:RM_main,thm:SLB_main}.
\fi

\section{A Lemma for Theorem \ref{thm:RM_general}}
\label{sec:randommatching}
 As a preliminary step of the proof of \cref{thm:RM_general}, 
  this section establishes a key lemma 
  about a probability of a random matching in an edge-Markovian graph. 
 A random matching $M_t \subseteq E_t$ clearly depends on $G_t$, 
   and the graph $G_t$ depends on $G_{t-1}$.  
 It makes the analysis of $\aL_t$ complex 
   compared with some simpler models of dynamic graphs, so called independent random graph model, 
    where all $G_t$ are mutually independent. 
 To avoid the difficulty caused by the history-dependence of the edge-Markovian model,  
  we give a useful lower bound of the probability 
    that a specific vertex $v$ is matched with a desired vertex of $U_t \subseteq V$, 
     where $U_t$ is given randomly depending on the history of an execution $\mathcal{E}_t$. 
 The lower bound plays a key role in the proof of \cref{thm:RM_general} in \cref{sec:proof}.  
\begin{lemma}
\label{lem:prob_RM}
 Suppose a pair of $p \in (0,1]$ and $q\in[0,1)$ satisfy $\max\{p,1-q\}\geq \RG/n$ for a constant\footnote{
  The assumption of a {\em constant} $\theta$ is just for $c_*$ to be a constant to $p$ and $q$. 
   We can establish a similar lemma for $\theta = {\rm o}(n)$ if we allow $c_*$ to be a function of $p$ and $q$. 
   See the definition of $c_*$ in \cref{def:c*}. 
 } $\RG > 0$. 
 Let $\bmG(n,p,q)=G_0,G_1,G_2,\ldots$ be an edge Markovian graph, 
  and let $M_t \subseteq E_t$ be a random matching of $G_t=(V,E_t)$ according to a distribution $\mathcal{D}_t$. 
 Let $U_t\subseteq V$ be any random variable. 
 Note that $M_t$ and $U_t$ may depend on each other as well as any other random variables; 
  for convenience let $\mathcal{E}$ denotes any possible event, 
  e.g., an execution $(\aL_0,G_0,M_0), (\aL_1,G_1,M_1), (\aL_2,G_2,M_2), \ldots$ in \cref{sec:algorithm}.  
If $\mathcal{D}_t$ satisfies the $\fair$-fair condition~\eqref{def:matching} for $\mathcal{E}$ then 
\begin{align}
    \Pr\left[M_t(v) \in U_t \,\middle|\,\mathcal{E}\right] 
    &\geq \frac{c_*\fair}{\rate}\E\left[\frac{|U_t|}{n} \,\middle|\, \mathcal{E}\right] 
\label{eq:RMkey-0}
\end{align}
holds, where 
 $c_*$ is a constant given by \cref{def:c*}, and  
 $\rate\defeq \max\left\{\frac{p}{1-q},\frac{1-q}{p}\right\}$.  
\end{lemma}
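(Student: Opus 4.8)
The plan is to condition on the graph $G_{t-1}$ at the previous time-step (and the history $\mathcal{E}$), and then use the one-step edge-Markovian transition to show that, regardless of what $G_{t-1}$ looks like, every vertex $v$ has a good chance of being in a low-degree regime in $G_t$ while still having many edges to the target set $U_t$. Concretely, fix $v$ and a realization of $U_t$ (of size $k=|U_t|$). Among the vertex pairs $\{v,u\}$ with $u \in U_t$, partition them according to whether $\{v,u\}\in E_{t-1}$ or not: a non-edge becomes an edge of $G_t$ with probability $p$, an edge survives with probability $1-q$. In either case the marginal probability that $\{v,u\}\in E_t$ is at least $\min\{p,1-q\} = \max\{p,1-q\}/\rate \geq \RG/(\rate n)$. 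So the expected number of neighbors of $v$ inside $U_t$ in the graph $G_t$ is at least $\frac{\RG k}{\rate n}$, and these indicator events are mutually independent (given $G_{t-1}$), since distinct pairs evolve independently.

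Next I would control $\deg_{G_t}(v)$ from above so that the $\fair$-fair bound $\Pr[\{v,u\}\in M_t \mid G_t,\mathcal{E}] \geq \fair/\max\{\deg_{G_t}(v),\deg_{G_t}(u)\}$ is not too small. The clean way is: condition further on $\deg_{G_{t-1}}(v)$; whatever it is, $\E[\deg_{G_t}(v)\mid G_{t-1}] \leq \max\{p,1-q\}\cdot n$ is false in general, so instead I would split the neighborhood count of $v$ in $G_t$ as (edges of $E_{t-1}$ at $v$ that survive) $+$ (non-edges at $v$ that appear), the latter being $\mathrm{Bin}(n-1-\deg_{G_{t-1}}(v),p)$ with mean $\le pn$, and the former at most $\deg_{G_{t-1}}(v)$. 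This is where the argument needs care, because $\deg_{G_{t-1}}(v)$ could be huge. The resolution is that we only need a lower bound on $\Pr[M_t(v)\in U_t\mid\mathcal{E}]$, so I would reveal the edges of $G_t$ from $v$ to $U_t$ and the edges of $G_t$ from $v$ to $V\setminus U_t$ separately, and work on the event that the former count $Y$ is $\ge \tfrac12\E[Y\mid G_{t-1}]$ (Chernoff, using independence) while also ensuring $\max\{\deg_{G_t}(v),\deg_{G_t}(u)\}$ is dominated by something of order $\max\{p,1-q\}\,n$ on a constant-probability event; then on that event each of the $Y$ candidate edges is in $M_t$ with probability $\ge \fair / \Order(\max\{p,1-q\}\,n)$, and a union bound (or inclusion–exclusion / the fact that $M_t(v)$ is a single vertex) gives $\Pr[M_t(v)\in U_t\mid \text{event},\mathcal{E}] \ge \fair\, Y / \Order(\max\{p,1-q\}\,n) \gtrsim \fair\,\RG k/(\rate n)$. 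Taking expectations over the revealed randomness and over $U_t$, and collecting the constants into exactly the quantity $c_*(\RG) = (1-\e^{-\RG/3})^2/(2+1/\RG)$ of \cref{def:c*}, yields \eqref{eq:RMkey-0}.

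I expect the main obstacle to be making the degree bound on $v$ (and on its potential partners $u\in U_t$) rigorous without assuming anything about $G_{t-1}$: a vertex can have linear degree in $G_{t-1}$, and although $q$-thinning reduces it, the surviving degree can still be much larger than $\max\{p,1-q\}\,n$. The trick I would use is to not bound $\deg_{G_t}(v)$ pathwise but to observe that the $\fair$-fair inequality, summed over $u\in U_t\cap N_{G_t}(v)$, already yields $\Pr[M_t(v)\in U_t\mid G_t,\mathcal{E}] \ge \fair\cdot |U_t\cap N_{G_t}(v)| / \max_{u}\deg_{G_t}(u)$ — but this max is over all of $V$, which is bad. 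So instead I would go back to the footnote-style computation in the excerpt: $\sum_{\{v,u\}\in E_t}\Pr[\{v,u\}\in M_t\mid\cdots] \le 1$, hence $\Pr[\{v,u\}\in M_t\mid\cdots]$ is on \emph{average} at most $1/\deg_{G_t}(v)$ over the edges at $v$; combined with the $\fair$-fair lower bound this pins the relevant degrees to within a factor $1/\fair$ of each other, so the only really uncontrolled quantity is $\deg_{G_t}(v)$ itself, and for that I use the one-step Markov transition plus a Chernoff bound conditioned on $G_{t-1}$, on the good event that the appearing-edge part is $\le 3\max\{p,1-q\}\,n$ (probability $\ge 1-\e^{-\RG/3}$ by Chernoff since its mean is $\le \max\{p,1-q\}\,n \ge \RG$) and, symmetrically, that $v$ has $\ge \tfrac13$ of its expected number of $G_t$-neighbors inside $U_t$. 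The product of these two constant-probability events is what produces the squared factor and the $2+1/\RG$ denominator in $c_*$. The remaining steps — intersecting with the surviving-edge contribution, handling the case $U_t=\emptyset$ trivially, and averaging out $\mathcal{E}$ — are routine.
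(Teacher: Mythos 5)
Your high-level ingredients are the right ones (the $\min\{p,1-q\}$ bound for the edge $\{v,u\}$ being present in $E_t$, a Chernoff bound pinning degrees to $\Order(n\max\{p,1-q\})$, and the $\fair$-fair condition applied edge by edge), but the proposal is organized around an obstacle that does not exist, and the workaround you build for it introduces a genuine gap. The non-obstacle first: you assert that $\E[\deg_{G_t}(v)\mid G_{t-1}]\leq n\max\{p,1-q\}$ ``is false in general'' because $\deg_{G_{t-1}}(v)$ could be huge. It is in fact true for \emph{every} $G_{t-1}$: each of the $n-1$ potential edges at $v$ lies in $E_t$ with probability exactly $1-q$ (if it was in $E_{t-1}$) or exactly $p$ (if not), both at most $\max\{p,1-q\}$, so the conditional mean is at most $(n-1)\max\{p,1-q\}$ regardless of $\deg_{G_{t-1}}(v)$; since the $n-1$ indicators are independent given $G_{t-1}$, one upper-tail Chernoff application gives $\deg_{G_t}(v)\leq 2n\max\{p,1-q\}$ with probability at least $1-\exp(-\RG/3)$. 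This single observation is the heart of the paper's proof, and once you have it the detour through ``pinning degrees within a factor $1/\fair$ of each other'' is unnecessary.

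The genuine gap is in the step where you condition on the event that $Y=|N_{G_t}(v)\cap U_t|$ is at least a constant fraction of $\E[Y\mid G_{t-1}]\geq |U_t|\min\{p,1-q\}$ and treat this as a constant-probability event. That lower-tail bound is vacuous whenever $\E[Y]=\order(1)$ — for instance $|U_t|=1$, or even $|U_t|=\Theta(n)$ with $\rate$ large, where $\E[Y]=\Theta(\RG/\rate)$ can be far below $1$, in which case $\Pr[Y\geq \E[Y]/2]=\Pr[Y\geq 1]\leq\E[Y]=\order(1)$. The lemma must hold for arbitrary $U_t$, so your claimed derivation of $c_*$ as a product of two constant-probability events does not go through (in the paper the squared factor $(1-\e^{-\RG/3})^2$ comes from bounding the degrees of \emph{both} endpoints $v$ and $u$ of a candidate matching edge, not from a neighbor-count concentration event). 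The fix is to avoid concentration of $Y$ altogether: since $M_t(v)$ is a single vertex, the events $[\{v,u\}\in M_t]$ for distinct $u\in U_t$ are mutually exclusive, so $\Pr[M_t(v)\in U_t\mid\mathcal{E}]=\sum_{u}\Pr[\{v,u\}\in M_t\mid\cdots]$; each summand is bounded below by restricting to graphs $G_t$ with $\{v,u\}\in E_t$ and $\max\{\deg_{G_t}(v),\deg_{G_t}(u)\}\leq 2n\max\{p,1-q\}+1$ (an event of probability at least $\min\{p,1-q\}(1-\e^{-\RG/3})^2$), on which the $\fair$-fair condition yields $\fair/(2n\max\{p,1-q\}+1)$. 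Summing over $u\in U_t$ and averaging over $U_t$ gives the bound linearly in $\E[|U_t|]$ with exactly the constant $c_*$ of \cref{def:c*}.
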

\begin{proof}
 For convenience, 
   let $\degU \defeq 2n\max\{p,1-q\}$, and let 
\begin{align*}
    \mathcal{G}_{v,u}
    &=\left\{G=(V,E)\in \mathcal{G} \,\middle|\,\begin{aligned}
         &\{v,u\}\in E \\
         &\deg_G(v)\leq \degU+1, \\
         &\deg_G(u)\leq \degU+1
    \end{aligned}\right\}
\end{align*}
  for $v,u\in V$. 
  Considering the marginal probabilities, we see that 
\begin{align}
    &\Pr\left[M_t(v)\in U_t\,\middle|\,\mathcal{E}\right] \nonumber\\
    &=\footnotemark
    \sum_{U \in 2^V}
         \Pr\left[M_t(v)\in U \,\middle|\, U_t=U,\mathcal{E}\right]
         \cdot \Pr\left[U_t = U,\,\middle|\, \mathcal{E}\right]
    \nonumber\\
    &=\sum_{U \in 2^V}\sum_{u\in U}
         \Pr\left[M_t(v)=u \,\middle|\, U_t=U,\mathcal{E}\right]
         \cdot \Pr\left[U_t=U,\,\middle|\, \mathcal{E}\right] 
    \nonumber\\
    &=\sum_{U \in 2^V}\sum_{u\in U}
         \Pr\left[\{u,v\} \in M_t \,\middle|\, U_t=U, \mathcal{E}\right]
         \cdot \Pr\left[U_t=U,\,\middle|\, \mathcal{E}\right] 
    \nonumber\\
    &=\sum_{U \in 2^V}\sum_{u\in U}\sum_{G\in \mathcal{G}_{v,u}}
         \Pr\left[ \{v,u\} \in M_t  \,\middle|\, U_t = U,G_t=G,\mathcal{E}\right] 
         \cdot\Pr\left[U_t = U,G_t=G\,\middle|\, \mathcal{E}\right]
    \nonumber\\
    &\geq
         \frac{\fair}{\degU+1}
         \sum_{U \in 2^V}\sum_{u\in U}\sum_{G\in \mathcal{G}_{v,u}}
         \Pr\left[U_t = U,G_t=G\,\middle|\, \mathcal{E} \right]
         \hspace{2em}(\mbox{by \cref{def:matching} and $\max\{\deg_G(u),\deg_G(v)\} \leq \degU$})
    \nonumber\\
    &=
         \frac{\fair}{\degU+1}
         \sum_{U \in 2^V}\sum_{u\in U}
        \Pr\left[U_t = U,G_t\in \mathcal{G}_{v,u}\,\middle|\, \mathcal{E} \right]
    \nonumber\\
    &=
         \frac{\fair}{\degU+1}
         \sum_{U \in 2^V}\sum_{u\in U}
    \Pr\left[G_t\in \mathcal{G}_{v,u}\,\middle|\, U_t = U,\mathcal{E}\right]
    \Pr\left[U_t = U\,\middle|\, \mathcal{E} \right]
    \label{eq:RMkey-1}
\end{align}
\footnotetext{
     We in this paper assume in a marginal probability $\sum_{x \in \Omega} \Pr[A \mid X=x]\Pr[X=x]$
        that $\Pr[A \mid X=x]\Pr[X=x]=0$ if $\Pr[X=x]=0$. 
     In other words, ``$x \in \Omega$'' in the subscription of $\sum$ 
       is an abbreviation of ``$x \in \{x' \in \Omega \mid \Pr[X=x']\neq 0\}$.'' 
    }
 hold. 

Concerning the term $\Pr\left[G_t\in \mathcal{G}_{v,u} \,\middle|\, U_t = U,\mathcal{E} \right]$ in \cref{eq:RMkey-1}, 
 we will claim 
 \begin{align}
\Pr\left[G_t\in \mathcal{G}_{v,u}\,\middle|\, U_t = U,\mathcal{E} \right] 
    \geq 
    \min\{p,1-q\}\left(1-\exp\left(-\frac{\RG}{3}\right)\right)^2
    \label{eq:RMkey-4}
\end{align}
 holds for any $U \in 2^V$ and $u \in U$. In fact, 
\begin{align}
&\Pr\left[G_t\in \mathcal{G}_{v,u}\,\middle|\, U_t = U,\mathcal{E} \right]\nonumber \\
&=\Pr\left[\begin{aligned}
     &\{v,u\}\in E_t,\\
     &\deg_{G_t}(v)\leq \degU+1,\\ 
     &\deg_{G_t}(u)\leq \degU+1
\end{aligned} \,\middle|\, U_t = U,\mathcal{E} \right]\nonumber \\
    &=
       \Pr\left[\{v,u\}\in E_t \,\middle|\, U_t = U,\mathcal{E} \right] 
    \cdotp  \Pr\left[\begin{aligned}
        &\deg_{G_t}(v)\leq \degU+1, \\
        &\deg_{G_t}(u)\leq \degU+1
    \end{aligned} \,\middle|\, 
         \{v,u\}\in E_t,\, U_t = U,\, \mathcal{E} 
    \right] \nonumber\\
    &\geq 
    \min\{p,1-q\}\Pr\left[\begin{aligned}
        &\deg_{G_t}(v)\leq \degU+1, \\
        &\deg_{G_t}(u)\leq \degU+1
    \end{aligned} \,\middle|\, 
         \{v,u\}\in E_t,\, U_t = U,\, \mathcal{E}
    \right]
    \label{eq:RMkey-2}
\end{align}
holds, 
where the last inequality follows 
$ \Pr\left[\{v,u\}\in E_t \,\middle|\, U_t = U,\mathcal{E}_t\right]\geq \min\{p,1-q\}$ 
 since $\{v,u\}\in E_t$ depends only on $E_{t-1}$ in the edge-Markovian model (cf.,~\cref{sec:edge-Markovian}) 
 and the probability is $1-q$ (if $\{v,u\}\in E_{t-1}$) or $p$ (if $\{v,u\}\not\in E_{t-1}$). 

To evaluate the second term of \cref{eq:RMkey-2}, 
  let $X_t(\{w,w'\})$ for $\{w,w'\} \in \binom{V}{2}$ be independent binary random variables 
  given by $X_t(\{w,w'\})=1$ if $\{w,w'\}\in E_t$, otherwise  $X_t(\{w,w'\})=0$. 
 Then, $\deg_{G_t}(w)=\sum_{w'\in V\setminus \{w\}}X_t(\{w,w'\})$ holds. 
Then, 
\begin{align}
\lefteqn{\Pr\left[\begin{aligned}
        &\deg_{G_t}(v)\leq \degU+1, \\
        &\deg_{G_t}(u)\leq \degU+1
    \end{aligned} \,\middle|\, 
         \{v,u\}\in E_t,\, U_t=U,\, \mathcal{E}
    \right]}\nonumber\\
    &= \Pr\left[\begin{aligned}
         &\textstyle\sum_{w\in V\setminus\{v\}}X_t(\{v,w\})\leq \degU+1,\\
         &\textstyle\sum_{w\in V\setminus\{u\}}X_t(\{u,w\})\leq \degU+1
    \end{aligned} \,\middle|\, 
    X_t(\{v,u\}) = 1,\, U_t=U,\, \mathcal{E} \right] \nonumber \\
    &= \Pr\left[\begin{aligned}
         &\textstyle\sum_{w\in V\setminus\{v,u\}}X_t(\{v,w\})\leq \degU,\\
         &\textstyle\sum_{w\in V\setminus\{u,v\}}X_t(\{u,w\})\leq \degU
    \end{aligned} \,\middle|\, 
    X_t(\{v,u\}) = 1,\, U_t=U,\, \mathcal{E} \right] 
\nonumber\\
    &=\Pr\left[\textstyle\sum_{w\in V\setminus\{v,u\}}X_t(\{v,w\})\leq \degU \,\middle|\, U_t=U,\mathcal{E} \right]\nonumber \\
    &\hspace{1em}\cdotp \Pr\left[ \textstyle\sum_{w\in V\setminus\{u,v\}}X_t(\{u,w\})\leq \degU \,\middle|\, U_t=U,\mathcal{E}
    \right]
\nonumber\\
    &=\Pr\left[\textstyle\sum_{w\in V\setminus\{v,u\}}X_t(\{v,w\})\leq \degU \,\middle|\, U_t=U,\mathcal{E} \right]^2
    \label{tmp20220527b}
\end{align}
 holds where both of the last two equalities follow the fact that $X_t(\{w,w'\})$ for $\{w,w'\}\in E$ are independent. 
Concerning \cref{tmp20220527b},  
  we remark that its expectation satisfies 
\begin{align}
  \E\left[\textstyle\sum_{w\in V\setminus\{v,u\}}X_t(\{v,w\}) \,\middle|\, U_t=U,\mathcal{E}\right] 
  &=\textstyle\sum_{w\in V\setminus\{v,u\}}\E\left[X_t(\{v,w\}) \,\middle|\, U_t=U,\mathcal{E}\right] \nonumber\\
  &=\textstyle\sum_{w\in V\setminus\{v,u\}}\Pr\left[X_t(\{v,w\})=1 \,\middle|\, U_t=U,\mathcal{E}\right] \nonumber\\
  &\leq n\max\{p,1-q\}
\label{tmp20220527c}  
\end{align}
where the last inequality follows the edge-Markovian model \cref{eq:Edge-Markov-prob}. 
Thus, we have 
\begin{align}
    &\Pr\left[\textstyle\sum_{w\in V\setminus\{v,u\}}X_t(\{v,w\})\leq \degU \,\middle|\, U_t=U,\mathcal{E}\right]\nonumber \\
    &\textstyle=
    1-\Pr\left[\sum_{w\in V\setminus\{v,u\}}X_t(\{v,w\})> \degU \,\middle|\, U_t=U,\mathcal{E}\right]\nonumber\\
    &\textstyle\geq 
    1-\Pr\left[\sum_{w\in V\setminus\{v,u\}}X_t(\{v,w\})\geq \degU \,\middle|\, U_t=U,\mathcal{E}\right]\nonumber\\
    &\textstyle=
    1-\Pr\left[\sum_{w\in V\setminus\{v,u\}}X_t(\{v,w\})\geq 2 n\max\{p,1-q\} \,\middle|\, U_t=U,\mathcal{E}\right]
    \hspace{1em}(\mbox{since $\degU = 2n \max\{p,1-q\}$}) \nonumber\\
    &\geq 1-\exp\left(-\frac{n\max\{p,1-q\}}{3}\right)
    \hspace{4.2em}(\mbox{by \cref{lem:Chernoff} (i) using \cref{tmp20220527c}}) \nonumber\\
    &\geq 1-\exp\left(-\frac{\RG}{3}\right)
    \hspace{10em}(\mbox{since $n\max\{p,1-q\}\geq\RG$ by assumption}) 
    \label{eq:RMkey-3}
\end{align}
 hold. 
 By Eqs. \cref{eq:RMkey-2,tmp20220527b,eq:RMkey-3}, 
   we obtain the desired claim \cref{eq:RMkey-4}.

Now, combining \cref{eq:RMkey-1,eq:RMkey-4}, we obtain
\begin{align}
    &\textstyle\Pr\left[M_t(v) \in U_t \,\middle|\, \mathcal{E}\right]\nonumber\\
  &\geq
         \frac{\fair}{\degU+1}
         \sum_{U \in 2^V}\sum_{u\in U}
    \Pr\left[G_t\in \mathcal{G}_{v,u}\,\middle|\, U_t=U, \mathcal{E}\right]
    \Pr\left[U_t=U\,\middle|\, \mathcal{E}\right] 
    &&(\mbox{by \cref{eq:RMkey-1}})\nonumber\\
  &\geq
         \frac{\fair\min\{p,1-q\}\left(1-\exp\left(-\frac{\RG}{3}\right)\right)^2}{\degU+1}
         \sum_{U \in 2^V}\sum_{u\in S}
    \Pr\left[U_t=U \,\middle|\, \mathcal{E}\right] 
    &&(\mbox{by \cref{eq:RMkey-4}})\nonumber\\
  &=
         \frac{\fair\min\{p,1-q\}\left(1-\exp\left(-\frac{\RG}{3}\right)\right)^2}{\degU+1}
        \sum_{U \in 2^V}
         \left(|U| \cdotp \Pr\left[U_t=U\,\middle|\, \mathcal{E}\right] \right)\nonumber\\
  &=
         \frac{\fair\min\{p,1-q\}\left(1-\exp\left(-\frac{\RG}{3}\right)\right)^2}{\degU+1}
   \E\left[|U_t| \,\middle|\, \mathcal{E}\right] 
\label{eq:RMkey-5}
\end{align}
 hold. 
 Finally, we remark that the coefficient of \cref{eq:RMkey-5} satisfies 
\begin{align}
   & \frac{\fair\min\{p,1-q\}\left(1-\exp\left(-\frac{\RG}{3}\right)\right)^2}{\degU+1} \nonumber\\
   &\geq 
    \frac{\fair\min\{p,1-q\}\left(1-\exp\left(-\frac{\RG}{3}\right)\right)^2}{2n \max\{p,1-q\}+1} 
      &&(\mbox{since $\degU=2n \max\{p,1-q\}$ by definition})\nonumber\\
    &\geq \frac{\fair\min\{p,1-q\}\left(1-\exp\left(-\frac{\RG}{3}\right)\right)^2}{2n \max\{p,1-q\}(1+\frac{1}{2\RG})}
      &&(\mbox{since $2n \max\{p,1-q\} \geq 2\RG$ by assumption})\nonumber\\
    &=\frac{\left(1-\exp\left(-\frac{\RG}{3}\right)\right)^2}{2+\frac{1}{\RG}}\frac{\fair}{\frac{\max\{p,1-q\}}{\min\{p,1-q\}}} \frac{1}{n}
\nonumber\\
    &=c^*\frac{\fair}{\rate}\frac{1}{n}
      &&\left(\mbox{where $\dfrac{\max\{p,1-q\}}{\min\{p,1-q\}}= \max\left\{\frac{1-q}{p},\frac{p}{1-q}\right\}=\rate$}\right) \label{eq:RMkey-6}
\end{align}
where $c_*=\dfrac{\left(1-\exp\left(-\frac{\RG}{3}\right)\right)^2}{2+\frac{1}{\RG}}$ is given by~\cref{def:c*}. 
\cref{eq:RMkey-0} is clear from \cref{eq:RMkey-5,eq:RMkey-6}. 
\end{proof}

\section{Proof of Theorem \ref{thm:RM_general}}\label{sec:proof}
 We prove \cref{thm:RM_general} by a version of the token-based analysis developed by \cite{BFKK19}. 
 As a preliminary step, 
   \cref{sec:analytic_framework} introduces definitions for our token-based analysis. 
 \cref{sec:proof-sketch} briefly explains our proof strategy based on the token-based analysis, and 
  the detail of the proof follows.

\subsection{Preliminary for a token-based analysis}
\label{sec:analytic_framework}
\begin{figure*}[t]
    \centering
    \includegraphics[width=16cm]{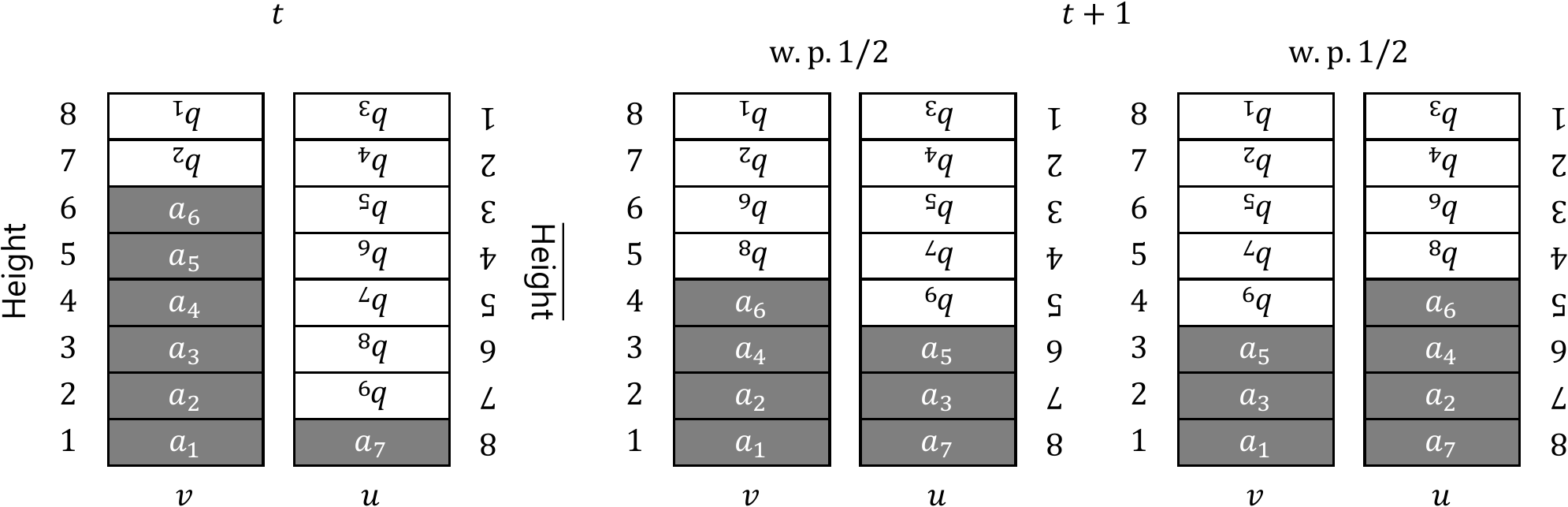}
    \caption{An example of reallocation of tokens and complementary tokens from $\aL_t$ to $\aL_{t+1}$. 
    Here, $\aL_t(v)=6$ and $\aL_t(u)=1$ (left figure), and 
        assume that the total number of tokens $\tL=\sum_{v' \in V} \aL_t(v') =8$, 
           meaning that one more token places another vertex $w$ behind the figure. 
    Suppose $\{u,v\} \in M_t$. 
    Then, either 
      $\aL_{t+1}(v)=4$ and $\aL_{t+1}(u)=3$ (middle figure), or 
      $\aL_{t+1}(v)=3$ and $\aL_{t+1}(u)=4$ (right figure) are obtained with equally probability 1/2. 
 For an example of the individual token's move, 
   token $a_5$ which places $(\Pos_t(a_5), \Hei_t(a_5)) =(v,5)$ at time $t$ (left fig.)
   moves to $(\Pos_{t+1}(a_5), \Hei_{t+1}(a_5)) =(u,3)$ (middle fig.) or $(v,3)$ (right fig.) at time $t+1$. 
 For an example of the complementary token's move, 
   token $b_5$ places $(\bPos_t(b_5), \bHei_t(b_5)) =(u,3)$ at time $t$, 
   where notice that $\Hei_t(b_5) = \tL+1-\bHei(b_5)$ holds. 
 Then, $b_5$   
   moves to $(\bPos_{t+1}(a_5), \bHei_{t+1}(a_5)) =(u,3)$ (middle fig.) or $(v,3)$ (right fig.) at time $t+1$. }
    \label{fig:loads}
\end{figure*}
 The idea of the token-based analysis 
   is to track the {\em place} and {\em height} for each token in an execution of a load-balancing algorithm, 
 where we assume that 
   every token has a unique ID,  
   tokens on a vertex are stacked in a pile, and 
   tokens are orderly reallocated in each time step 
     by the load-balancing algorithm under a refined description of the procedures. 

 Suppose we have an initial token configuration $\aL_0 \in \mathbb{N}^V$ of $\tL=\sum_{v\in V}\aL_0(v)$ tokens. 
 The $\tL$ tokens have distinct labels (i.e., unique IDs) $a_1, a_2, \ldots, a_{\tL}$. 
 For convenience, let $A$ denote the entire set of tokens, i.e., $A = \{a_1,\ldots,a_{\tL}\}$. 
 Every token $a \in A$ is allocated $(\Pos_0(a),\Hei_0(a)) \in V \times \mathbb{N}_{>0}$, 
  where 
  $\Pos_t(a)$ denotes the vertex on which the token $a$ places at time $t$ ($t=0,1,2,\ldots$), and 
  $\Hei_t(a)$ ($1 \leq \Hei_t(a) \leq \aL_t(v)$) represents 
    the ``height'' of the token $a$ in the ``pile'' of $\aL_t(v)$ tokens at vertex $v = \Pos_t(a)$; 
  thus $\{\Hei_t(a') \mid \Pos_t(a')=v\,(a' \in A)\} = \{1,2,\ldots,\aL_t(v)\}$ must hold for any $v \in V$ 
  (see \cref{fig:loads} for example). 

 Then, we define the procedure to update $(\Pos_t(a),\Hei_t(a))$, 
   meaning that it is a refinement of the random-matching algorithm (on an edge-Markovian graph)
    given in \cref{sec:algorithm}.  
 Suppose 
   that $\aL_t$ is the token configuration at time $t$, and 
   that $M_t$ is the random matching given by the random matching algorithm. 
 Let $\{u,v\}$ be an edge of $M_t$. 
 Without loss of generality, we may assume that $\aL_t(v) \geq \aL_t(u)$. 
 If the token $a$ satisfies $\Hei_t(a) \leq \aL_t(u)$, 
  then the token $a$ stays as it is, i.e., $(\Pos_{t+1}(a),\Hei_{t+1}(a))=(\Pos_t(a),\Hei_t(a))$
   (see e.g., token $a_1$ in \cref{fig:loads}). 
 Suppose $\Hei_t(a) > \aL_t(u)$ (see e.g., token $a_5$ in \cref{fig:loads}), 
   which implies $\Pos_t(a) = v$ since $\aL_t(v) \geq \aL_t(u)$. 
 Then, the token $a$ moves to 
\begin{align*}
    \bigl(\Pos_{t+1}(a),\Hei_{t+1}(a)\bigr)=
    \left\{
    \begin{aligned}
    &(v,\aL_t(u)+k) && (\mbox{if $\Hei_t(a)-\aL_t(u) = 2k-1$} &&(k \in \{1,2,\ldots\}))\\
    &(u,\aL_t(u)+k) && (\mbox{if $\Hei_t(a)-\aL_t(u) = 2k$} &&(k \in \{1,2,\ldots\}))
    \end{aligned} \right.
\end{align*}
  in case (i) of \cref{def:simplebalance} (middle in \cref{fig:loads}), while 
\begin{align*}
    \bigl(\Pos_{t+1}(a),\Hei_{t+1}(a)\bigr)=
    \left\{
    \begin{aligned}
    &(v,\aL_t(u)+k) && (\mbox{if $\Hei_t(a)-\aL_t(u) = 2k$} &&(k \in \{1,2,\ldots\}))\\
    &(u,\aL_t(u)+k) && (\mbox{if $\Hei_t(a)-\aL_t(u) = 2k-1$} &&(k \in \{1,2,\ldots\}))
    \end{aligned} \right.
\end{align*}
  in case (ii) (right in \cref{fig:loads})\footnote{
  When $\aL_t(v)-\aL_t(u)$ is even, cases (i) and (ii) of \cref{def:simplebalance} 
    provides the same configuration $\aL_{t+1}(v)$ and $\aL_{t+1}(v)$ for $v$ and $u$. 
  Concerning token reallocation, 
  either way of (i) and (ii) is fine: 
  Each provides the following property (A1) and (A2), 
  and it does not cause any trouble in the following our analysis. }.  
 It is easy to see that 
   this procedure provides the configuration $\aL_{t+1}$ 
   defined by the random matching algorithm.  
 It is also not difficult to see the following facts. 
\begin{observation}\label{obs:a} 
  The function $\Hei_t$ ($t=0,1,2,\ldots$), 
     which is sequentially provided by the above procedure, 
   has the following two properties. 
\begin{enumerate}[label=(A\arabic*)]
    \item \label{prop:monotone} 
      $\Hei_{t}$ is {\em monotone nonincreasing} with respect to $t$, 
        i.e., $\Hei_{t+1}(a) \leq \Hei_t(a)$ for any token $a \in A$ and any time $t = 0,1,2,\ldots$.
    \item \label{prop:average}
     Suppose $\{v,u\} \in M_t$ and $\aL_t(v)\geq \aL_t(u)$. 
     If a token $a \in A$ satisfies both $\Pos_t(a)=v$ and $\Hei_t(a)\geq \aL_t(u)$, 
     then $\Hei_{t+1}(a)-\aL_t(u)=\left\lceil \frac{\Hei_t(a)-\aL_t(u)}{2}\right \rceil$ holds. 
\end{enumerate}
\end{observation}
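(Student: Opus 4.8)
The plan is to verify \ref{prop:monotone} and \ref{prop:average} directly from the refined update rule; both reduce to an elementary parity computation, so no heavy machinery is needed. First I would note that since $M_t$ is a matching, every vertex belongs to at most one edge of $M_t$, so the moves prescribed for distinct matching edges act on pairwise disjoint sets of tokens and may be analyzed one edge at a time; for a token on a vertex untouched by $M_t$ (or unmatched) the place and height are unchanged, so \ref{prop:monotone} holds with equality and \ref{prop:average} is vacuous.

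Next I would fix a matching edge $\{v,u\}\in M_t$ with $\aL_t(v)\ge\aL_t(u)$. A token $a$ with $\Hei_t(a)\le\aL_t(u)$ does not move, so again \ref{prop:monotone} holds with equality. For a token $a$ with $\Hei_t(a)>\aL_t(u)$ --- which forces $\Pos_t(a)=v$ since $\aL_t(v)\ge\aL_t(u)$ --- put $m\defeq\Hei_t(a)-\aL_t(u)\ge1$ and write $m\in\{2k-1,2k\}$ with $k=\lceil m/2\rceil$. Inspecting both case (i) and case (ii) of the rule, the new height is $\Hei_{t+1}(a)=\aL_t(u)+k$ in every one of the four sub-cases; only the target vertex ($v$ or $u$) depends on the parity of $m$ and on which of (i), (ii) occurs. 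Hence $\Hei_{t+1}(a)-\aL_t(u)=\lceil(\Hei_t(a)-\aL_t(u))/2\rceil$, which is exactly \ref{prop:average}; and since $\lceil m/2\rceil\le m$ for every integer $m\ge1$, we get $\Hei_{t+1}(a)\le\Hei_t(a)$, giving \ref{prop:monotone}.

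I would also record that the procedure genuinely refines the algorithm of \cref{sec:algorithm}. The $\aL_t(u)$ lowest tokens on $v$ and all $\aL_t(u)$ tokens on $u$ retain heights $1,\dots,\aL_t(u)$ on $v$ and $u$ respectively; among the remaining $\aL_t(v)-\aL_t(u)$ tokens on $v$ (those with $m=1,\dots,\aL_t(v)-\aL_t(u)$), in case (i) the $\lceil(\aL_t(v)-\aL_t(u))/2\rceil$ with $m$ odd stay on $v$ at consecutive heights $\aL_t(u)+1,\aL_t(u)+2,\dots$, while the $\lfloor(\aL_t(v)-\aL_t(u))/2\rfloor$ with $m$ even move to $u$; case (ii) swaps these roles. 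Thus $\aL_{t+1}(v)=\aL_t(u)+\lceil(\aL_t(v)-\aL_t(u))/2\rceil=\lceil(\aL_t(v)+\aL_t(u))/2\rceil$ and $\aL_{t+1}(u)=\lfloor(\aL_t(v)+\aL_t(u))/2\rfloor$, matching \cref{def:simplebalance} (with (i) and (ii) coinciding when $\aL_t(v)-\aL_t(u)$ is even, as noted in the footnote there), and $a\mapsto\Hei_{t+1}(a)$ is a bijection from the tokens ending on $v$ onto $\{1,\dots,\aL_{t+1}(v)\}$, and likewise for $u$. The argument is entirely routine; the only part that demands care --- and hence the main obstacle --- is the parity bookkeeping, i.e. keeping straight which tokens land on $v$ versus $u$ in cases (i) and (ii) and handling the even case where the two cases coincide, so that the per-edge local pictures patch into a single globally consistent height assignment.
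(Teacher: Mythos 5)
Your verification is correct and is exactly the direct per-edge case check that the paper has in mind (it states the observation without proof, as "not difficult to see"): in all four subcases of (i)/(ii) the new height is $\aL_t(u)+\lceil m/2\rceil$ with $m=\Hei_t(a)-\aL_t(u)$, which gives (A2), and $\lceil m/2\rceil\le m$ gives (A1), with unmatched or low tokens unchanged. The only boundary worth a word is $\Hei_t(a)=\aL_t(u)$ in (A2), where the token stays put and the identity holds trivially since $\lceil 0/2\rceil=0$; your "does not move" case already covers it.
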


%
%
%

 Next, 
   we introduce a gadget of complementary tokens, 
      the concept of which is similar to the one used in \cite{BFKK19}, 
   in order to use a symmetric argument to simplify our token-based analysis. 
 Let $\bL_t \in \mathbb{N}^V$ for $t=0,1,2,\ldots$ be defined by $\bL_t(v)\defeq \tL-\aL_t(v)$ for all $v\in V$, 
  where we call $\bL_t$ the {\em configuration of complementary tokens} at time $t$. 
 For convenience, 
  let $\btL\defeq \sum_{v\in V}\bL_0(v)=\tL(n-1)$ denote the total number of complementary tokens, and 
   let $\bave\defeq (1/n)\sum_{v\in V}\bL_0(v)=\tL(n-1)/n$ denote its average.

 It is not difficult to see that 
  if the time series $\aL_0,\aL_1,\aL_2,\ldots$ follows the random matching algorithm then 
  the time series $\bL_0,\bL_1,\bL_2,\ldots$ itself also follows the random matching algorithm 
    with exactly the same matchings $M_t$ ($t=0,1,2,\ldots$). 
 Then, we will define the procedure for complementary tokens, 
   which is essentially the same as the procedure for $A$.  
 We assume that complementary tokens also have distinct labels $b_1, b_2, \ldots, b_{\btL}$, and 
   let $B$ denote the entire set of complementary tokens. 
 Every complementary token $b \in B$ is allocated $(\bPos_0(b),\bHei_0(b)) \in V \times \mathbb{N}_{>0}$
 where 
  $\bPos_t(b)$ denotes the vertex on which the token $b$ places at time $t$ ($t=0,1,2,\ldots$), and 
  $\bHei_t(b)$ ($1 \leq \bHei_t(a) \leq \bL_t(v)$) represents 
    the ``height'' of the token $b$ at $v = \bPos_t(b)$. 
 For convenience, 
   we define $\Hei_t(b) \defeq \tL+1-\bHei_t(b)$ for $b \in B$. 
 Then, $\aL(v)+1 \leq \Hei_t(b) \leq \tL$ holds. 
 It looks that complementary tokens are stacked on tokens $\aL(v)$, in the inverse order of $\Hei_t(b)$. 
 Thus, we call  $\bHei_t$ {\em inverted height}. 
  
 Then, we define the procedure of the random-matching algorithm for complementary tokens. 
 Let $\aL_t$ and $\bL_t$ be respectively the configurations of tokens and complementary tokens at time $t$. 
 Suppose  that $M_t$ is a random matching, and $\{u,v\} \in M_t$. 
 Without loss of generality we may assume 
   $\aL_t(v) \geq \aL_t(u)$. 
 Then  $\bL_t(v) \leq \bL_t(u)$. 
 If the token $b$ satisfies $\bHei_t(a) \leq \bL_t(u)$, 
  then the token $b$ stays as it is i.e., $(\bPos_{t+1}(b),\bHei_{t+1}(b))=(\bPos_t(a),\bHei_t(b))$. 
 Suppose $\bHei_t(b) > \bL_t(u)$. 
 Notice that $\bPos(b) = u$. 
 Then, the token $b$ moves to 
\begin{align*}
    \bigl(\bPos_{t+1}(b),\bHei_{t+1}(b)\bigr)=
    \left\{
    \begin{aligned}
    &(v,\bL_t(v)+k) && (\mbox{if $\bHei_t(b)-\bL_t(v)=2k$}  &&(k \in \{1,2,\ldots\})) \\
    &(u,\bL_t(v)+k) && (\mbox{if $\bHei_t(b) -\bL_t(v)=2k-1$} &&(k \in \{1,2,\ldots\}))
    \end{aligned}\right.
\end{align*}
in case (i) of \cref{def:simplebalance} (see also middle in \cref{fig:loads}), while 
\begin{align*}
    \bigl(\bPos_{t+1}(b),\bHei_{t+1}(b)\bigr)=
    \left\{
    \begin{aligned}
    &(v,\bL_t(v)+k) && (\mbox{if $\bHei_t(b)-\bL_t(v)=2k-1$}  &&(k \in \{1,2,\ldots\})) \\
    &(u,\bL_t(v)+k) && (\mbox{if $\bHei_t(b) -\bL_t(v)=2k$} &&(k \in \{1,2,\ldots\}))
    \end{aligned}\right.
\end{align*}
  in case (ii) of \cref{def:simplebalance} (see also right in \cref{fig:loads}). 
 We can see that 
   this procedure provides the configuration $\bL_{t+1}$ 
   defined by the random matching algorithm.  
 The following observation is essentially the same as Observation \ref{obs:a}.  
\begin{observation} \label{obs:b} 
  The function $\bHei_t$ ($t=0,1,2,\ldots$), 
     which is sequentially provided by the above procedure, 
   has the following two properties. 
\begin{enumerate}[label=(B\arabic*)]
    \item \label{prop:monotoneb} 
      $\bHei_{t}$ is {\em monotone nonincreasing} with respect to $t$, 
        i.e., $\bHei_{t+1}(b) \leq \bHei_t(b)$ for any token $b \in B$ and any time $t = 0,1,2,\ldots$.
    \item \label{prop:averageb} 
     Suppose $\{v,u\} \in M_t$ and $\bL_t(v)\leq \bL_t(u)$. 
     If a token $b \in B$ satisfies both $\bPos_t(b)=v$ and $\bHei_t(b)\geq \bL_t(u)$, 
     then $\bHei_{t+1}(b)-\bL_t(u)=\left\lceil \frac{\bHei_t(b)-\bL_t(u)}{2}\right \rceil$ holds. 
\end{enumerate}
\end{observation}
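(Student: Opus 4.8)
The plan is to deduce \cref{obs:b} from \cref{obs:a} using the complementarity symmetry, so that essentially no new computation is needed. The starting point is the elementary identity, valid for any matching edge $\{v,u\}$ with $\aL_t(v)\ge\aL_t(u)$,
\begin{align}
\tL-\left\lceil\tfrac{\aL_t(v)+\aL_t(u)}{2}\right\rceil=\left\lfloor\tfrac{\bL_t(v)+\bL_t(u)}{2}\right\rfloor,\qquad\tL-\left\lfloor\tfrac{\aL_t(v)+\aL_t(u)}{2}\right\rfloor=\left\lceil\tfrac{\bL_t(v)+\bL_t(u)}{2}\right\rceil,
\end{align}
which is immediate from $\bL_t(w)=\tL-\aL_t(w)$. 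This already justifies the remark preceding \cref{obs:b}: the sequence $\bL_0,\bL_1,\dots$ is itself a run of the random matching algorithm with the very same matchings $M_t$, and when the token-tracking procedure of \cref{sec:analytic_framework} is applied to it, the endpoint it singles out as the heavier one is the opposite one, since $\aL_t(v)\ge\aL_t(u)$ is the same as $\bL_t(v)\le\bL_t(u)$.

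Next I would check that the complementary-token procedure described in this section is nothing but the token procedure of \cref{sec:analytic_framework}, read for the complementary run through the relabelling $v\leftrightarrow u$: the inverted height $\bHei_t$ plays the role of the height, $\bL_t$ plays the role of the load configuration, and a complementary token lying above the smaller complementary pile (hence on $u$) is exactly a token of the kind that is split in \cref{sec:analytic_framework}. The one point that needs care is that cases (i) and (ii) of \cref{def:simplebalance} get interchanged in the complementary picture --- this is precisely the swap of $\lceil\cdot\rceil$ and $\lfloor\cdot\rfloor$ in the identity above --- so one must track that swap consistently to confirm the procedure really outputs $\bL_{t+1}$; the parity edge case, namely $\aL_t(v)-\aL_t(u)$ even (equivalently $\bL_t(u)-\bL_t(v)$ even), in which the two cases coincide, is handled exactly as in the footnote to \cref{obs:a}.

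Granting this identification, \ref{prop:monotoneb} and \ref{prop:averageb} are literal transcriptions of \ref{prop:monotone} and \ref{prop:average}. For a self-contained reading of \ref{prop:monotoneb} one notes that a complementary token that is split has its inverted height replaced by the size of the smaller complementary pile plus $k$, where $k$ is the ceiling of half the excess and the excess is at least $1$; hence the new inverted height never exceeds the old one, and equals it only when the excess is exactly $1$. The halving statement \ref{prop:averageb} follows from the same parity bookkeeping. I expect the only real obstacle to be this consistent tracking of the $\lceil\cdot\rceil\leftrightarrow\lfloor\cdot\rfloor$ (equivalently case-(i)-versus-(ii)) correspondence across complementarity; everything else is a direct copy of the argument already carried out for \cref{obs:a}.
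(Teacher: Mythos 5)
Your proposal is correct and follows exactly the route the paper intends: the paper offers no explicit proof of Observation~\ref{obs:b}, merely asserting it is ``essentially the same as Observation~\ref{obs:a}'', and your reduction via $\bL_t(w)=\tL-\aL_t(w)$, the swap of the larger/smaller endpoint, and the interchange of cases (i) and (ii) of \cref{def:simplebalance} is precisely the symmetry being invoked. Your direct check of \ref{prop:monotoneb} (new inverted height $=\bL_t(v)+\lceil\mathrm{excess}/2\rceil\leq\bL_t(v)+\mathrm{excess}$) and of the halving identity in \ref{prop:averageb} is sound, and correctly isolates the only delicate point, namely the consistent $\lceil\cdot\rceil\leftrightarrow\lfloor\cdot\rfloor$ bookkeeping across complementation.
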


 From Observations~\ref{obs:a} and \ref{obs:b}, 
  we can show the following lemma, which forms the basis of our analysis in \cref{sec:proof}.  
 \cref{lem:halved} will be used as $x=\ave$, $\bave$, $\rave$ and $\brave$ in the following sections. 
\begin{lemma}
\label{lem:halved}
 Let $x \in \mathbb{R}$ be an arbitrary real. The following holds for any $t \in \mathbb{N}$. 
\begin{enumerate}[label=(\roman*)]
    \item \label{lab:halved} 
      Let $a \in A$ be a token satisfying $\Hei_t(a) \geq x$. 
      Suppose that the vertex $\Pos_t(a)$ is matched with $v \in V$ in $M_t$, i.e., $v= M_t(\Pos_t(a))$. 
      If $\aL_t(v) \leq x$ then $\Hei_{t+1}(a)-x\leq \left\lceil \frac{\Hei_t(a)-x}{2}\right \rceil$. 
    \item \label{lab:halved_i} 
      Let $b \in B$ be a complementary token satisfying $\bHei_t(b_j)\geq x$. 
      Suppose that the vertex $\bPos_t(b)$ is matched with $u \in V$ in $M_t$, i.e., $u= M_t(\bPos_t(b))$. 
      If $\bL_t(u)\leq x$ then, $\bHei_{t+1}(b)-x \leq \left\lceil \frac{\bHei_t(b)-x}{2}\right \rceil$.  
\end{enumerate}
\end{lemma}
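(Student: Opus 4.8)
The plan is to prove \cref{lem:halved} by reducing each of the two parts to the corresponding part of \cref{obs:a} and \cref{obs:b}, since the lemma is just a ``real-number threshold'' version of properties \ref{prop:average} and \ref{prop:averageb}, which are stated only for the integer threshold $\aL_t(u)$ (resp.\ $\bL_t(u)$) where $u$ is the \emph{smaller-load} partner. The two parts are symmetric, so I will describe part \ref{lab:halved} in detail and only indicate the changes for \ref{lab:halved_i}.

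First I would set $w \defeq \Pos_t(a)$ and let $v = M_t(w)$ be its partner. There are two cases depending on which of $w, v$ carries the larger load. If $\aL_t(w) \leq \aL_t(v)$, then $w$ plays the role of ``$u$'' in \ref{prop:average}: every token on $w$ stays put, so $\Hei_{t+1}(a) = \Hei_t(a)$, and I must check $\Hei_t(a) - x \leq \lceil (\Hei_t(a)-x)/2 \rceil$; this holds whenever $\Hei_t(a) - x \geq 0$, which is given, using that $\lceil y/2\rceil \geq y/2 \geq \cdots$ — actually one needs $y \le \lceil y/2 \rceil$ which is false for $y \ge 2$, so this case needs the extra hypothesis $\aL_t(v) \le x$ together with $\Hei_t(a) \le \aL_t(w) \le \aL_t(v) \le x$, forcing $\Hei_t(a) - x \le 0$ and hence $\lceil(\Hei_t(a)-x)/2\rceil \ge \Hei_t(a)-x$ trivially. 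In the other case $\aL_t(w) > \aL_t(v)$, so $v$ is the smaller-load vertex ($\aL_t(v) = \aL_t(u)$ in the notation of \ref{prop:average}) and $w$ is the larger one. If $\Hei_t(a) \le \aL_t(v)$ the token again stays and the same trivial bound applies; if $\Hei_t(a) > \aL_t(v)$ then \ref{prop:average} gives exactly $\Hei_{t+1}(a) - \aL_t(v) = \lceil (\Hei_t(a) - \aL_t(v))/2 \rceil$.

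The remaining work is then the elementary inequality: given $\aL_t(v) \le x \le \Hei_t(a)$ and $\Hei_{t+1}(a) = \aL_t(v) + \lceil (\Hei_t(a)-\aL_t(v))/2\rceil$, deduce $\Hei_{t+1}(a) - x \le \lceil (\Hei_t(a)-x)/2\rceil$. Writing $d \defeq \aL_t(v)$, $h \defeq \Hei_t(a)$, this amounts to $\lceil (h-d)/2 \rceil - (x - d) \le \lceil (h-x)/2 \rceil$, i.e.\ $\lceil (h-d)/2\rceil \le \lceil(h-x)/2\rceil + (x-d)$. Since $d \le x$, set $\delta \defeq x - d \ge 0$; then I need $\lceil (h-d)/2 \rceil \le \lceil (h-d-2\delta)/2 + \delta \rceil \cdot$ — more directly, $\lceil (y+2\delta)/2 \rceil = \lceil y/2 \rceil + \delta$ when $\delta$ is such that $2\delta$ is an integer, but $x$ is a general real, so I should instead use the monotonicity and subadditivity-type bound $\lceil (h-d)/2 \rceil = \lceil ((h-x) + (x-d))/2 \rceil \le \lceil (h-x)/2 \rceil + \lceil (x-d)/2 \rceil + 1$ — this is too lossy. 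The cleaner route: since $d \le x$ we have $h - d \ge h - x \ge 0$, and the map $y \mapsto \lceil y/2 \rceil - y$ is nonincreasing in $y$ on $y \ge 0$ only in a weak sense; the correct and simplest statement is that for any reals $a' \le b'$ with $a' \ge 0$, $\lceil b'/2 \rceil - \lceil a'/2\rceil \le b' - a'$, applied with $a' = h - x$, $b' = h - d$, which rearranges to exactly what is needed. This inequality holds because $\lceil b'/2\rceil - \lceil a'/2 \rceil \le (b'/2 + 1) - a'/2 \le$ hmm — it actually follows from $\lceil\cdot\rceil$ being $1$-Lipschitz-up-to-$\tfrac12$ and the factor $\tfrac12$; I will verify it by the two-case split on parities of the relevant floors, which is a one-line check.

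I expect the \textbf{main obstacle} to be purely bookkeeping: correctly matching up which vertex is ``$v$'' and which is ``$u$'' in the statement versus in \ref{prop:average}/\ref{prop:averageb} (the lemma's $v$ is the partner of the token's vertex and may be either the larger- or smaller-load endpoint, whereas the observations always name the smaller-load endpoint ``$u$''), and handling the degenerate sub-case $\Hei_t(a) \le \aL_t(v)$ where the token does not move at all. Once the case analysis is laid out, part \ref{lab:halved_i} is obtained verbatim by replacing $\Hei$ with $\bHei$, $\aL$ with $\bL$, $A$ with $B$, reversing the load inequality (using $\aL_t(v) \ge \aL_t(u) \iff \bL_t(v) \le \bL_t(u)$), and invoking \ref{prop:averageb} in place of \ref{prop:average}; no new ideas are required.
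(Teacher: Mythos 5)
Your overall route is the same as the paper's: reduce to property (A2) of Observation~\ref{obs:a} (resp.\ (B2) of Observation~\ref{obs:b}) to get the exact value $\Hei_{t+1}(a) = \aL_t(v) + \lceil(\Hei_t(a)-\aL_t(v))/2\rceil$, and then finish with elementary ceiling arithmetic. Your extra bookkeeping about which endpoint carries the larger load is harmless but mostly unnecessary: the hypotheses already force $\aL_t(\Pos_t(a)) \ge \Hei_t(a) \ge x \ge \aL_t(v)$, so $\Pos_t(a)$ is automatically the (weakly) larger endpoint and your Case A is degenerate.

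The one genuine problem is the inequality you rest the final step on: ``for any reals $a' \le b'$ with $a' \ge 0$, $\lceil b'/2\rceil - \lceil a'/2\rceil \le b' - a'$'' is \emph{false} as stated --- take $a'=0$, $b'=1/2$, which gives $1 \le 1/2$. What saves the argument is that in your application $b' = \Hei_t(a) - \aL_t(v)$ is a nonnegative \emph{integer} (only $a' = \Hei_t(a)-x$ is a general real), and for integer $b'$ the inequality does hold: if $b'-a' \ge 1$, then $\lceil b'/2\rceil - \lceil a'/2\rceil \le (b'+1)/2 - a'/2 = (b'-a')/2 + 1/2 \le b'-a'$; if $b'-a' < 1$, then $a'/2 \in (b'/2-1/2,\, b'/2]$, and a parity check on $b'$ shows $\lceil a'/2\rceil = \lceil b'/2\rceil$, so the left-hand side is $0 \le b'-a'$. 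This verification --- which is essentially the paper's own case split on the parity of $\Hei_t(a)-\aL_t(v)$ and on whether $\aL_t(v)-x \le -1$ --- must be carried out with the integrality of $b'$ made explicit; the ``$1$-Lipschitz'' heuristic you gesture at does not close it, and the ``one-line check on parities'' you defer is exactly where the content lives. With that repaired, the rest, including the symmetric part (ii), is fine.
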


\begin{proof}
We prove \ref{lab:halved}. 
To begin with, we see that  
\begin{align}
    \Hei_{t+1}(a)-x
    &=\Hei_{t+1}(a)-\aL_t(v)+\aL_t(v)-x \nonumber\\
    &=\left\lceil \frac{\Hei_{t}(a)-\aL_t(v)}{2}\right \rceil+\aL_t(v)-x
    \label{eq:halved1}
\end{align}
 holds by Observations~\ref{obs:a}. 
 If $\Hei_{t}(a)-\aL_t(v)$ is even then 
\begin{align*}
    \eqref{eq:halved1}
    &=\frac{\Hei_{t}(a)-\aL_t(v)}{2}+\aL_t(v)-x \\
    &=\frac{\Hei_{t}(a) + \aL_t(v)-2x}{2} \\
    &= \frac{\Hei_{t}(a)-x}{2}+\frac{\aL_t(v)-x}{2}\\ 
    &\leq \frac{\Hei_{t}(a)-x}{2} &&(\mbox{since $\aL_t(v) \leq x$ by hypothesis}) \\
    &\leq \left\lceil \frac{\Hei_{t}(a)-x}{2} \right\rceil 
\end{align*}
  holds, and we obtain the claim in this case. 

 Suppose $\Hei_{t}(a)-\aL_t(v)$ is odd. 
 Then, we have
\begin{align}
    \eqref{eq:halved1}
    &= \frac{\Hei_{t}(a)-\aL_t(v)}{2}+\frac{1}{2}+\aL_t(v)-x \nonumber\\
    &=\frac{\Hei_{t}(a) + \aL_t(v)-2x+1}{2} \nonumber\\
    &= \frac{\Hei_{t}(a)-x}{2}+\frac{\aL_t(v)-x+1}{2} 
    \label{eq:halved2}
\end{align}
 holds. 
 If $\aL_t(v)-x \leq -1$ then 
   $\eqref{eq:halved2} \leq \frac{\Hei_{t}(a)-x}{2} \leq \left\lceil \frac{\Hei_{t}(a)-x}{2} \right\rceil$ 
   holds, and we obtain the claim in this case. 

 The remaining case, that is 
   $\Hei_{t}(a)-\aL_t(v)$ is odd and  $-1 < \aL_t(v)-x \leq 0$. 
 Since $-1 < \aL_t(v)-x \leq 0$, 
\begin{align}
   \left\lceil \frac{\Hei_{t}(a)-\aL_t(v)-1}{2}\right  \rceil
    < \left\lceil \frac{\Hei_{t}(a)-x}{2}\right \rceil
   \leq \left\lceil \frac{\Hei_{t}(a)-\aL_t(v)}{2}\right  \rceil
    \label{eq:halved3}
\end{align}
 hold. 
 Since $\Hei_{t}(a)-\aL_t(v)$ is odd, 
  the strict inequality and the integrality of \eqref{eq:halved3} imply 
\begin{align}
   \left\lceil \frac{\Hei_{t}(a)-\aL_t(v)}{2}\right  \rceil
   = \left\lceil \frac{\Hei_{t}(a)-x}{2}\right \rceil
    \label{eq:halved4}
\end{align}
holds. Then, 
\begin{align*}
    \eqref{eq:halved1}
    &= \left\lceil \frac{\Hei_{t}(a)-x}{2}\right \rceil +\aL_t(v)-x
      &&(\mbox{by \eqref{eq:halved4}})  \\
    &\leq \left\lceil \frac{\Hei_{t}(a)-x}{2} \right\rceil 
      &&(\mbox{since $\aL_t(v) \leq x$ by hypothesis}) 
\end{align*}
 holds. We obtain (i). 
 The proof of (ii) is similar. 
\end{proof}

\subsection{Proof strategy}\label{sec:proof-sketch}
 Roughly speaking, 
   \cref{thm:RM_general} claims 
  $\ave-2 < \aL_{\Tbal}(v) < \ave+2$ for any $v \in V$ w.h.p. 
  (see \cref{lem:logic} in \cref{sec:final} in precise).  
 We prove the claim in two phases: 
 In Phase I (see \cref{sec:phasei}), we prove 
   at least one of 
     $ \ave-1 \leq \min_{v \in V} \aL_{T_1}(v) $ or 
     $ \max_{v \in V}\aL_{T_1}(v) \leq  \ave+1$ 
    holds w.h.p. for a sufficiently large $T_1$ (see \cref{lem:PhaseI}, for more detail). 
 In Phase II (see \cref{sec:phaseii}), we prove that 
  if $ \ave-1 \leq \min_{v \in V}\aL_{T_1}(v) $ 
    then $ \max_{v \in V} \aL_{T_1+T_2}(v) < \ave+2$ also holds w.h.p., as well as 
  if $ \max_{v \in V}\aL_{T_1}(v) \leq \ave+1 $ 
    then $ \ave-2 < \min_{v \in V} \aL_{T_1+T_2}(v)$ 
    also holds w.h.p. for a sufficiently large $T_2$ (see \cref{lem:PhaseII}, for more detail). 
 We remark that
   $\min_{v \in V} \aL_t(v)$ and $\max_{v \in V} \aL_t(v)$ are respectively 
    monotone non-decreasing/non-increasing with respect to $t$, 
  which imply $ \ave-2 < \aL_{T_1+T_2}(v) < \ave+2$ holds for any $v \in V$.

We prove the claims by the token-based analysis introduced in \cref{sec:analytic_framework}, 
using \cref{lem:halved,lem:prob_RM}. 
In the following arguments, 
  we assume that 
   an edge-Markovian evolving graph $\bmG(n,p,q)$ satisfies 
   $\max\{p,1-q\} \geq \RG/n$ for a constant $\RG > 0$, and 
  that a random matching algorithm satisfies the $\fair$ condition \cref{def:matching}, 
 according to the hypothesis of \cref{thm:RM_general}. 
    For a technical reason, 
   we also assume $\Delta \geq 2$ (recall \cref{def:Delta}); 
   otherwise it is already balanced, i.e., $\Tbal=0$, and the theorem is trivial.

\subsection{Phase I}
\label{sec:phasei}
Let us begin with Phase I analysis. This section establishes the following lemma. 
\begin{lemma}[Phase I]
\label{lem:PhaseI}
 Let $T_1 \geq \frac{36\rate}{c^*\fair}\log(\frac{\Delta n}{\varepsilon})$
   for $\varepsilon$ ($0<\varepsilon<1/4$).
Then,  
\begin{align}
 \Pr\left[[\max_{v\in V}\aL_{T_1}(v)\leq \ave+1] \vee [\max_{v\in V}\bL_{T_1}(v)\leq \bave+1]\right] 
   \geq 1-\varepsilon^2
\label{lem:PhaseIeq}
\end{align}
    holds.\footnote{
  We here just remark that  
 $[\max_{v\in V}\bL(v)\leq \bave+1]$ is equivalent to $[\min_{v \in V}L(v) \geq  \ave-1]$. 
 See Observation~\ref{obs:PhaseI} for detail. 
   }
\end{lemma}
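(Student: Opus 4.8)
The plan is to run the token-based analysis of \cref{sec:analytic_framework} on the \emph{topmost} tokens and, dually, on the topmost \emph{complementary} tokens, and to show that in every execution at least one of these two families of heights collapses to (essentially) its average.

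First I would fix $x \defeq \lfloor \ave \rfloor$ and, on the complementary side, $\overline{x} \defeq \lfloor \bave \rfloor$ (note $\overline{x} = \tL - \lceil \ave \rceil$). Call a token $a$ \emph{active} at time $t$ if $\Hei_t(a) \ge x+2$, and a complementary token $b$ active if $\bHei_t(b) \ge \overline{x}+2$. By part (i) of \cref{lem:halved}, if an active token $a$ has $\Pos_t(a)$ matched in $M_t$ with a vertex of the low set $U^{\mathrm{low}}_t \defeq \{v : \aL_t(v) \le x\}$, then $\Hei_{t+1}(a)-x \le \lceil (\Hei_t(a)-x)/2 \rceil$; since heights never increase (\cref{obs:a}) and $\Hei_0(a) \le \max_v \aL_0(v) \le \ave+\Delta$, after $\lceil \log_2(\Delta+1) \rceil$ such ``halving events'' the token reaches height $\le x+1 \le \ave+1$ and stays there. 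Symmetrically, by part (ii) of \cref{lem:halved} and \cref{obs:b}, an active complementary token $b$ whose vertex is matched with a vertex of the high set $U^{\mathrm{high}}_t \defeq \{v : \bL_t(v) \le \overline{x}\} = \{v : \aL_t(v) \ge \lceil \ave \rceil\}$ has its inverted height halved towards $\overline{x}$. The point of introducing both families is the elementary fact that $U^{\mathrm{low}}_t \cup U^{\mathrm{high}}_t = V$, so $|U^{\mathrm{low}}_t| + |U^{\mathrm{high}}_t| \ge n$; hence at every step at least one of the two sets has size $\ge n/2$, and over $T_1$ steps either there are $\ge T_1/2$ steps with $|U^{\mathrm{low}}_t| \ge n/2$, or there are $\ge T_1/2$ steps with $|U^{\mathrm{high}}_t| \ge n/2$.

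Consider the first alternative. For a fixed token $a$, conditioning on the history $\mathcal{E}_t$ up to time $t$ fixes both $\Pos_t(a)$ and $U^{\mathrm{low}}_t$, so \cref{lem:prob_RM} (with this fixed vertex as $v$ and $U_t = U^{\mathrm{low}}_t$) gives that $\Pos_t(a)$ is matched into $U^{\mathrm{low}}_t$ — i.e.\ a halving event for $a$ occurs — with conditional probability at least $\frac{c_* \fair}{\rate} \cdot \frac{|U^{\mathrm{low}}_t|}{n}$, which is $\ge \frac{c_*\fair}{2\rate}$ whenever $|U^{\mathrm{low}}_t| \ge n/2$. Therefore, on the event that there are $\ge T_1/2$ such steps, the number of halving events accumulated by $a$ stochastically dominates $\mathrm{Bin}\!\left(T_1/2,\ c_*\fair/(2\rate)\right)$, whose mean is at least $\frac{c_*\fair T_1}{4\rate} \ge 9 \log(\Delta n/\varepsilon)$ by the choice of $T_1$; a multiplicative Chernoff bound (in its form for adapted Bernoulli sequences) shows this count is below $\lceil \log_2(\Delta+1) \rceil$ with probability at most $\varepsilon^2/(n\Delta)$. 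The number of tokens that are ever active equals $\sum_v (\aL_0(v)-x-1)^+ \le \tfrac12 \sum_v |\aL_0(v)-\ave| \le \tfrac12 n\Delta$, so a union bound over them shows: in this alternative, every token has height $\le x+1$ at time $T_1$ — that is, $\max_v \aL_{T_1}(v) \le \ave+1$ — except with probability at most $\varepsilon^2/2$. The second alternative is handled identically with complementary tokens and $\overline{x}$, giving $\max_v \bL_{T_1}(v) \le \bave+1$ except with probability at most $\varepsilon^2/2$. Since one of the two alternatives always occurs, a final union bound yields \eqref{lem:PhaseIeq}.

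The main obstacle — and the reason the complementary-token gadget is needed — is that one cannot guarantee $|U^{\mathrm{low}}_t| = \Omega(n)$ at every step (a single wildly overloaded vertex is consistent with all others being exactly balanced), so a naive ``the maximum load halves'' argument stalls; covering $V$ by the low and high sets converts this into an unconditional either/or. Two secondary technical points require care: the ``$\ge T_1/2$ rich steps'' event is adaptively determined rather than a fixed set of time steps, so the Chernoff estimate must be set up as a stochastic-domination / supermartingale bound rather than a statement about an honest Binomial; and the union bound must be taken over tokens rather than vertices, but since the number of active tokens is only $\Order(n\Delta)$ this costs merely an additive $\Order(\log(n\Delta))$ inside the logarithm, which the constant $36$ absorbs.
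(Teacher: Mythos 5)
Your proof is correct, but it takes a genuinely different route from the paper's. The paper also reduces Phase I to accumulating enough halving events via \cref{lem:halved} and \cref{lem:prob_RM}, but it resolves the dichotomy $|S(\aL_t)|+|\overline{S}(\aL_t)|\geq n$ \emph{per step and per pair}: for every pair $(a,b)$ of a token and a complementary token it defines a single indicator $Y_t(a,b)$ of ``$a$'s vertex is matched into the low set \emph{or} $b$'s vertex is matched into the high set,'' which enjoys the unconditional, history-independent bound $\Pr[Y_t(a,b)=1\mid\mathcal{E}_t]\geq c_*\fair/(2\rate)$ at \emph{every} step; it then applies the domination lemma (\cref{lem:domination}) over the fixed horizon $T_1$, pays a union bound over at most $(\Delta n)^2$ pairs, and extracts the disjunction at the end from $\bigwedge_{(a,b)}[\phi(a)\vee\psi(b)]\Rightarrow[\bigwedge_a\phi(a)]\vee[\bigwedge_b\psi(b)]$. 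You instead resolve the dichotomy \emph{over time} (at least half the steps are low-rich, or at least half are high-rich) and treat each token separately; this keeps the union bound at $\Order(\Delta n)$ tokens but pushes the difficulty into exactly the issue you flag, namely that the set of rich steps is random. Your fix is viable: richness of step $t$ is determined by $\aL_t$ and hence is $\mathcal{E}_t$-measurable, so you can enumerate the low-rich steps by stopping times $\tau_1<\tau_2<\cdots$, apply \cref{lem:domination} along that subsequence, and conclude the joint bound $\Pr[\,N_a<k\ \wedge\ B_1\,]\leq\Pr[\mathrm{Bin}(T_1/2,\rho)<k]$ where $B_1$ is the ``many low-rich steps'' event --- this joint form is the correct formalization, since literal stochastic domination ``conditional on $B_1$'' is not meaningful. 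The paper's pairing trick exists precisely to sidestep this machinery. Two minor remarks: your count of $\lceil\log_2(\Delta+1)\rceil$ halving events is valid (the recursion $g\mapsto\lceil g/2\rceil$ contracts $g-1$ by a factor of $2$) and is in fact sharper than the paper's $3\log_2\Delta$; and your per-token failure bound of $\varepsilon^2/(n\Delta)$ requires the Chernoff exponent to exceed $2$, which it does since the mean $9\log(\Delta n/\varepsilon)$ is at least roughly four times the threshold, so the constant $36$ suffices.
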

\begin{proof}
 We prove the claim by the token-based analysis given in \cref{sec:analytic_framework}. 
 For convenience, 
  let $A^+=\{a \in A \mid \Hei_0(a) > \ave+1\}$ and 
      $B^+=\{b \in B \mid \bHei_0(b) > \bave+1\}$. 
 Firstly, we remark that 
   $\max_{a \in A^+}\Hei_t(a) \leq \ave+1$ implies (if and only if, in fact)
   $\max_{v\in V}\aL_t(v) \leq \ave+1$ 
  since $\Hei_t(a)$ is monotone non-increasing with respect to $t$,  
    by Observation~\ref{obs:a} (A1). 
 Remark $\max_{b \in B^+}\bHei_t(b) \leq \bave+1$ implies $\max_{v\in V}\bL_t(v) \leq \bave+1$ as well.  
 Thus, our desired event is rephrased by 
\begin{align}
&\left[[\max_{v\in V}\aL_{T_1}(v)\leq \ave+1] \vee [\max_{v\in V}\bL_{T_1}(v)\leq \bave+1]\right] \nonumber\\
&\Leftrightarrow
  \left[\max_{a \in A^+}[\Hei_{T_1}(a)\leq \ave+1] \vee 
  \max_{b \in B^+}[\bHei_{T_1}(b)\leq\bave+1]\right] \nonumber\\
&\Leftrightarrow
  \left[\bigwedge_{a \in A^+}[\Hei_{T_1}(a)- \ave\leq1] \vee 
  \bigwedge_{b \in B^+}[\bHei_{T_1}(b) - \bave \leq1]\right] \nonumber\\
&\Leftarrow
  \bigwedge_{(a,b) \in A^+ \times B^+}
  \left[[\Hei_{T_1}(a)-\ave\leq 1] \vee [\bHei_{T_1}(b)-\bave\leq 1]\right]
\label{tmp20220523b}
\end{align}
  where the last converse implication is for some technical reason of the arguments below\footnote{
  We are going to use a dichotomy in \cref{eq:W_lower}, below. 
  }. 

 Roughly speaking, \cref{lem:halved} claim that 
  $\Hei_{t+1}(a)-\ave$ is reduced to almost a half of $\Hei_t(a)-\ave$ 
  when the vertex $\Pos_t(a)$ is matched with a vertex $v$ satisfying $\aL_t(v)\leq \ave$ in $M_t$, 
   in case that  $\Hei_t(a)-\ave >0$.  
 For convenience,  let 
  $S(\aL_t) \defeq \{v\in V \mid \aL_t(v)\leq \ave\}$ and 
  $\overline{S}(\aL_t) \defeq \{v\in V \mid \aL_t(v)\geq \ave\}=\{v\in V \mid \bL_t(v)\leq \bave\}$ 
  for each time $t=0,1,2,\ldots$. 
 Then, let 
\begin{align}
Y_t(a,b) = \begin{cases} 
  1 & (\mbox{if } [M_t(\Pos_t(a))\in S(\aL_t)] \vee [M_t(\Pos_t(b)) \in \overline{S}(\aL_t)]) \\
  0 & (\mbox{otherwise}) 
\end{cases}
\end{align}
 for $(a,b) \in A^+ \times B^+$, 
  where we remark that $[M_t(\Pos_t(a))\in S(\aL_t)]$ means 
    ``$\exists u \in  S(\aL_t)$ such that $\{u,v\} \in M_t$ where $v=\Pos_t(a)$.''
 In fewer words,  $Y_t(a,b)$ denotes the indicator random variable 
  of a desired matching at time $t$. 
 To circumvent the effect of ceiling function in \cref{lem:halved}, 
  we remark a fact  that 
    $\left\lceil\frac{\lceil \frac{x}{2} \rceil}{2}\right\rceil \leq \frac{x}{2}$ holds\footnote{
  Proof: 
  To begin with,  
           $\left\lceil\frac{\lceil \frac{x}{2} \rceil}{2}\right\rceil 
         < \left\lceil\frac{\frac{x}{2}+1}{2}\right\rceil
         = \lceil\frac{x+2}{4}\rceil$
         holds. 
 Since both of the left-hand-side and the right-hand-side are integers, 
  the strict inequality implies 
           $\left\lceil\frac{\lceil \frac{x}{2} \rceil}{2}\right\rceil \leq \lceil\frac{x+2}{4}\rceil-1$. 
 Next, 
         $\lceil\frac{x+2}{4}\rceil
         < \frac{x+2}{4}+1
         =\frac{x+6}{4}
         =\frac{x}{2} + \frac{-x+6}{4} 
         \leq \frac{x}{2} + 1$
         where the last inequality follows the assumption $x \geq 2$.  
 Now it is not difficult to see that 
           $\left\lceil\frac{\lceil \frac{x}{2} \rceil}{2}\right\rceil \leq \frac{x}{2}$. 
    } for $x \geq 2$, 
  which implies 
   $\Hei(a)-\ave$ is reduced to half or less if the event $[M_t(\Pos_t(a))\in S(\aL_t)]$ occurs TWICE. 
 Clearly,  
    $\lceil \frac{x}{2} \rceil \leq 1$ holds for $1 \leq x  \leq2$, 
    thus we get $\Hei(a)_t-\ave \leq 1$ 
      if we got the event $[M_t(\Pos_t(a))\in S(\aL_t)]$ 
      at most  $2\log_2 \Delta + 1 \leq 3 \log_2 \Delta$ times\footnote{
    Let $y_{2i} = \frac{x}{2^i}$ for $i=1,2,\ldots$, then $y_{2(\lceil \log_2 x \rceil -1)} = \frac{2x}{2^{\lceil \log_2 x \rceil}} \leq 2$. 
    Note $2(\lceil \log_2 x \rceil -1) + 1 = 2\lceil \log_2 x \rceil -1 \leq  2 \log_2 x + 1$. 
   }${^,}$\footnote{
       Here we use the technical assumption that $\Delta \geq 2$ 
       for the inequality $2\log_2 \Delta + 1 \leq 3 \log_2 \Delta$. 
       },  
   where we remark $\Hei_0(a)-\ave \leq  \max_{v\in V}\aL_0(v)-\min_{v\in V}\aL_0(v) =  \Delta$. 
 $[\bHei_{T_1}(b)-\bave\leq 1] \leq 1$ is as well. 
 These imply that 
   it is sufficient to get $[\Hei_{T_1}(a)-\ave\leq 1] \vee [\bHei_{T_1}(b)-\bave\leq 1]$ that 
    $\sum_{t=0}^{T_1-1}Y_t(a,b) \geq 6 \log_2 \Delta$ holds.
In summary, we obtain 
\begin{align}
&\Pr\left[[\max_{v\in V}\aL_T(v)\leq \ave+1] \vee [\max_{v\in V}\bL_T(v)\leq \bave+1]\right] \nonumber\\
&\geq
\Pr\left[
\bigwedge_{(a,b) \in A^+ \times B^+}
\left[[\Hei_{T_1}(a)-\ave\leq 1] \vee [\bHei_{T_1}(b)-\bave\leq 1]\right] 
 \right] 
 &&(\mbox{by \cref{tmp20220523b}})
 \nonumber\\
&\geq
\Pr\left[\bigwedge_{(a,b) \in A^+ \times B^+}
\left[\sum_{t=0}^{T_1-1}Y_t(a,b)\geq 6 \log_2 \Delta \right]\right]  
&&\left(\begin{array}{l}\mbox{by \cref{lem:halved} with the}\\\mbox{above argument}\end{array}\right)
\nonumber\\
&=
1-\Pr\left[\neg\bigwedge_{(a,b) \in A^+ \times B^+}
\left[\sum_{t=0}^{T_1-1}Y_t(a,b)\geq 6 \log_2 \Delta \right]\right]\nonumber\\
&=
1-\Pr\left[\bigvee_{(a,b) \in A^+ \times B^+}
\left[\sum_{t=0}^{T_1-1}Y_t(a,b) < 6 \log_2 \Delta \right]\right]\nonumber\\
&\geq
1-\sum_{(a,b) \in A^+ \times B^+}
\Pr\left[\sum_{t=0}^{T_1-1}Y_t(a,b) <6 \log_2 \Delta \right]
&&(\mbox{union bound}) 
    \label{eq:W_lower2}
\end{align}
and the remaining task is to prove \eqref{eq:W_lower2} is enough large. 

 To evaluate $\Pr[\sum_{t=0}^{T_1-1}Y_t(a,b) <6 \log_2 \Delta]$,  
   we estimate the probability of $Y_t(a,b)=1$ for $t=0,\ldots,T_1-1$. 
 Notice that the event [$Y_t(a,b)=1$] definitely depends 
  on the history of the execution $\mathcal{E}_t$.  
 We give a lower bound of the probability of $Y_t(a,b)=1$ independent of $\mathcal{E}_t$. 
 To be precise, 
\begin{align}
    &\Pr\left[Y_t(a,b)=1 \,\middle|\, \mathcal{E}_t \right] \nonumber \\
    &= \Pr\left[ 
       [M_t(\Pos_t(a)) \in S(\aL_t) ]\vee [M_t(\Pos_t(b)) \in \overline{S}(\aL_t)] 
       \,\middle|\, \mathcal{E}_t
      \right]\nonumber \\
    &\geq \max\left\{
         \Pr\left[M_t(\Pos_t(a)) \in S(\aL_t)  \,\middle|\, \mathcal{E}_t\right],
         \Pr\left[M_t(\Pos_t(b)) \in \overline{S}(\aL_t) \,\middle|\, \mathcal{E}_t\right]
    \right\}\nonumber \\
    &\geq \frac{c_*\fair}{\rate} 
      \max\left\{\E\left[\frac{|S(\aL_t)|}{n} \,\middle|\, \mathcal{E}_t\right],\, 
                     \E\left[\frac{|\overline{S}(\aL_t)|}{n} \,\middle|\, \mathcal{E}_t\right]\right\} 
     \hspace{4em}(\mbox{by \cref{lem:prob_RM}}) 
     \label{ineq:prob_RM}\\
    &\geq  \frac{c_*\fair}{2\rate} 
     \hspace{4em}(\mbox{because $|S(\aL_t)|+|\overline{S}(\aL_t)| \geq n$, by the definitions}) 
    \label{eq:W_lower}
\end{align}
 hold\footnote{
  We emphasize that the inequality \cref{ineq:prob_RM} (and similarly \cref{eq:Z_lower2}, appearing later) 
    is the heart of the analysis of the paper: 
 Due to the edge-Markovian model, 
  $M_t(\Pos_t(a))$ and $S(\aL_t)$ (or $M_t(\Pos_t(b))$ and $\overline{S}(\aL_t)$, as well) 
    may have a correlation through the history $\mathcal{E}_t$, while  
  \cref{lem:prob_RM} proves a history-independent lower-bound. 
 }.

 Then, we evaluate $\Pr[\sum_{t=0}^{T_1-1}Y_t(a,b) <6 \log_2 \Delta]$. 
 For convenience, 
 let $Z_t$ ($t=0,1,\ldots,T_1-1$) be independent binary random variables 
   such that $\Pr[Z_t=1]= \frac{c_*\fair}{2\rate}$,  and 
 let $Z=\sum_{t=0}^{T_1-1}Z_t$. 
 Then, it is not difficult to see that 
\begin{align}
  \Pr\left[\sum_{t=0}^{T_1-1}Y_t(a,b) < 6\log_2 \Delta  \right] 
  \leq \Pr\left[\sum_{t=0}^{T_1-1}Y_t(a,b)\leq 6\log_2 \Delta \right] 
  \leq \Pr\left[Z\leq 6 \log_2 \Delta \right] 
\label{tmp20220604a}
\end{align}
 holds.
Note that 
\begin{align}
 \E[Z] 
  &= T_1 \Pr[Z_t=1] \nonumber \\ 
  &\geq\frac{36\rate}{c_*\fair}\log(\Delta n / \varepsilon) \cdotp \frac{c_*\fair}{2\rate} 
    &&(\mbox{since $T_1 \geq \frac{36\rate}{c_*\fair}\log(\Delta n / \varepsilon)$})\nonumber\\ 
  & \geq 18\log(\Delta n/\varepsilon) \label{tmp20220524a} \\
  & \geq 12\log_2 \Delta   &&(\mbox{since $n \geq 1$, $\varepsilon<1$ and $\log_2{\rm e} \leq 1.5$})  \label{tmp20220524b}
\end{align}
holds. 
Then, we have
\begin{align}
  \Pr\left[\sum_{t=0}^{T_1-1}Y_t(a,b)\leq 6\log_2 \Delta \right] 
  &\leq \Pr\left[Z\leq 6\log_2 \Delta \right] &&(\mbox{by \eqref{tmp20220604a}})\nonumber\\
  &\leq \Pr\left[Z\leq \frac{\E[Z]}{2}\right] &&(\mbox{by \eqref{tmp20220524b}})\nonumber\\
  &\leq \exp\left(-\frac{(\frac{1}{2})^2\E[Z]}{2}\right) &&(\mbox{by \cref{lem:Chernoff} (ii)})\nonumber\\
  &= \exp\left(-\frac{\E[Z]}{8}\right) \nonumber\\
  &\leq \exp\left(-2\log (\Delta n / \varepsilon)\right)&&(\mbox{by \eqref{tmp20220524a}})\nonumber\\
  &= \left(\frac{\varepsilon}{\Delta n}\right)^2. 
    \label{eq:number_of_events}
\end{align}

Finally, 
\begin{align*}
\eqref{eq:W_lower2}
&=
1-\sum_{(a,b) \in A^+ \times B^+}
\Pr\left[\sum_{t=0}^{T_1-1}Y_t(a,b) <6 \log_2 \Delta \right] \\
&\geq
1-\sum_{(a,b) \in A \times B}\left(\frac{\varepsilon}{\Delta n}\right)^2 &&(\mbox{by \eqref{eq:number_of_events}})\\
&\geq
1-(\Delta n)^2\left(\frac{\varepsilon}{\Delta n}\right)^2 \\
&=
1-\varepsilon^2
\end{align*}
and we obtain the claim. 
\end{proof}

Before going to Phase II, we give the following remark concerning \cref{lem:PhaseI}. 
\begin{observation}
\label{obs:PhaseI}
 About the left-hand-side of \eqref{lem:PhaseIeq} in \cref{lem:PhaseI}, 
\begin{align}
 &\Pr\left[[\max_{v\in V}\aL_{T_1}(v)\leq \ave+1] \vee [\max_{v\in V}\bL_{T_1}(v)\leq \bave+1]\right] \nonumber\\
 &=\Pr\left[[\max_{v\in V}\aL_{T_1}(v)\leq \ave+1] \vee [\min_{v\in V}\aL_{T_1}(v)\geq \ave-1]\right] \label{obs:PhaseIb}\\
 &=\Pr\left[[\min_{v\in V}\bL_{T_1}(v)\geq \bave-1] \vee [\min_{v\in V}\aL_{T_1}(v)\geq \ave-1]\right] \label{obs:PhaseIc}
\end{align}
hold. 
\end{observation}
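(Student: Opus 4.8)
The plan is to verify the two claimed identities \eqref{obs:PhaseIb} and \eqref{obs:PhaseIc} by unwinding the definition $\bL_t(v) = \tL - \aL_t(v)$ and the definition $\bave = \tL(n-1)/n$, together with the defining relation $\ave = \tL/n$, and observing that the events in question coincide pointwise as subsets of the sample space.

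First I would record the elementary algebraic fact that for every vertex $v$ and every time $t$,
\begin{align*}
 \bL_t(v) \leq \bave + 1
 \iff \tL - \aL_t(v) \leq \tfrac{\tL(n-1)}{n} + 1
 \iff \aL_t(v) \geq \tL - \tfrac{\tL(n-1)}{n} - 1 = \tfrac{\tL}{n} - 1 = \ave - 1,
\end{align*}
so $[\max_{v\in V}\bL_t(v) \leq \bave+1]$ and $[\min_{v\in V}\aL_t(v) \geq \ave-1]$ are the same event; taking $t=T_1$ and substituting this equivalence into the second disjunct of the left-hand side of \eqref{lem:PhaseIeq} yields \eqref{obs:PhaseIb}. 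For \eqref{obs:PhaseIc} I would apply the analogous computation to the \emph{first} disjunct: $\aL_t(v) \leq \ave+1 \iff \tL - \bL_t(v) \leq \ave + 1 \iff \bL_t(v) \geq \tL - \ave - 1 = \ave(n-1) - 1 = \bave - 1$ (using $\bave = \ave(n-1)$), so $[\max_{v\in V}\aL_t(v) \leq \ave+1]$ equals $[\min_{v\in V}\bL_t(v) \geq \bave-1]$; substituting this into \eqref{obs:PhaseIb} gives \eqref{obs:PhaseIc}.

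There is essentially no obstacle here — the statement is a bookkeeping identity, not an inequality, and the only thing to be careful about is keeping straight the two different averages $\ave = \tL/n$ and $\bave = \tL(n-1)/n$ and the relation $\bave = \tL - \ave = \ave(n-1)$ between them, together with the fact that converting a max-constraint on $\aL$ into a min-constraint on $\bL$ (and vice versa) reverses the inequality direction because $\bL_t(v) = \tL - \aL_t(v)$ is order-reversing in $\aL_t(v)$. I would present the argument as two short displayed chains of equivalences, one for each disjunct, and then conclude that the three probabilities are equal because their underlying events are literally the same subset of outcomes.
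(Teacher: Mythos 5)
Your proof is correct and follows essentially the same route as the paper: both arguments unwind $\bL_{T_1}(v)=\tL-\aL_{T_1}(v)$ and $\bave=\tL-\ave$ to show that $[\max_{v\in V}\bL_{T_1}(v)\leq \bave+1]$ is the same event as $[\min_{v\in V}\aL_{T_1}(v)\geq \ave-1]$ and, symmetrically, that $[\max_{v\in V}\aL_{T_1}(v)\leq \ave+1]$ is the same event as $[\min_{v\in V}\bL_{T_1}(v)\geq \bave-1]$, then substitute these equivalences disjunct by disjunct. The only cosmetic difference is that the paper spells out one equivalence and declares the other "similar," whereas you verify both explicitly.
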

\begin{proof}
 By definition of $\bL_t(v) = \tL - \aL_t(v)$ and a fact $\bave=\tL-\ave$, 
   we remark 
\begin{align}
 &[\max_{v\in V}\aL_{T_1}(v)\leq \ave+1] \Leftrightarrow [\min_{v\in V}\bL_{T_1}(v)\geq \bave-1],\hspace{1em} \mbox{as well  as} 
 \label{tmp:20220609a}\\
 &[\max_{v\in V}\bL_{T_1}(v)\leq \bave+1] \Leftrightarrow [\max_{v\in V}\aL_{T_1}(v)\geq \ave-1]
 \label{tmp:20220609b}
\end{align}
 hold. 
 In fact, concerning \cref{tmp:20220609b}, 
\begin{align*}
 \max_{v\in V} \bL_{T_1}(v)
 &= \max_{v\in V}(\tL-\aL_{T_1}(v))
 = \tL+ \max_{v\in V}(-\aL_{T_1}(v))
 = \tL- \min_{v\in V}\aL_{T_1}(v) \\
 \bave+1
 &= \tL-\ave +1
\end{align*}
 hold, which implies 
\begin{align*}
 [\max_{v\in V}\bL_{T_1}(v)\leq \bave+1] 
 &\Leftrightarrow 
  [\tL- \min_{v\in V}\aL_{T_1}(v) \leq \tL-\ave+1] \\
 &\Leftrightarrow 
  [\min_{v\in V}\aL_{T_1}(v)\geq \ave-1]
\end{align*}
 and we obtain  \cref{tmp:20220609b} which implies \cref{obs:PhaseIb}. 
 \cref{tmp:20220609b} is similar, and hence \cref{obs:PhaseIc} holds, too.
\end{proof}
 As we stated in the proof strategy in~\cref{sec:proof-sketch}, 
   we got \cref{obs:PhaseIb} is enough large by \cref{lem:PhaseI}. 
 In the following sections, we will use  \cref{lem:PhaseI} in the form of \cref{obs:PhaseIc}.

\subsection{Phase II}\label{sec:phaseii}
 By \cref{lem:PhaseI} and Observation~\ref{obs:PhaseI},
  we got a situation that 
  $[\min_{v\in V}\aL(v)\leq \ave+1] \vee [\min_{v\in V}\bL(v)\leq \bave+1]$ w.h.p. in Phase I. 
 We prove the following lemma as Phase II analysis, 
   in mind the Markov property of the execution of the random matching algorithm on an edge-Markovian graph. 
\begin{lemma}[Phase II]
\label{lem:PhaseII}
  Let $T_2 \geq \frac{54\rate}{c^*\fair} \log(\frac{\Delta n}{\varepsilon})$ 
    for $\varepsilon$ ($0<\varepsilon<1/4$).
\begin{enumerate}[label=(\roman*)]
    \item \label{lab:phaseIIa}
    If $\min_{v\in V}\aL_0(v)\geq \ave-1$ then 
      $\Pr[\max_{v\in V}\aL_{T_2}(v)\leq \rave+1] \geq 1-\varepsilon^2$. 
    \item \label{lab:phaseIIb}
    If $\min_{v\in V}\bL_0(v)\geq \bave-1$ then 
      $\Pr[\max_{v\in V}\bL_{T_2}(v)\leq \brave+1] \geq 1-\varepsilon^2$. 
\end{enumerate}
\end{lemma}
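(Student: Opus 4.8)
The plan is to mirror the Phase~I argument almost verbatim, but working with a \emph{shifted} threshold and a tighter reduction count. I will prove only part~\ref{lab:phaseIIa}; part~\ref{lab:phaseIIb} follows by the complementary-token symmetry already exploited throughout \cref{sec:analytic_framework}. So assume $\min_{v\in V}\aL_0(v)\ge\ave-1$. First I would observe that this hypothesis, which is preserved for all $t\ge0$ by the monotonicity of $\min_{v\in V}\aL_t(v)$ (recall \cref{sec:proof-sketch}), immediately pins down the relevant ``small'' set: for every $v$ with $\aL_t(v)\le\rave$ we automatically have $\aL_t(v)\in\{\ave-1,\ave\}$ or so, and more importantly the set $\overline S(\aL_t)\defeq\{v\in V:\aL_t(v)\le\rave\}$ always satisfies $|\overline S(\aL_t)|\ge n/2$ — indeed, since the average of $\aL_t$ is $\ave$ and $\rave=\lceil\ave\rfloor\ge\ave-1/2$, a counting argument (at most half the vertices can strictly exceed the rounded average by the required margin) gives $|\overline S(\aL_t)|\ge n/2$. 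This is the Phase~II replacement for the ``$|S|+|\overline S|\ge n$'' dichotomy of \cref{eq:W_lower}: here we do not need a dichotomy at all, just one set that is always large.

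Next I would set up the token bookkeeping exactly as in \cref{lem:PhaseI}. Let $A^{++}\defeq\{a\in A:\Hei_0(a)>\rave+1\}$; then $\max_{a\in A^{++}}\Hei_{T_2}(a)\le\rave+1$ is equivalent to $\max_{v\in V}\aL_{T_2}(v)\le\rave+1$ by Observation~\ref{obs:a}\ref{prop:monotone}. For each $a\in A^{++}$ define the indicator
\begin{align*}
Y_t(a)=\begin{cases}1&(M_t(\Pos_t(a))\in \overline S(\aL_t))\\0&(\text{otherwise}),\end{cases}
\end{align*}
and by \cref{lem:halved}\ref{lab:halved} with $x=\rave$, every occurrence of $Y_t(a)=1$ (while $\Hei_t(a)>\rave$) replaces $\Hei_t(a)-\rave$ by $\lceil(\Hei_t(a)-\rave)/2\rceil$; using the same $\lceil\lceil x/2\rceil/2\rceil\le x/2$ trick (valid since $\Hei_0(a)-\rave\le\Delta+1$, and we may assume $\Delta\ge2$) two such events halve the gap, and since $\Hei_0(a)-\rave\le\Delta+1$ it suffices to accumulate at most $3\log_2(\Delta+1)\le 6\log_2\Delta$ — I would recompute the exact constant, but something of order $\log_2\Delta$ — successes to drive $\Hei_{T_2}(a)-\rave\le1$. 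The slightly larger leading constant $54$ versus $36$ in the statement of $T_2$ presumably absorbs the $\Delta\mapsto\Delta+1$ slack and a marginally larger success count.

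Then the probabilistic core is identical to Phase~I but cleaner. Conditioning on the execution history $\mathcal{E}_t$, \cref{lem:prob_RM} applied with $U_t=\overline S(\aL_t)$ gives
\begin{align*}
\Pr[Y_t(a)=1\mid\mathcal{E}_t]\ge\frac{c_*\fair}{\rate}\E\!\left[\frac{|\overline S(\aL_t)|}{n}\,\middle|\,\mathcal{E}_t\right]\ge\frac{c_*\fair}{2\rate},
\end{align*}
using the $|\overline S(\aL_t)|\ge n/2$ bound established above, which holds pointwise and hence in expectation. Stochastic domination by a sum $Z=\sum_{t=0}^{T_2-1}Z_t$ of i.i.d.\ Bernoulli$(c_*\fair/(2\rate))$ variables then reduces everything to a Chernoff bound (\cref{lem:Chernoff}(ii)): with $T_2\ge\frac{54\rate}{c_*\fair}\log(\Delta n/\varepsilon)$ we get $\E[Z]\ge27\log(\Delta n/\varepsilon)$, which dominates the required $\approx 12\log_2\Delta$ threshold, and the lower-tail bound $\Pr[Z\le\E[Z]/2]\le\exp(-\E[Z]/8)\le(\varepsilon/(\Delta n))^2$ gives the failure probability for a single token $a$. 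A union bound over the at most $\tL\le\Delta n$ tokens in $A^{++}$ — wait, this bound is weaker than Phase~I's $(\Delta n)^2$ slack, so in fact we only need $\E[Z]\ge$ roughly $2\log(\Delta n/\varepsilon)+2\log_2\Delta$, comfortably met — yields the claimed $1-\varepsilon^2$.

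The main obstacle — the one place where Phase~II genuinely differs from Phase~I rather than being a mechanical copy — is establishing $|\overline S(\aL_t)|\ge n/2$ \emph{robustly from the hypothesis} $\min_{v\in V}\aL_t(v)\ge\ave-1$, and in particular getting the rounding right: one must check that $\#\{v:\aL_t(v)\ge\rave+1\}\le n/2$, which is where the precise definition $\rave=\lceil\ave-1/2\rceil$ and the integrality of loads are used (a vertex with $\aL_t(v)\ge\rave+1$ exceeds $\ave$ by more than $1/2$, and the conservation $\sum_v(\aL_t(v)-\ave)=0$ together with the floor on the deficits $\aL_t(v)-\ave\ge-1$ limits how many such surplus vertices can coexist). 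Everything downstream of that inequality is a line-for-line transcription of the Phase~I computation with $\ave\rightsquigarrow\rave$, $A^+\rightsquigarrow A^{++}$, $36\rightsquigarrow54$, and no dichotomy needed.
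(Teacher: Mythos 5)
Your proposal follows the paper's proof essentially line for line: same token set, same indicator for the event $M_t(\Pos_t(a))\in\{v:\aL_t(v)\le\rave\}$, the same appeal to \cref{lem:prob_RM} for a history-independent success probability, and the same Chernoff-plus-union-bound finish; the only substantive divergence is the set-size bound, where the paper's \cref{lem:propset} proves $|S'(\aL_t)|\ge n/3$ — which is all the averaging argument you actually sketch yields (surplus $>1/2$ per excluded vertex against deficit $\le 1$ per included vertex gives $n/3$, not $n/2$; your stronger $n/2$ would need the full integrality case analysis on the fractional part of $\ave$), though either constant comfortably clears the required thresholds given the factor $54$ in $T_2$. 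One minor slip: the union bound is over $|A^{++}|\le\Delta n$ (at most roughly $\Delta$ tokens per vertex sit above height $\rave$), not over ``$\tL\le\Delta n$,'' which is false in general since $\tL$ can be arbitrarily large relative to $\Delta n$.
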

Before the proof, 
 we remark that 
  \cref{lem:PhaseII} implies that 
\begin{align}
  \Pr\left[\begin{aligned}
         &[\ave-1\leq \aL_{T_2}(v)\leq \rave+1]\\
         &\vee [\bave-1\leq \bL_{T_2}(v) \leq \brave+1]
    \end{aligned}
    \,\middle|\,
  \begin{aligned}
         &[\min_{u\in V}\aL_0(u)\geq \ave-1]\\
         &\vee[\min_{u\in V}\bL_0(u)\geq \bave-1]
    \end{aligned} 
    \right]
\geq 1-\epsilon^2
\label{lem:PhaseIIb}
\end{align}
 holds for any $v \in V$,  
  where we remark 
   $[\min_{u\in V}\aL_0(u)\geq \ave-1]$ implies $[\min_{u\in V}\aL_t(u)\geq \ave-1]$ for $t \geq 0$ 
    by the monotone non-decreasing property of $\aL_t(v)$ with respect to $t$, $\bL_t(v)$ as well. 
 The \cref{lem:PhaseIIb} is the goal of \cref{lem:PhaseII} 
   as we stated in the proof strategy in~\cref{sec:proof-sketch}. 
 In \cref{sec:final}
   we will finalize the proof of \cref{thm:RM_general} based on  \cref{lem:PhaseI,lem:PhaseII}. 

\begin{proof}[Proof of \cref{lem:PhaseII} (Phase II)]
 We prove \ref{lab:phaseIIa} by the token-based analysis give in \cref{sec:analytic_framework}. 
 For convenience, 
  let $A'=\{a \in A \mid \Hei_{0}(a) > \rave\}$. 
 Then, it is easy to see that 
\begin{align}
 \left[\max_{v\in V}\aL_{T_2}(v)\leq \rave+1\right] 
 &\Leftrightarrow
   \left[\max_{a \in A'} \Hei_{T_2}(a) \leq \rave + 1 \right] \nonumber \\
 &\Leftrightarrow
   \left[\bigwedge_{a \in A'}\left[\Hei_{T_2}(a) - \rave \leq 1\right] \right] 
\label{tmp20220608a}
\end{align}
 holds. 

In a similar way as \cref{lem:PhaseI}, 
 let $S'(\aL_t)\defeq \{v\in V \mid \aL_t(v)\leq \rave\}$, and 
 let 
\begin{align}
Y'_t(a) = \begin{cases} 
  1 & (\mbox{if } M_t(\Pos_t(a))\in S'(\aL_t))\\
  0 & (\mbox{otherwise}) 
\end{cases}
\label{tmp20220608c}
\end{align}
 for $a \in A'$, 
 meaning that $Y'_t(a)$ is the indicator random variable of a desired matching at time $t$. 
 Then, we see that 
  it is sufficient for $\Hei_{T_2}(a) - \rave \leq 1$ 
  that $\sum_{t=0}^{T_2-1} Y'_t(a)\geq 3 \log_2 \Delta$ holds, 
    by \cref{lem:halved} 
    in a similar way as \cref{lem:PhaseI}. 
 Precisely,  
\begin{align}
 \Pr\left[\max_{v\in V}\aL_{T_2}(v)\leq \rave+1\right] 
 &=
   \Pr\left[\bigwedge_{a \in A'}\left[\Hei_{T_2}(a) - \rave \leq 1\right] \right] 
 &&(\mbox{by \eqref{tmp20220608a}})\nonumber\\
 &= \Pr\left[\bigwedge_{a \in A'} \sum_{t=0}^{T_2-1} Y'_t(a)\geq 3 \log_2 \Delta \right] 
 &&\left(\begin{array}{l}\mbox{by \cref{lem:halved} with the}\\\mbox{above argument}\end{array}\right)\nonumber\\
 &= 1- \Pr\left[\bigvee_{a \in A'} \sum_{t=0}^{T_2-1} Y'_t(a) < 3 \log_2 \Delta \right]  \nonumber\\
 &= 1- \sum_{a \in A'} \Pr\left[ \sum_{t=0}^{T_2-1} Y'_t(a) < 3 \log_2 \Delta \right]  
  &&(\mbox{union bound})
\label{tmp20220608b}
\end{align}
 holds. 

 To evaluate $\Pr[\sum_{t=0}^{T_2-1}Y'_t(a) <3 \log_2 \Delta]$,  
  we estimate the probability of $Y'_t(a)=1$ 
   by giving a lower bound independent of the history of execution $\mathcal{E}_t$. 
 Then,  
\begin{align}
   &\Pr\left[Y'_t(a_i)=1 \mid \mathcal{E}_t \right]\nonumber \\
    &=\Pr\left[M_t(\Pos_t(a))\in S'(\aL_t) \,\middle|\, \mathcal{E}_t \right] 
        && (\mbox{by \cref{tmp20220608c}}) \nonumber\\
    &\geq \frac{c_*\fair}{\rate} \E\left[\frac{|S'(\aL_t)|}{n} \,\middle|\, \mathcal{E}_t \right] 
         && (\mbox{by \cref{lem:prob_RM}}) 
    \label{eq:Z_lower2}\\
    &\geq \frac{c_*\fair}{3\rate} 
         && (\mbox{since $|S'(\aL_t)| \geq n/3$ by \cref{lem:propset}}) 
    \label{eq:Z_lower}
\end{align}
hold, 
where the last inequality follows  \cref{lem:propset}, 
  which we will prove just below this proof, 
 with the fact that
  $\min_{v\in V}\aL_t(v)\geq \min_{v\in V}\aL_0(v) \geq \ave-1$ 
  since $\min_{v\in V}\aL_t(v)$ is monotone non-decreasing with respect to~$t$. 

 Then, we evaluate $\Pr[\sum_{t=0}^{T_2-1}Y'_t(a) <3 \log_2 \Delta ]$. 
 For convenience, 
 let $Z'_t$ ($t=0,1,\ldots,T_2-1$) be independent binary random variables 
   such that $\Pr[Z'_t=1]= \frac{c_*\fair}{3\rate}$,  and 
 let $Z'=\sum_{t=0}^{T_2-1}Z'_t$. 
 It is not difficult to see that 
   $\Pr[\sum_{t=0}^{T_2-1}Y'_t(a) <3 \log_2 \Delta ] \leq \Pr[\sum_{t=0}^{T_2-1}Z'_t(a) < 3\log_2 \Delta]$
   holds. 
Note that 
\begin{align}
 \E[Z'] 
  &= T_2 \Pr[Z'_t=1] \nonumber \\ 
  &\geq\frac{54\rate}{c_*\fair}\log(\Delta n / \varepsilon) \cdotp \frac{c_*\fair}{3\rate} 
    &&(\mbox{since $T_2 \geq \frac{54\rate}{c_*\fair}\log(\Delta n / \varepsilon)$})\nonumber\\ 
  & \geq 18\log(\Delta n/\varepsilon) \label{tmp20220524c} \\
  & \geq 12\log_2 \Delta &&(\mbox{since $n \geq 1$, $\varepsilon<1$ and $\log_2{\rm e} \leq 1.5$}) \label{tmp20220524d}
\end{align}
holds. 
Then, we have
\begin{align*}
    \Pr\left[\sum_{t=0}^{T_2-1}Z'_t(a) < 3\log_2 \Delta\right]
    &\leq \Pr\left[Z'\leq \frac{\E[Z']}{4}\right]&&(\mbox{by \eqref{tmp20220524d}})\nonumber\\
  &\leq \exp\left(-\frac{(\frac{3}{4})^2\E[Z']}{2}\right) &&(\mbox{by \cref{lem:Chernoff} (ii)})\nonumber\\
    &\leq \exp\left(-\frac{\E[Z']}{3}\right) \\
    &\leq \exp\left(-6\log(\Delta n/\varepsilon)\right) &&(\mbox{by \eqref{tmp20220524c}})\nonumber\\
  &\leq \left(\frac{\varepsilon}{\Delta n}\right)^6. 
\end{align*}

Finally, 
\begin{align*}
   \eqref{tmp20220608b}
   & \geq 1- \sum_{a \in A'} \left(\frac{\varepsilon}{\Delta n}\right)^6 \\
   & \geq 1- \Delta n\left(\frac{\varepsilon}{\Delta n}\right)^6 \\
   & = 1- \frac{\varepsilon^6}{(\Delta n)^5} \\
   & \geq 1- \varepsilon^2
\end{align*}
and we obtain (i). The proof of (ii) is similar. 
\end{proof}

\begin{lemma}
\label{lem:propset}
If $\aL \in \mathbb{N}^V$ 
  satisfies  $\min_{v\in V}\aL(v)\geq \ave-1$ 
 then $|S'(\aL)|\geq n/3$ 
where $S'(\aL) \defeq \{v\in V \mid \aL(v)\leq \rave\}$ and $\ave = \sum_{v \in V}\aL(v)/n$. 
\end{lemma}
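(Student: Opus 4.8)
The plan is to argue by contradiction: suppose fewer than $n/3$ vertices have load at most $\rave$, i.e., more than $2n/3$ vertices have load at least $\rave+1$. Since $\rave = \lceil \ave \rfloor \geq \ave - 1/2$, each such vertex carries load at least $\ave + 1/2$, so the total excess of these ``heavy'' vertices over the average is more than $(2n/3)\cdot(1/2) = n/3$. On the other hand, the hypothesis $\min_{v\in V}\aL(v)\geq \ave-1$ means that every vertex — in particular every ``light'' vertex (load at most $\rave \leq \ave + 1/2$, of which there are fewer than $n/3$) — has deficit below the average at most $1$. So the total deficit is strictly less than $(n/3)\cdot 1 = n/3$. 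But $\sum_{v\in V}(\aL(v)-\ave) = 0$ forces total excess to equal total deficit, giving $n/3 < \text{(total excess)} = \text{(total deficit)} < n/3$, a contradiction.

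First I would set $s \defeq |S'(\aL)|$ and write $\sum_{v\in V}(\aL(v)-\ave) = 0$ as $\sum_{v\notin S'(\aL)}(\aL(v)-\ave) = \sum_{v\in S'(\aL)}(\ave - \aL(v))$. For the left side (sum over heavy vertices), I would use $\aL(v) \geq \rave + 1 \geq \ave + 1/2$ (the inequality $\rave \geq \ave - 1/2$ coming from $\lceil x \rfloor = \lceil x - 1/2\rceil \geq x - 1/2$), so the left side is at least $(n-s)/2$. For the right side (sum over light vertices), I would use the hypothesis $\aL(v) \geq \ave - 1$, so each term $\ave - \aL(v) \leq 1$, and the right side is at most $s$. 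Combining, $(n-s)/2 \leq s$, hence $n \leq 3s$, i.e., $s \geq n/3$, which is exactly the claim.

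I expect there is essentially no obstacle here; the only point requiring a little care is the treatment of the rounding operator — making sure the bound $\rave \geq \ave - 1/2$ is used in the correct direction and that the boundary case $\aL(v) = \rave$ is placed in $S'(\aL)$ (which it is, by the definition using $\leq$) rather than among the heavy vertices. A secondary subtlety is that the bound $\aL(v) \geq \rave+1$ for heavy vertices uses integrality of loads: a vertex not in $S'(\aL)$ satisfies $\aL(v) > \rave$, and since both are integers, $\aL(v) \geq \rave + 1$. Everything else is a one-line averaging computation, so the proof should be short.
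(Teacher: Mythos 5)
Your proof is correct and is essentially the paper's argument in a different arrangement: the paper bounds $n\ave=\sum_{v}\aL(v)$ from below by splitting over $S'(\aL)$ and its complement, using exactly your three ingredients ($\aL(v)\geq\ave-1$ on $S'$, integrality giving $\aL(v)\geq\rave+1$ off $S'$, and $\rave\geq\ave-1/2$), which is algebraically the same as your excess-equals-deficit computation. No gap.
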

\begin{proof}
Notice that 
\begin{align*}
    n \ave
    &=\sum_{v\in V}\aL(v) &&(\mbox{$\ave = \sum_{v\in V}\aL(v)/n$ by the definition})\\
    &=\sum_{v\in S'(\aL)}\aL(v)+\sum_{v\notin S'(\aL)}\aL(v)\\
    &\geq |S'(\aL)|(\ave-1)+\left(n-|S'(\aL)|\right)(\rave+1) &&(\mbox{by the definition of $S'(\aL)$})\\
    &\geq |S'(\aL)|(\ave-1)+\left(n-|S'(\aL)|\right)\left(\ave+\frac{1}{2}\right) &&(\mbox{since $\rave \geq \ave-1/2$})\\
    &=-\frac{3}{2}|S'(\aL)|+n\ave+\frac{n}{2}
\end{align*}
holds. Now, the claim is clear. 
\end{proof}

\subsection{Final step of the proof}\label{sec:final}
 By \cref{lem:PhaseI,lem:PhaseII}, we got a situation  
   $[\ave-1\leq \aL_{T_1+T_2}(v)\leq \rave+1] \vee [\bave-1\leq \bL_{T_1+T_2}(v) \leq \brave+1]$ hold for any $v \in V$ w.h.p. 
 The next lemma remarks that this is the situation which we want. 
\begin{lemma}\label{lem:logic}
Let $\aL \in \mathbb{N}^V$ where $\ave = \sum_{v \in V}\aL(v)$. 
For convenience,  
  let $\phi_1 \defeq [\ave-1\leq \aL(v)\leq \rave+1]$ and 
  let $\phi_2 \defeq [\bave-1\leq \bL(v) \leq \brave+1]$. 
Then, $\phi_1 \vee \phi_2$ implies $\aL(v)\in \{\rave-1,\rave,\rave+1\}$.
\end{lemma}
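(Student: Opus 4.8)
The plan is to check the desired inclusion $\aL(v)\in\{\rave-1,\rave,\rave+1\}$ separately under the two disjuncts $\phi_1$ and $\phi_2$, after first recording two elementary facts about the nearest-integer rounding $\lceil\cdot\rfloor$ and the ``dictionary'' between the token world and the complementary-token world. The two facts are: (i) $\ave-\tfrac12\le\rave\le\ave+\tfrac12$ and likewise $\bave-\tfrac12\le\brave\le\bave+\tfrac12$, which is immediate from $\lceil x\rfloor=\lceil x-\tfrac12\rceil$; and (ii) $\rave+\brave\le\tL$, which follows because $\lceil x\rfloor=\lceil x-\tfrac12\rceil< x+\tfrac12$ \emph{strictly} (the ceiling of a number is $<$ that number $+1$), so $\rave+\brave<(\ave+\tfrac12)+(\tL-\ave+\tfrac12)=\tL+1$, and both sides are integers. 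The dictionary I would state up front: $\bL(v)=\tL-\aL(v)$ by definition, and $\bave=\tfrac1n\sum_{v}(\tL-\aL(v))=\tL-\ave$; in particular $\brave\le\tL-\rave$ is just a restatement of fact (ii), giving $\tL-\brave\ge\rave$.

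Under $\phi_1$, i.e.\ $\ave-1\le\aL(v)\le\rave+1$: the upper bound $\aL(v)\le\rave+1$ is already the hypothesis, so only the lower bound needs work. From $\aL(v)\ge\ave-1\ge\rave-\tfrac32$ (using fact (i)) and the fact that $\aL(v)$ and $\rave$ are integers, I conclude $\aL(v)\ge\rave-1$. Hence $\rave-1\le\aL(v)\le\rave+1$.

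Under $\phi_2$, i.e.\ $\bave-1\le\bL(v)\le\brave+1$: I rewrite both inequalities via the dictionary. The left inequality $\bave-1\le\bL(v)$ becomes $\tL-\ave-1\le\tL-\aL(v)$, i.e.\ $\aL(v)\le\ave+1\le\rave+\tfrac32$ by fact (i), so $\aL(v)\le\rave+1$ by integrality. The right inequality $\bL(v)\le\brave+1$ becomes $\tL-\aL(v)\le\brave+1$, i.e.\ $\aL(v)\ge\tL-\brave-1\ge\rave-1$ by fact (ii). Again $\rave-1\le\aL(v)\le\rave+1$. Since one of $\phi_1,\phi_2$ holds, the conclusion follows in all cases.

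The whole argument is elementary bookkeeping; the only mildly delicate point is the case in which $\ave$ is a half-integer, where $\tL-\brave$ can equal $\rave+1$ rather than $\rave$. But this is exactly the case absorbed by the inequality $\rave+\brave\le\tL$ of fact (ii), so it needs no separate treatment. I therefore expect no genuine obstacle beyond being careful with the rounding convention $\lceil x\rfloor=\lceil x-\tfrac12\rceil$ and with the sign flips introduced by passing between $\aL$ and $\bL$.
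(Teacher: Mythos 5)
Your proof is correct and follows essentially the same elementary route as the paper: unfold $\phi_2$ via $\bL(v)=\tL-\aL(v)$, $\bave=\tL-\ave$, and finish with the half-integer bounds on $\lceil\cdot\rfloor$ plus integrality. The only cosmetic difference is that you treat the two disjuncts separately and package the key rounding inequality as $\rave+\brave\le\tL$, whereas the paper merges both conditions into a single min/max interval and works with $-\lceil-\ave\rfloor$ directly; both are sound.
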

\begin{proof}
 Since $\bL(v)=\tL-\aL(v)$ and $\bave=\tL-\ave$ by their definitions, 
\begin{align}
 \phi_2
 &\Leftrightarrow 
   \tL-\ave-1\leq \tL - \aL(v) \leq \lceil \tL-\ave \rfloor +1 \nonumber\\
 &\Leftrightarrow
   \ave + 1\geq \aL(v) \geq - \lceil - \ave \rfloor - 1
\label{eq:psi2'}
\end{align}
holds. Thus, we see
\begin{align}
 &\phi_1 \vee \eqref{eq:psi2'} \nonumber\\
 &\Leftrightarrow 
  \min\{\ave,- \lceil -\ave \rfloor \}-1\leq \aL(v)\leq \max\{\ave, \rave\}+1
\label{tmp20220516c}
\end{align}
 holds.  
Concerning the most right-hand-side of \cref{tmp20220516c}, 
\begin{align*}
\max\{\ave, \rave\}+1 < \max\{\rave+1, \rave \}+1 \leq \rave + 2
\end{align*}
holds. Similarly, concerning the most left-hand-side of \cref{tmp20220516c}, 
\begin{align*}
\min\{\ave,- \lceil -\ave \rfloor\}-1 > \min\{\rave-1,\rave-1\}-1 \geq\rave-2
\end{align*}
holds.  Consequently, we see that 
\begin{align}
&\eqref{tmp20220516c} \nonumber\\
&\Rightarrow  \rave-2 < \aL(v) < \rave+2.
\label{tmp20220516d}
\end{align}
Since $\aL(v)$ is an integer, $\rave$ as well, 
   \cref{tmp20220516d} implies $\aL(v)\in \{\rave-1,\rave,\rave+1\}$. 
We obtain the claim. 
\end{proof}

Now, we finalize the proof of \cref{thm:RM_general}. 
\begin{proof}[Proof of \cref{thm:RM_general}]
Let $\Tbal = T_1 +T_2$ where $T_1$ and $T_2$ respectively satisfy the conditions in \cref{lem:PhaseI,lem:PhaseII}. 
Clearly, $\Tbal = \frac{90\rate}{c_*\fair}\log(\frac{\Delta n }{\varepsilon})+2$ is sufficient. 
Then, 
\begin{align*}
&\Pr[\aL_{\Tbal}(v)\in \{\rave-1,\rave,\rave+1\}] \\
 &\geq
  \Pr\left[\begin{aligned}
         &[\ave-1\leq \aL_{T_1+T_2}(v)\leq \rave+1]\\
         &\vee [\bave-1\leq \bL_{T_1+T_2}(v) \leq \brave+1]
    \end{aligned}
    \right] 
   \hspace{4em}(\mbox{by \cref{lem:logic}})
    \\
 &=
  \Pr\left[\begin{aligned}
         &[\ave-1\leq \aL_{T_1+T_2}(v)\leq \rave+1]\\
         &\vee [\bave-1\leq \bL_{T_1+T_2}(v) \leq \brave+1]
    \end{aligned}
    \,\middle|\,
  \begin{aligned}
         &[\min_{u\in V}\aL_{T_1}(u)\geq \ave-1]\\
         &\vee[\min_{u\in V}\bL_{T_1}(u)\geq \bave-1]
    \end{aligned} 
    \right]
 \cdotp
  \Pr\left[
  \begin{aligned}
         &[\min_{u\in V}\aL_{T_1}(u)\geq \ave-1]\\
         &\vee[\min_{u\in V}\bL_{T_1}(u)\geq \bave-1]
    \end{aligned} 
    \right]\\
 &=
  \Pr\left[\begin{aligned}
         &[\ave-1\leq \aL_{T_2}(v)\leq \rave+1]\\
         &\vee [\bave-1\leq \bL_{T_2}(v) \leq \brave+1]
    \end{aligned}
    \,\middle|\,
  \begin{aligned}
         &[\min_{u\in V}\aL_0(u)\geq \ave-1]\\
         &\vee[\min_{u\in V}\bL_0(u)\geq \bave-1]
    \end{aligned} 
    \right]
 \cdotp
  \Pr\left[
  \begin{aligned}
         &[\min_{u\in V}\aL_{T_1}(u)\geq \ave-1]\\
         &\vee[\min_{u\in V}\bL_{T_1}(u)\geq \bave-1]
    \end{aligned} 
    \right]\\
 &\geq (1-\varepsilon^2)^2
    \hspace{4em}(\mbox{by \cref{lem:PhaseIIb}, and \cref{lem:PhaseI} with \eqref{obs:PhaseIc}, respectively})\\
 &\geq 1-\varepsilon 
\end{align*}
   holds for any $v \in V$, where the last inequality follows the assumption $0<\epsilon<1/4$.  
 We obtain the claim. 
\end{proof}

\section{Implications of Theorem \ref{thm:RM_general}}
\label{sec:applications}
 The random matching (Algorithm~\ref{fig:algorithm})
   is a comprehensive method for load balancing on networks, seemingly quite natural and simple, 
  while there is variety how to draw a random matching, which is a profound issue in fact. 
 This section introduces three major varieties, 
  namely 
   {\em simple load balancing} (cf.,~\cite{BFKK19,HW22}), 
    {\em local random matching (LR) algorithm} (cf.,~\cite{GM96}), and 
    a simple variant of the {\em distributed synchronous algorithm} (cf.,~\cite{BGPS06,FS09,SS12,CS17}). 
 The main focus of this section is 
    to check their $\fair$-fair conditions \eqref{def:matching}, and 
   we show the upper bounds of their balancing times implied by \cref{thm:RM_general}. 
 

\paragraph{Supplemental terminologies about static graphs}
 To describe the algorithms, 
   we introduce supplemental terminologies about static graphs. 
 Suppose $G=(V,E)$ is an undirected simple graph. 
   Let $\Neb_E(v)$ (or simply $\Neb(v)$) denote the set of vertices adjacent to $v \in V$, 
   i.e., $\Neb(v) \defeq \{ u \in V \mid \{u,v\} \in E\}$.  
 We remark $v \not\in \Neb(v)$ since $G$ is simple. 
 We also remark $\deg(v) = |\Neb(v)|$ clearly. 
 Specially, we remark that we use $\Neb_{E'}(v)$ and $\deg_{E'}(v)$ for an edge subset $E' \subseteq E$, 
   i.e., $\Neb_{E'}(v) \defeq \{ u \in V \mid \{u,v\} \in {E'}\}$ and $\deg_{E'}(v) = |\Neb_{E'}(v)|$. 
 They are also abbreviated as $\Neb(v)$ and $\deg(v)$ on $E'$, without confusion. 

 In this section, we also deal with directed edges. 
 Let $\vec{G}=(V,\vec{E})$ denote a directed graph 
  where $\vec{E} \subseteq V^2$ denotes the set of directed edges. 
 Both $(u,v)$ and $(v,u)$ can simultaneously  exist in $\vec{E}$, but 
   duplication of a direct edge is not allowed, i.e., $(u,v)$ is at most one in $\vec{E}$. 
 A {\em self-loop} is NOT allowed here,  i.e., $(v,v)$ cannot exist in $\vec{E}$. 
 For a set of directed edges $\vec{E}$, 
  let $\Neb^-_{\vec{E}}(v)$ and  $\Neb^+_{\vec{E}}(v)$ (or simply $\Neb^-(v)$ and  $\Neb^+(v)$, resp.) 
   respectively denote the sets of in/out neighboring vertices of $v \in V$, 
   i.e., $\Neb^-(v) \defeq \{ u \in V \mid (u,v) \in \vec{E}\}$ and  
    $\Neb^+(v) \defeq \{ u \in V \mid (v,u) \in \vec{E}\}$. 
 Let 
   $\deg^-(v)$ and $\deg^-(v)$ respectively denote {\em in-degree} and {\em out-degree} on $\vec{E}$, 
  i.e.,    
   $\deg^-(v) \defeq |\Neb^-(v)|$ and $\deg^+(v) \defeq |\Neb^+(v)|$. 

\subsection{Simple load-balancing}\label{sec:simple}
\begin{algorithm}[t]
\caption{Random matching in simple load-balancing}
\label{alg:simple}
\SetKwInOut{Input}{Input}\SetKwInOut{Output}{Output}
\Input{$G=(V,E)$}
\Output{$M \subseteq E$ such that $M$ consists of a single edge, or is empty. }
\BlankLine
{set $M \defeq \emptyset$}\;
{choose $v \in V$ u.a.r.}\;
{\bf if} $\Neb(v) \neq \emptyset$ {\bf then} choose one $u \in \Neb(v)$ u.a.r., and put $\{u,v\} \in M$\;
\KwRet{$M$}\;
\end{algorithm}
 The simplest verity could be ``$M_t$ consists of a single edge.'' 
 Berenbrink et al. \cite{BFKK19} 
   spotlighted this folklore technology, and 
   they gave a simple analysis. 
 The algorithm is formally described in Algorithm~\ref{alg:simple}. 
 It is not difficult to see that 
\begin{align}
 \Pr[M = \{\{u,v\}\} ] 
  &\defeq \frac{1}{n}\frac{1}{\deg(u)}+ \frac{1}{n}\frac{1}{\deg(v)} \nonumber\\ 
  &= \frac{1}{n} \left(\frac{1}{\deg(u)}+\frac{1}{\deg(v)}\right) \nonumber\\
  &\geq \frac{1}{n} \left(\frac{1}{\max\{\deg(u),\deg(v)\}}\right) 
\label{prob:simple}
\end{align}  
 holds for any $\{u,v\} \in E$, 
   meaning that Algorithm~\ref{alg:simple} satisfies $\fair=1/n$-fair condition~\eqref{def:matching}. 
 Thus, \cref{thm:RM_general,prob:simple} imply that 
   the algorithm achieves $\Tbal =\Order(\rate n \log(n\Delta))$ with high probability on edge-Markovian graphs. 

\paragraph{A variant: Draw from $E$ u.a.r. } 
 To avoid the situation that  Algorithm~\ref{alg:simple} output $M_t=\emptyset$,  
  a reader may think 
   why not choose an edge of $E$ uniformly at random. 
 In this variant, 
\begin{align}
 \Pr[M = \{\{u,v\}\}] 
  & \defeq \frac{1}{|E|} \nonumber\\ 
  & \geq \frac{1}{n^2}\left(\frac{1}{\max\{\deg(u),\deg(v)\}}\right) 
\label{prob:simple2}
\end{align} 
 holds,  
  where the last inequality is tight in an order of magnitude, 
    in a cerebrated lollipop graph for an instance, or in the graph $K_{n-2}+K_2$ for another instance. 
 \cref{thm:RM_general,prob:simple2} imply that the algorithm achieves 
   $\Tbal =\Order(\rate n^2 \log(n\Delta))$ with high probability on edge-Markovian graphs. 
 Of course, this upper bound might not be tight.

\subsection{LR algorithm}
\begin{algorithm}[t]
\caption{Random matching in local randomized (LR) algorithm}
\label{alg:LR}
\SetKwInOut{Input}{Input}\SetKwInOut{Output}{Output}
\Input{$G=(V,E)$}
\Output{$M \subseteq E$ is a matching}
\BlankLine
{set $\vec{M} \defeq \emptyset$, set $\vec{R}^- \defeq\emptyset$,  set $\vec{R}^+ \defeq\emptyset$\;}
{{\bf for} $v \in V$, {\bf for} $u \in N(v)$, put $(v,u) \in \vec{M}$ w.p. $\frac{1}{8\max\{\deg(v),\deg(u)\}}$\;}
{{\bf for} $v \in V$, {\bf if}  $\deg^+(v) >1$ on $\vec{M}$ 
  {\bf then} put $(v,u) \in \vec{R}^+$ for all $u \in \Neb^+(v)$ on $\vec{M}$\;}
{set $\vec{M}:=\vec{M} \setminus \vec{R}^+$\;}
{{\bf for} $v \in V$, {\bf if} $\deg^-(v) >1$ on $\vec{M}$
  {\bf then} put $(u,v) \in \vec{R}^-$ for all $u \in \Neb^-(v)$ on $\vec{M}$\;}
{set $\vec{M}:=\vec{M} \setminus \vec{R}^-$\;}
 \KwRet{$M \defeq \left\{ \{v,u\} \,\middle|\, [(v,u) \in \vec{M}]\wedge [\mbox{$\deg^-(v) = 0$ on $\vec{M}$}] \right\} 
  \cup \left\{ \{v,u\}  \,\middle|\,  [(v,u) \in \vec{M} ]\wedge  [(u,v) \in \vec{M}]\right\}$}\;
\end{algorithm}
 To draw a random matching, 
  it could be a natural idea 
   to choose a random subset of edges and then to edit it to a matching. 
 Ghosh and Muthukrishnan in~\cite{GM96}
    originally proposed the random matching scheme for load balancing (cf., Algorithm~\ref{fig:algorithm}), 
    referred to as the {\em local randomized} ({\em LR}) algorithm, 
  where they gave such an algorithm to generate a random matching as well. 
 The algorithm is summarized in Algorithm~\ref{alg:LR}, and 
   it could be regarded in line with the above natural idea. 
 For Algorithm~\ref{alg:LR}, 
  Ghosh and Muthukrishnan~\cite{GM96} proved the following lemma. 
\begin{lemma}[\cite{GM96}]\label{lemma:LR}
$\Pr[\{u,v\} \in M] \geq \frac{1}{8\max\{\deg(v),\deg(u)\}}$ holds for any $\{u,v\} \in E$.
\end{lemma}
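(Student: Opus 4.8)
The plan is to prove the bound by exhibiting, for a fixed edge $\{u,v\}\in E$, a purely \emph{local} sufficient event defined only in terms of the independent coin flips performed in line~2 of Algorithm~\ref{alg:LR}, and then to estimate its probability. Write $d\defeq\max\{\deg(v),\deg(u)\}$ and, for any $\{x,y\}\in E$, let $p_{xy}\defeq\frac{1}{8\max\{\deg(x),\deg(y)\}}$ be the probability that the directed edge $(x,y)$ is inserted into $\vec M$ in line~2; note that $p_{xy}\le\frac{1}{8\deg(x)}$ and that all $2|E|$ of these insertions are mutually independent.

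First I would define the event $E_{v\to u}$ as the conjunction of the following: (1) $(v,u)$ is inserted in line~2; (2) no edge $(v,w)$ with $w\in\Neb(v)\setminus\{u\}$ is inserted; (3) no edge $(w,v)$ with $w\in\Neb(v)$ is inserted; (4) no edge $(w,u)$ with $w\in\Neb(u)\setminus\{v\}$ is inserted. I would then check that $E_{v\to u}$ forces $\{u,v\}\in M$: by~(2) the vertex $v$ has out-degree $1$ in $\vec M$ after line~2, so $(v,u)$ is not deleted in the $\vec R^{+}$-stage; by~(4) the edge $(v,u)$ is the only in-edge of $u$ in $\vec M$ both before and after that stage (deletions can only remove edges), so $u$ never has in-degree exceeding $1$ and $(v,u)$ is not deleted in the $\vec R^{-}$-stage; hence $(v,u)$ belongs to the final $\vec M$, and by~(3) the vertex $v$ has in-degree $0$ there, so $\{v,u\}$ is returned in the first set of the output.

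Next I would bound $\Pr[E_{v\to u}]$. The directed edges appearing in (1)--(4) are pairwise distinct, so by independence $\Pr[E_{v\to u}]$ is $p_{vu}$ times three products of the form $\prod_e(1-p_e)$. Using $\prod_i(1-p_i)\ge 1-\sum_i p_i$ together with the bound $p_e\le\frac{1}{8\deg(\cdot)}$: the product over the $\deg(v)-1$ edges of~(2) is $\ge\frac78$, the product over the $\deg(v)$ edges of~(3) is $\ge\frac78$, and the product over the $\deg(u)-1$ edges of~(4) is $\ge\frac78$, hence $\Pr[E_{v\to u}]\ge\frac{1}{8d}\bigl(\frac78\bigr)^{3}$. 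By symmetry the event $E_{u\to v}$ obtained by swapping the roles of $u$ and $v$ satisfies the same bound, and $E_{v\to u}$ and $E_{u\to v}$ are disjoint, since (3) for $E_{v\to u}$ forbids $(u,v)$ from being inserted while (1) for $E_{u\to v}$ requires it. Therefore $\Pr[\{u,v\}\in M]\ge\Pr[E_{v\to u}]+\Pr[E_{u\to v}]\ge\frac{2}{8d}\bigl(\frac78\bigr)^{3}=\frac{343}{256}\cdot\frac{1}{8d}\ge\frac{1}{8d}$, which is the claim.

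The step I expect to be the main obstacle is pinning down a sufficient event that genuinely propagates through \emph{both} pruning phases and the output rule: one must confirm that controlling only the out-edges of $v$, the in-edges of $v$, and the in-edges of $u$ already rules out every way the global $\vec R^{+}$/$\vec R^{-}$ deletions could remove $(v,u)$ or prevent its inclusion in $M$. A secondary subtlety is that, after the symmetric doubling, the surviving constant is only $\frac{343}{256}\approx1.34$, so the three $\frac78$-factors must be accounted for carefully---in particular the insertion probabilities must be bounded by $\frac{1}{8\deg(\cdot)}$ rather than more crudely---but this leaves exactly enough room to close the argument.
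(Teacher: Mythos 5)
Your argument is correct. Note first that the paper itself does not prove this lemma at all: it is imported verbatim from \cite{GM96} as a black box, so there is no in-paper proof to compare against. What you have produced is a self-contained verification for the specific directed formulation in Algorithm~\ref{alg:LR}, and every step checks out: the local event $E_{v\to u}$ does propagate through both pruning stages (condition (2) protects $(v,u)$ from the $\vec{R}^{+}$ deletion, condition (4) bounds $\deg^{-}(u)$ by $1$ both before and after that stage since deletions are monotone, and condition (3) gives $\deg^{-}(v)=0$ in the final $\vec{M}$, which triggers the first clause of the output rule); the directed edges in (1)--(4) are pairwise distinct so independence applies; the three $\tfrac78$ factors follow from $p_e\le\frac{1}{8\deg(\cdot)}$ and $\prod_i(1-p_i)\ge 1-\sum_i p_i$; and the disjointness of $E_{v\to u}$ and $E_{u\to v}$ (condition (3) of the former forbids the insertion required by condition (1) of the latter) legitimizes adding the two probabilities to recover the constant $\frac{343}{256}>1$. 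This is essentially the same style of argument as in the original reference (a sufficient local event plus a union bound over conflicting proposals, doubled by symmetry), but having it written out for the exact algorithm stated in the paper is a genuine added value, since it confirms that the paper's restatement of the LR matching generation really does satisfy the claimed $\fair=1/8$-fair bound rather than some other constant.
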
 
  \cref{thm:RM_general,lemma:LR} imply that the algorithm achieves 
   $\Tbal = \Order (\rate \log(n\Delta))$ with high probability on edge-Markovian graphs. 
 

%
\subsection{(A localized version of) distributed synchronous algorithm}
\label{sec:distributed}
\begin{algorithm}[t]
\caption{Random matching in a variant of distributed synchronous algorithm}
\label{alg:ds}
\SetKwInOut{Input}{Input}\SetKwInOut{Output}{Output}
\Input{$G=(V,E)$}
\Output{$M \subseteq E$ is a matching}
\BlankLine
{set $\vec{M} = \emptyset$, set $I=\emptyset$\;}
{{\bf for} $v \in V$, put $v \in I$ w.p. $\frac{1}{2}$\;}
\For{ $v \in I$ such that $\Neb(v) \neq \emptyset$}{ 
   choose one $u \in \Neb(v)$ u.a.r.\;  
   put $(v,u) \in \vec{M}$ w.p. $\frac{\min\{\deg_G(u),\deg_G(v)\}}{\deg_G(u)}$\tcp*[r]{Metropolis---Hastings (ll. 4---5)}}
 \KwRet{$M \defeq \left\{ \{v,u\} \,\middle|\, [(v,u) \in \vec{M}]\footnotemark \wedge [u \not\in I] \wedge [\deg^-(u)=1 \mbox{ on $\vec{M}$}]   \right\}$}\;
\end{algorithm}
\footnotetext{Formally, this could be $[v \in I] \wedge [(v,u) \in \vec{M}]$. 
  However, we remark $[(v,u) \in M']$ implies $[v \in I]$ by the algorithm, meaning that  $[v \in I]$ is redundant. }
 Another natural idea for a random matching 
  may be to let vertices choose random partners. 
 This idea works pretty well on bipartite graphs, 
   while it needs attention for non-bipartite graphs. 
 The {\em distributed synchronous algorithm}, 
    given by Boyd et al.~\cite{BGPS06} in a bit different context,
   could be regarded in line with the above idea. 
 Some works \cite{FS09,SS12,CS17} about  the random matching algorithm for load balancing 
  employs this algorithm and analyzed it.

 Here, we describe a variant of the algorithm in Algorithm~\ref{alg:ds}, 
   where we localize the choice of a partner (ll. 4--5) based on the Metropolis-Hastings technique (cf., \cite{LP17}) 
   instead of using the globally maximum degree in the original algorithm~\cite{BGPS06}.  
 The lines 4 and 5 of Algorithm~\ref{alg:ds} 
   realize the probability
\begin{align}
  \Pr[(u,v) \in \vec{M} \mid v \in I]
  &= \begin{cases}
  \frac{1}{\deg(v)}\frac{\deg(u)}{\deg(u)} = \frac{1}{\deg(v)} & (\mbox{if $\deg(u) \leq \deg(v)$}) \\
  \frac{1}{\deg(v)}\frac{\deg(v)}{\deg(u)} = \frac{1}{\deg(u)} & (\mbox{if $\deg(u) > \deg(v)$}) 
\end{cases} \nonumber \\
& =\frac{1}{\max \{\deg(u),\deg(v)\}}
\end{align}
for each $u \in \Neb(v)$, which provides the following lemma. 

\begin{lemma}\label{lem:ds}
Algorithm~\ref{alg:ds} satisfies $\fair=1/4$-fair condition, i.e., 
\begin{align*}
  \Pr[\{u,v\} \in M] \geq \frac{1}{4 \max \{\deg(u),\deg(v)\}}
\end{align*}
   holds for any $\{u,v\} \in E$. 
\end{lemma}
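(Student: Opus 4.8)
The plan is to fix an edge $\{u,v\}\in E$ and exhibit one particular way in which it can be selected by Algorithm~\ref{alg:ds}, then lower-bound the probability of that event. The cleanest such event to track is: $v\in I$, $u\notin I$, and in the loop the vertex $v$ picks $u$ as its candidate partner \emph{and} the Metropolis--Hastings acceptance succeeds, so $(v,u)\in\vec M$; moreover $u$ has in-degree exactly $1$ on $\vec M$, i.e.\ no other vertex $w\in I$ also landed the directed edge $(w,u)$. If all of this happens then $\{u,v\}$ is returned in $M$ by the definition in the \KwRet\ line. (The symmetric event with the roles of $u$ and $v$ swapped also contributes, but one copy suffices for the constant $1/4$, so I would not bother combining them.)

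First I would write $\Pr[\{u,v\}\in M]\ge \Pr[v\in I,\ u\notin I,\ (v,u)\in\vec M,\ \deg^-(u)=1\text{ on }\vec M]$ and then condition on $v\in I$ and $u\notin I$, which are independent and each have probability $1/2$. Given $v\in I$, the displayed computation just before the lemma gives $\Pr[(v,u)\in\vec M\mid v\in I]=\tfrac{1}{\max\{\deg(u),\deg(v)\}}$, and this event depends only on $v$'s own coin flips (its uniform choice of neighbour and its acceptance coin), hence is independent of the indicator of $u\in I$. So the first three events together contribute a factor $\tfrac14\cdot\tfrac{1}{\max\{\deg(u),\deg(v)\}}$.

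The remaining step is to handle $\deg^-(u)=1$ on $\vec M$, i.e.\ to show that, conditioned on the above, no competing neighbour $w\in\Neb(u)\setminus\{v\}$ also has $(w,u)\in\vec M$ — and to show this costs at most a further constant factor. Here I would argue: conditioned on $v\in I$, $u\notin I$ and $v$'s choices, the events ``$(w,u)\in\vec M$'' for distinct $w\in\Neb(u)\setminus\{v\}$ are determined by disjoint sets of random bits (the membership-in-$I$ coin of $w$, $w$'s uniform neighbour choice, $w$'s acceptance coin), hence independent of each other and of $v$'s choices. For each such $w$, $\Pr[(w,u)\in\vec M]\le \Pr[w\in I]\cdot\tfrac{1}{\deg(w)}\cdot 1\le \tfrac{1}{2\deg(w)}\le \tfrac{1}{2\deg(u)}$ if $\{u,w\}\in E$ — wait, more carefully $\Pr[w \text{ picks } u]=1/\deg(w)$ and the acceptance probability is $\min\{\deg(u),\deg(w)\}/\deg(u)\le 1$, and crucially if $\deg(w)<\deg(u)$ the picking probability $1/\deg(w)$ could be large, so I bound instead by $\Pr[(w,u)\in\vec M]\le \tfrac12\cdot\tfrac{\min\{\deg(u),\deg(w)\}}{\deg(u)\deg(w)}\le \tfrac{1}{2\deg(u)}$. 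Summing over the at most $\deg(u)-1$ such $w$ gives expected number of competitors at most $\tfrac{\deg(u)-1}{2\deg(u)}<\tfrac12$, so by Markov's inequality $\Pr[\text{some competitor}]<\tfrac12$, hence $\Pr[\deg^-(u)=1\mid\cdots]>\tfrac12$. Multiplying, $\Pr[\{u,v\}\in M]>\tfrac14\cdot\tfrac12\cdot\tfrac{1}{\max\{\deg(u),\deg(v)\}}=\tfrac{1}{8\max\{\deg(u),\deg(v)\}}$.

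I should note this only yields $\fair=1/8$, not the claimed $1/4$; to get $1/4$ one must add the symmetric contribution from $u\in I$, $v\notin I$ (which is disjoint from the first event since the first requires $v\in I$, $u\notin I$ and the second requires the opposite), doubling the bound to $\tfrac{1}{4\max\{\deg(u),\deg(v)\}}$. \textbf{The main obstacle} is precisely the in-degree-one requirement: one must check that the competing directed edges into $u$ really are generated by independent coins disjoint from those used by $v$, so that the union bound / Markov argument is valid; once that independence structure is laid out correctly the arithmetic is routine. A secondary subtlety is the case $\deg(u)=1$ (then $v$ is $u$'s only neighbour, there are no competitors, and the bound is immediate), which should be mentioned to avoid an empty-product or division-by-zero worry, though it causes no real trouble.
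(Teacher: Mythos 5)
Your proposal is correct and follows essentially the same route as the paper: decompose $\Pr[\{u,v\}\in M]$ into the two disjoint events ($v\in I,\,u\notin I$ versus $u\in I,\,v\notin I$), use the Metropolis--Hastings computation to get the factor $\frac{1}{2}\cdot\frac{1}{\max\{\deg(u),\deg(v)\}}\cdot\frac{1}{2}$ for each, and show the "no competing in-edge" event has probability at least $\frac{1}{2}$ by bounding each competitor's probability by $\frac{1}{2\max\{\deg(u),\deg(w)\}}\le\frac{1}{2\deg(u)}$ and summing. The only cosmetic difference is that you invoke Markov's inequality on the expected number of competitors where the paper uses the product bound $\prod_w(1-x_w)\ge 1-\sum_w x_w$; these yield the same estimate, and your explicit remark that the competing events are driven by disjoint coins is exactly the independence the paper uses implicitly.
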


\begin{proof}
By the line 7 of Algorithm~\ref{alg:ds}, 
\begin{align}
\Pr[\{v,u\} \in M] 
 &= 
   \Pr[v \in I] \cdotp \Pr[(v,u) \in \vec{M} \mid v \in I] \cdotp 
   \Pr[u \not\in I] \cdotp \prod_{v' \in \Neb(u) \setminus \{v\} } \Pr[(v',u) \not\in \Vec{M}] \nonumber \\
 &\hspace{1em} + 
  \Pr[u \in I] \cdotp \Pr[(u,v) \in \vec{M} \mid u \in I] \cdotp 
  \Pr[v \not\in I] \cdotp \prod_{u' \in \Neb(v) \setminus\{u\}}\Pr[(u',v) \not\in \Vec{M}]  \nonumber\\
 &= \frac{1}{2} \cdotp \frac{1}{\max \{\deg(u),\deg(v)\}} \cdotp \frac{1}{2} \cdotp 
  \prod_{v' \in \Neb(u) \setminus \{v\} } \Pr[(v',u) \not\in \Vec{M}]  \nonumber\\
 &\hspace{1em}
   + \frac{1}{2} \cdotp \frac{1}{\max \{\deg(u),\deg(v)\}} \cdotp \frac{1}{2}  \cdotp 
  \prod_{u' \in \Neb(v) \setminus \{u\} } \Pr[(u',v) \not\in \Vec{M}]
\label{tmp20220628a}
\end{align}
holds. 
Here, 
\begin{align}
\prod_{w \in \Neb(v) \setminus \{v\} } \Pr[(w,v) \not\in \Vec{M}]
 &= \prod_{w \in \Neb(v) \setminus \{v\} } \left(1-\Pr[w \in I] \Pr[(w,v) \in \Vec{M} \mid w \in I]\right) \nonumber\\
 &= \prod_{w \in \Neb(v) \setminus \{v\} }  \left(1-\frac{1}{2 \max \{\deg(w),\deg(v)\}} \right)  \nonumber\\
 &\geq 1- \sum_{w \in \Neb(v) \setminus \{v\} }  \frac{1}{2 \max \{\deg(w),\deg(v)\}} &&(\mbox{union bound})  \nonumber \\
 &\geq 1- \sum_{w \in \Neb(v) \setminus \{v\} }  \frac{1}{2 \deg(v)} \nonumber \\
 &\geq 1-\frac{\deg(v)}{2\deg(v)}  \nonumber\\
 &= \frac{1}{2}
\label{tmp20220628b}
\end{align}
holds. By \cref{tmp20220628a,tmp20220628b}, we obtain the claim. 
\end{proof}

  \cref{thm:RM_general,lem:ds} imply that the algorithm achieves 
   $\Tbal = \Order (\rate \log(n\Delta))$ with high probability on edge-Markovian graphs.

\section{Concluding Remark}
 Motivated by a technique for an analysis of algorithms on dynamic graphs, 
   this paper gave an upper bound of the balancing time of random matching algorithms 
     for load balancing on {\em edge-Markovian} graphs, 
   which is a major topic in the context of distributed computing. 
 To avoid the difficulty caused by the complicated correlation in the history of executions, 
  we have developed a technique of {\em history-independent} bound 
   \cref{lem:prob_RM,ineq:prob_RM,eq:Z_lower2}, 
  focusing on the $\fair$-fair factor which existing algorithms have.

 Concerning our bound for the load-balancing algorithms, 
  the $\rate = \max\left\{\frac{1-q}{p},\frac{p}{1-q}\right\}$ factor in \cref{thm:RM_general}
   could be improved by more careful arguments. 
 Concerning the random matching algorithm, 
  an extension of the analysis technique to random edge-subset algorithm is an interesting future work. 
 Concerning the edge-Markovian graph, 
  an extension to more general model, particularly vertex increasing model, is an important future work. 






\section*{Acknowledgements}
This work was partially supported by the joint project of Kyoto University and Toyota Motor Corporation, titled ``Advanced Mathematical Science for Mobility Society''.

\bibliographystyle{plain}
\bibliography{ref}
\appendix
\section{Tools}
\begin{lemma}[Lemma 1.8.7 in \cite{Doerr18}]
\label{lem:domination}
Let $X_1,\ldots,X_n$ be arbitrary binary random variables and $X_1^*,\ldots,X_n^*$ be independent binary random variables. Let $X=\sum_{i=1}^nX_i$ and $X^*=\sum_{i=1}^nX_i^*$.
Suppose that $\Pr[X_i=1\mid X_1=x_1,\ldots,X_{i-1}=x_{i-1}]\geq \Pr[X_i^*=1]$ for all $1\leq i\leq n$ and all $x_1,\ldots,x_{i-1}\in \{0,1\}$ with $\Pr[X_1=x_1,\ldots,X_{i-1}=x_{i-1}]>0$. Then, 
$\Pr[X\leq \lambda]\leq \Pr[X^*\leq \lambda]$ for all $\lambda\in \mathbb{R}$.
\end{lemma}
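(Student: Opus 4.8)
The plan is to prove the statement by exhibiting an explicit monotone coupling of the two families of binary variables on a common probability space; stochastic domination of $X$ over $X^*$, and hence $\Pr[X\le\lambda]\le\Pr[X^*\le\lambda]$, then follows immediately. Concretely, I would take independent uniform random variables $U_1,\dots,U_n$ on $[0,1]$ and build coupled sequences $(\tilde X_i)_{i=1}^n$ and $(\tilde X_i^*)_{i=1}^n$ by the following recursion. Having realized $\tilde X_1=x_1,\dots,\tilde X_{i-1}=x_{i-1}$, set $p_i\defeq\Pr[X_i=1\mid X_1=x_1,\dots,X_{i-1}=x_{i-1}]$ whenever this conditional probability is defined and $p_i\defeq 1$ otherwise, set $p_i^*\defeq\Pr[X_i^*=1]$, and let $\tilde X_i\defeq\1[U_i\le p_i]$ and $\tilde X_i^*\defeq\1[U_i\le p_i^*]$.

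Next I would verify the three properties that make this coupling do the job. First, $(\tilde X_1,\dots,\tilde X_n)$ has the same joint law as $(X_1,\dots,X_n)$: by induction on $i$, the conditional law of $\tilde X_i$ given $(\tilde X_1,\dots,\tilde X_{i-1})=(x_1,\dots,x_{i-1})$ is Bernoulli$(p_i)$, which matches the conditional law of $X_i$ given the same history, and the same induction shows that any history realized with positive probability by the recursion has positive probability under the original law, so $p_i$ is genuinely defined wherever it is actually used. Second, $\tilde X_1^*,\dots,\tilde X_n^*$ are independent with $\tilde X_i^*$ distributed as $X_i^*$: this is immediate, because $p_i^*$ is a deterministic constant, so $\tilde X_i^*$ is a function of $U_i$ alone and the $U_i$ are independent. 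Third, $\tilde X_i\ge\tilde X_i^*$ everywhere for each $i$: this is precisely where the hypothesis $p_i\ge p_i^*$ enters, since $\1[U_i\le p_i]\ge\1[U_i\le p_i^*]$ as soon as $p_i\ge p_i^*$.

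Given these three properties, $\sum_{i=1}^n\tilde X_i\ge\sum_{i=1}^n\tilde X_i^*$ holds surely, so $\{\sum_i\tilde X_i\le\lambda\}\subseteq\{\sum_i\tilde X_i^*\le\lambda\}$, and taking probabilities together with the first two properties gives $\Pr[X\le\lambda]=\Pr[\sum_i\tilde X_i\le\lambda]\le\Pr[\sum_i\tilde X_i^*\le\lambda]=\Pr[X^*\le\lambda]$ for every $\lambda\in\mathbb{R}$. I expect the main obstacle to be the measure-theoretic bookkeeping in the first property --- making sure the recursion never has to condition on a probability-$0$ history (handled by the induction) and that the arbitrary choice $p_i\defeq 1$ on unreachable histories does not spoil monotonicity (it cannot, since $p_i^*\le 1$ always). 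Once that is settled, the remainder is routine.
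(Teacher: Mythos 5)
Your proposal is correct. Note that the paper does not prove this lemma at all: it is quoted verbatim as Lemma~1.8.7 of the cited reference [Doerr18] and used as an off-the-shelf tool, so there is no in-paper argument to compare yours against. Your monotone coupling is the standard way to prove such a domination statement: generating $\tilde X_i$ and $\tilde X_i^*$ from the same uniform $U_i$, checking that the coupled sequences have the right marginal laws, and using the hypothesis $p_i\geq p_i^*$ to get the pointwise inequality $\tilde X_i\geq \tilde X_i^*$, hence $\sum_i\tilde X_i\geq\sum_i\tilde X_i^*$ surely and $\Pr[X\leq\lambda]\leq\Pr[X^*\leq\lambda]$ for every real $\lambda$. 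The two delicate points are exactly the ones you flag and resolve: the induction showing that every history reachable by the recursion has positive probability under the original law (so the conditional probability $p_i$ is genuinely defined where it is used), and the observation that the arbitrary assignment $p_i=1$ on unreachable histories cannot break monotonicity since $p_i^*\leq 1$. With those in place the argument is complete and self-contained.
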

\begin{lemma}[The Chernoff inequality (see, e.g., Theorem 1.10.21 in \cite{Doerr18})]
\label{lem:Chernoff}
Let $X_1,\ldots,X_n$ be $n$ independent random variables taking values in $[0,1]$. Let $X=\sum_{i=1}^nX_i$. 
Let $\mu^{-}\leq \E[X]\leq \mu^+$.
Then, we have the following:
\begin{enumerate}[label=(\roman*)]
\item $\Pr\left[X\geq (1+\epsilon)\mu^+\right]\leq \exp\left(-\frac{\min\{\epsilon^2,\epsilon\}\mu^+}{3}\right)$ for $\epsilon\geq 0$.
\item $\Pr\left[X\leq (1-\epsilon)\mu^-\right]\leq \exp\left(-\frac{\epsilon^2\mu^-}{2}\right)$ for $0\leq \epsilon\leq 1$.
\end{enumerate}
\end{lemma}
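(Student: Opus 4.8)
The plan is to prove both tail bounds by the standard exponential-moment (Chernoff--Bernstein) method. For part (i), I would fix a parameter $t>0$ and apply Markov's inequality to the nonnegative random variable $\mathrm{e}^{tX}$, giving $\Pr[X \ge a] \le \mathrm{e}^{-ta}\,\E[\mathrm{e}^{tX}]$. By independence of the $X_i$ the moment generating function factorizes, $\E[\mathrm{e}^{tX}] = \prod_{i=1}^n \E[\mathrm{e}^{tX_i}]$, and since each $X_i$ takes values in $[0,1]$ and $x \mapsto \mathrm{e}^{tx}$ is convex, the chord bound $\mathrm{e}^{tx} \le 1 + (\mathrm{e}^t - 1)x$ holds on $[0,1]$; combined with $1+z \le \mathrm{e}^z$ this yields $\E[\mathrm{e}^{tX_i}] \le \exp((\mathrm{e}^t-1)\E[X_i])$ and hence $\E[\mathrm{e}^{tX}] \le \exp((\mathrm{e}^t-1)\E[X])$. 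As $t>0$ makes $\mathrm{e}^t-1>0$, I may then replace $\E[X]$ by the upper bound $\mu^+$.

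Next I would set $a = (1+\epsilon)\mu^+$ and minimize the exponent $-t(1+\epsilon) + (\mathrm{e}^t - 1)$ over $t>0$; the optimum is $t = \ln(1+\epsilon)$, producing $\Pr[X \ge (1+\epsilon)\mu^+] \le \exp\!\bigl(\mu^+(\epsilon - (1+\epsilon)\ln(1+\epsilon))\bigr)$. It then remains to establish the elementary inequality $(1+\epsilon)\ln(1+\epsilon) - \epsilon \ge \tfrac13\min\{\epsilon^2,\epsilon\}$ for all $\epsilon \ge 0$: on $[0,1]$ one checks that $h(\epsilon) := (1+\epsilon)\ln(1+\epsilon) - \epsilon - \epsilon^2/3$ satisfies $h(0)=h'(0)=0$ and that $h''(\epsilon)=\tfrac{1}{1+\epsilon}-\tfrac23$ controls the sign of $h'$ so that $h'\ge 0$ on $[0,1]$; for $\epsilon \ge 1$ one checks that the derivative of $(1+\epsilon)\ln(1+\epsilon)-\epsilon-\epsilon/3$ equals $\ln(1+\epsilon)-\tfrac13>0$ and that its value at $\epsilon=1$ is nonnegative. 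Substituting gives part (i), with the cases $\epsilon=0$ and $\mu^+=0$ trivial.

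For part (ii) I would run the same argument with a negative parameter: write $\Pr[X \le (1-\epsilon)\mu^-] = \Pr[\mathrm{e}^{-sX} \ge \mathrm{e}^{-s(1-\epsilon)\mu^-}]$ for $s>0$, apply Markov, and use the same chord bound to get $\E[\mathrm{e}^{-sX}] \le \exp((\mathrm{e}^{-s}-1)\E[X])$. Now $\mathrm{e}^{-s}-1<0$, so this time I replace $\E[X]$ by the lower bound $\mu^-$ (the inequality then flips in the correct direction). Minimizing over $s>0$ yields $s = -\ln(1-\epsilon)$ and the bound $\exp\!\bigl(\mu^-(-\epsilon-(1-\epsilon)\ln(1-\epsilon))\bigr)$, after which the elementary inequality $(1-\epsilon)\ln(1-\epsilon) + \epsilon \ge \epsilon^2/2$ for $0 \le \epsilon \le 1$ — provable since $f(\epsilon) := (1-\epsilon)\ln(1-\epsilon)+\epsilon - \epsilon^2/2$ satisfies $f(0)=f'(0)=0$ and $f''(\epsilon) = \epsilon/(1-\epsilon) \ge 0$ — finishes the proof.

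The routine part is the optimization over the free parameter; the only places needing care are (a) tracking the sign of $\mathrm{e}^t-1$ when substituting $\mu^+$ versus $\mu^-$, which is exactly why part (i) uses the upper and part (ii) the lower estimate of $\E[X]$, and (b) the two elementary logarithmic inequalities, which I expect to be the main (though modest) obstacle and which I would dispatch by the second-derivative sign arguments sketched above together with the boundary values $\epsilon\in\{0,1\}$ and $\mu^{\pm}=0$ handled directly.
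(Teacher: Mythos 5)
Your argument is correct: the exponential-moment method with the chord bound $\mathrm{e}^{tx}\le 1+(\mathrm{e}^t-1)x$ on $[0,1]$, the optimal choices $t=\ln(1+\epsilon)$ and $s=-\ln(1-\epsilon)$, and the two elementary inequalities $(1+\epsilon)\ln(1+\epsilon)-\epsilon\ge\tfrac13\min\{\epsilon^2,\epsilon\}$ and $(1-\epsilon)\ln(1-\epsilon)+\epsilon\ge\tfrac{\epsilon^2}{2}$ all check out, and you correctly identify the sign issue that forces $\mu^+$ in part (i) and $\mu^-$ in part (ii). Note, however, that the paper does not prove this lemma at all --- it is stated in the appendix as a standard tool cited from Doerr's survey --- so there is no in-paper argument to compare against; yours is the standard textbook derivation and is complete as sketched.
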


\section{Convergence behavior of edge-Markovian graphs}\label{sec:conv}
For a technical reason, we assume $p\in(0,1)$ and $q\in(0,1)$ here\footnote{
  The case of $p \in \{0,1\}$ or $q \in \{0,1\}$ is similar, 
   but it needs some treatment in individual cases. 
  }. 
\begin{proposition}
 $\Pr[\{u,v\} \in E_t]$ approaches $p/(p+q)$ assymptotic to $t$. 
\end{proposition}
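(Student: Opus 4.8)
The plan is to exploit the fact that, for any fixed pair $\{u,v\}$, the indicator sequence $X_0(\{u,v\}),X_1(\{u,v\}),\ldots$ is a time-homogeneous Markov chain on the two states $\{0,1\}$ with transition matrix $P$ from \eqref{eq:Edge-Markov-prob}, and then to solve the resulting scalar recurrence for the marginal $\Pr[X_t(\{u,v\})=1]$ explicitly.

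First I would set $a_t \defeq \Pr[\{u,v\}\in E_t] = \Pr[X_t(\{u,v\})=1]$. Conditioning on the state at time $t$ and reading the relevant entries of $P$ (namely $\Pr[X_{t+1}=1\mid X_t=0]=p$ and $\Pr[X_{t+1}=1\mid X_t=1]=1-q$), the law of total probability gives the affine recurrence
\begin{align*}
 a_{t+1} = (1-a_t)\,p + a_t\,(1-q) = p + (1-p-q)\,a_t .
\end{align*}
The map $x\mapsto p+(1-p-q)x$ has the unique fixed point $a^\star \defeq p/(p+q)$, which is well defined since $p>0$ forces $p+q\neq 0$. Subtracting $a^\star = p + (1-p-q)a^\star$ from the recurrence yields $a_{t+1}-a^\star = (1-p-q)(a_t - a^\star)$, hence by induction
\begin{align*}
 a_t - a^\star = (1-p-q)^t\,(a_0 - a^\star)\qquad (t=0,1,2,\ldots).
\end{align*}

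To finish I would note that $|1-p-q|<1$: since $p,q\in(0,1)$ we have $0<p+q<2$, so $1-p-q\in(-1,1)$, and therefore $(1-p-q)^t\to 0$ as $t\to\infty$. This gives $a_t\to a^\star = p/(p+q)$, the claimed convergence. The same closed form also makes the quantitative remarks in \cref{sec:edge-Markovian} immediate: if $p+q=1$ then $a_t=a^\star$ already for all $t\geq 1$, and otherwise $|a_t-a^\star|\leq |1-p-q|^t \leq |a_0-a^\star|$, so $a_t$ is within $\varepsilon$ of $a^\star$ once $t\geq \ln(1/\varepsilon)/\ln(1/|1-p-q|)$.

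I do not expect a genuine obstacle here; the only two points needing a word of care are that the fixed point exists (using $p+q\neq 0$) and that the contraction factor has modulus strictly below one (using $p,q<1$). The degenerate cases $p\in\{0,1\}$ or $q\in\{0,1\}$ are excluded by the standing assumption of this appendix, and would only require routine separate bookkeeping.
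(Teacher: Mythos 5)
Your proof is correct, and it takes a more self-contained route than the paper. The paper's own proof simply observes that $P$ is ergodic (so the unique limit distribution is the stationary one) and then verifies $\bigl(\tfrac{q}{p+q},\tfrac{p}{p+q}\bigr)P=\bigl(\tfrac{q}{p+q},\tfrac{p}{p+q}\bigr)$; its alternative proof decomposes $P=(1-p-q)I+(p+q)\Pi$, where both rows of $\Pi$ equal the stationary distribution, and again only checks stationarity, deferring the explicit convergence rate to a separate proposition. You instead work at the level of the marginal $a_t=\Pr[\{u,v\}\in E_t]$, derive the affine recurrence $a_{t+1}=p+(1-p-q)a_t$, and solve it in closed form, $a_t-a^\star=(1-p-q)^t(a_0-a^\star)$ with $a^\star=p/(p+q)$. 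This buys you convergence \emph{and} the geometric rate in one elementary computation (no appeal to ergodic theory), and it recovers the paper's quantitative follow-up proposition essentially for free, including the exact equality $a_t=a^\star$ for $t\geq 1$ when $p+q=1$. One small slip, which does not affect the proposition itself: your chain ``$|a_t-a^\star|\leq |1-p-q|^t\leq |a_0-a^\star|$'' is misordered; it should read $|a_t-a^\star|=|1-p-q|^t\,|a_0-a^\star|\leq |1-p-q|^t$, since $|a_0-a^\star|\leq 1$, and the $\varepsilon$-bound for $t\geq \ln(1/\varepsilon)/\ln(1/|1-p-q|)$ then follows as you state.
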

\begin{proof}
Clearly $P$ is ergodic, meaning that the unique limit distribution is the stationary distribution. 
Thus,  just check $(\frac{q}{p+q},\frac{p}{p+q})P=(\frac{q}{p+q},\frac{p}{p+q})$ holds. 
\end{proof}
\begin{proof}[another proof]
Note that 
\begin{align*}
 P - (1-p-q)I &= \begin{pmatrix} 1-p -(1-p-q) & p \\ q & 1-q -(1-p-q)\end{pmatrix}\\
   &= \begin{pmatrix} q & p \\ q & p\end{pmatrix}
\end{align*}
holds, where $I$ is the identity matrix. In other words, 
\begin{align}
 P &= (1-p-q)I + (p+q)\begin{pmatrix} \frac{q}{p+q} & \frac{p}{p+q} \\ \frac{q}{p+q} & \frac{p}{p+q}\end{pmatrix}
\label{eq:P1}
\end{align}
 holds. Thus, 
\begin{align*}
 \begin{pmatrix} \frac{q}{p+q} & \frac{p}{p+q} \end{pmatrix} P 
 &= 
  \begin{pmatrix} \frac{q}{p+q} & \frac{p}{p+q} \end{pmatrix}
  \left((1-p-q)I +  (p+q)\begin{pmatrix} \frac{q}{p+q} & \frac{p}{p+q} \\ \frac{q}{p+q} & \frac{p}{p+q}\end{pmatrix}\right) \\
 &= 
 (1-p-q)\begin{pmatrix} \frac{q}{p+q} & \frac{p}{p+q} \end{pmatrix} 
  + (p+q)\begin{pmatrix} \frac{q}{p+q} & \frac{p}{p+q} \end{pmatrix}  \\
 &= 
  \begin{pmatrix} \frac{q}{p+q} & \frac{p}{p+q} \end{pmatrix} 
\end{align*}
  and we obtain the claim. 
\end{proof}

\begin{proposition} 
 Let $\boldsymbol{\pi}=\left(\frac{q}{p+q},\frac{p}{p+q}\right)$, for convenience. 
 If $p+q=1$ then $\frac{1}{2}\|\boldsymbol{x}P^t-\boldsymbol{\pi}\|_1 =0$ 
  for any $t=0,1,2,\ldots$ and  
  for any probability distribution $\boldsymbol{x}=(x_1,x_2)$, 
     i.e., $x_1 \geq 0$, $x_2 \geq 0$ and $x_1+x_2=1$. 
 Otherwise, 
  for any $\epsilon$ ($0<\epsilon<1$), 
   $\frac{1}{2}\|\boldsymbol{x}P^t-\boldsymbol{\pi}\|_1 \leq \epsilon$ 
  holds 
   for $t \geq \log \epsilon/\log|1-p-q|$ and 
   for any probability distribution $\boldsymbol{x}$. 
\end{proposition}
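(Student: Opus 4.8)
The plan is to build on the spectral decomposition $P = (1-p-q)I + (p+q)\Pi$ recorded in \eqref{eq:P1}, where $\Pi$ denotes the rank-one stochastic matrix both of whose rows equal $\boldsymbol{\pi}$, and to abbreviate $\lambda \defeq 1-p-q$. First I would observe that $\Pi$ is idempotent: because the entries of $\boldsymbol{\pi}$ sum to $1$, one checks directly that $\Pi^2 = \Pi$, and of course $I\Pi = \Pi I = \Pi$. With this in hand, I would prove by induction on $t$ the closed form
\begin{align*}
P^t = \lambda^t I + (1-\lambda^t)\Pi \qquad (t=0,1,2,\ldots);
\end{align*}
the base case $t=0$ is immediate, and the inductive step follows by multiplying $\bigl(\lambda^t I + (1-\lambda^t)\Pi\bigr)\bigl(\lambda I + (1-\lambda)\Pi\bigr)$, using $\Pi^2 = \Pi$, and checking that the coefficient of $\Pi$ collapses to $1-\lambda^{t+1}$.

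Next, for any probability vector $\boldsymbol{x}=(x_1,x_2)$ we have $\boldsymbol{x}\Pi = (x_1+x_2)\boldsymbol{\pi} = \boldsymbol{\pi}$, so
\begin{align*}
\boldsymbol{x}P^t - \boldsymbol{\pi} = \lambda^t \boldsymbol{x} + (1-\lambda^t)\boldsymbol{\pi} - \boldsymbol{\pi} = \lambda^t(\boldsymbol{x}-\boldsymbol{\pi}).
\end{align*}
Since $\boldsymbol{x}$ and $\boldsymbol{\pi}$ are two-dimensional probability vectors, $\tfrac12\|\boldsymbol{x}-\boldsymbol{\pi}\|_1 = |x_1-\pi_1| \le 1$, and hence $\tfrac12\|\boldsymbol{x}P^t - \boldsymbol{\pi}\|_1 = |\lambda|^t\cdot \tfrac12\|\boldsymbol{x}-\boldsymbol{\pi}\|_1 \le |1-p-q|^t$ by homogeneity of the $\ell_1$ norm.

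It then remains to split into the two cases. If $p+q=1$ then $\lambda=0$, so $\lambda^t=0$ for every $t\ge 1$ and the displayed bound gives $\tfrac12\|\boldsymbol{x}P^t-\boldsymbol{\pi}\|_1 = 0$ exactly (the $t=0$ instance is the trivial $\boldsymbol{x}$ itself, so the statement is to be read for $t=1,2,\ldots$). If $p+q\neq 1$, then from $p,q\in(0,1)$ we get $0<|1-p-q|<1$, so $\log|1-p-q|<0$; thus $|1-p-q|^t\le\epsilon$ is equivalent, upon dividing the inequality $t\log|1-p-q|\le\log\epsilon$ by the negative number $\log|1-p-q|$, to $t\ge\log\epsilon/\log|1-p-q|$, which is exactly the claimed threshold. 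I do not anticipate a genuine obstacle; the only places needing a little care are the bookkeeping of the $\Pi$-coefficient in the induction and the degenerate boundary $p+q=1$, where the stated bound is in fact an identity and the index $t=0$ should be read as excluded.
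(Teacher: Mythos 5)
Your proposal is correct and follows essentially the same route as the paper: both rest on the decomposition $P=(1-p-q)I+(p+q)\Pi$ of \eqref{eq:P1}, leading to $\boldsymbol{x}P^t-\boldsymbol{\pi}=(1-p-q)^t(\boldsymbol{x}-\boldsymbol{\pi})$ and the bound $|1-p-q|^t\leq\epsilon$ at the stated threshold. You merely make explicit the induction giving $P^t=\lambda^tI+(1-\lambda^t)\Pi$ (which the paper leaves implicit) and correctly flag the degenerate $t=0$ reading in the $p+q=1$ case.
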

\begin{proof}
When $p+q=1$, the claim is clear by \cref{eq:P1}. 
Suppose $p+q \neq 1$. 
Let $t \geq \log \epsilon/\log|1-p-q|$, then 
 \cref{eq:P1} implies 
\begin{align*}
 \frac{1}{2} \left\|\boldsymbol{x}P^t-\boldsymbol{\pi}\right\|_1 
 &= |1-p-q|^t \frac{ \left|x_1-\frac{q}{p+q}\right| + \left|x_2-\frac{p}{p+q}\right| }{2} \\
 &\leq |1-p-q|^t \\
 &\leq |1-p-q|^{\frac{\log \epsilon}{\log|1-p-q|}} \\
 &= \epsilon
\end{align*}  
 holds for any distribution $\boldsymbol{x}$.
\end{proof}

\end{document}